\newcommand{\R}{\mathbbm{R}}
\newcommand{\N}{\mathbbm{N}}
\renewcommand{\L}{\mathbbm{L}}
\newcommand{\F}{\mathcal{F}}
\newcommand{\der}[2]{ \frac{\text{d} #1}{\text{d} #2} }  
\newcommand{\Exp}{\mathbbm{E}} 
\newcommand{\M}{\mathcal{M}}
\newcommand{\C}{\mathcal{C}}
\theoremstyle{plain}
\newtheorem{theorem}{Theorem}
\newtheorem{proposition}[theorem]{Proposition}
\newtheorem{lemma}[theorem]{Lemma}
\theoremstyle{definition}
\newcommand{\john}[1]{#1}
\newcommand{\johnNew}[1]{#1}
\newcommand{\diego}[1]{#1}
\newcommand{\diegoNew}[1]{#1}
\newcommand{\olivier}[1]{#1}
\begin{document}

\title{Mean Field description of and propagation of chaos in recurrent multipopulation networks of Hodgkin-Huxley and Fitzhugh-Nagumo neurons}
\author[1,*]{Javier Baladron} \author[1,*]{Diego Fasoli} \author[1,*]{Olivier Faugeras}
\author[2,*,$\dagger$]{Jonathan Touboul}
\affil[1]{{\small NeuroMathComp Laboratory, INRIA/ENS Paris, France}}
\affil[2]{{\small BANG Laboratory, INRIA Paris, and The Mathematical Neuroscience Lab}}

\maketitle

\thispagestyle{empty}

\let\oldthefootnote\thefootnote
\renewcommand{\thefootnote}{\fnsymbol{footnote}}
\footnotetext[1]{{\tt \small email: firstname.name@inria.fr}}
\footnotetext[2]{Coll\`ege de France, Center of Interdisciplinary Research in Biology,, CNRS UMR 7241m INSERM U1050, Universit\'e Pierre et Marie Curie ED 158. MEMOLIFE Laboratory of excellence and Paris Science Lettre. 11, place Marcelin Berthelot, 75005 Paris, France.}
\let\thefootnote\oldthefootnote

\vspace{-1.5cm}

\section*{Abstract}
%

We  derive the mean-field equations arising as the limit of a network of interacting spiking neurons, as the number of neurons goes to infinity. The neurons belong to a fixed number of populations and are represented either by the Hodgkin-Huxley model or by one of its simplified version, the Fitzhugh-Nagumo model. The synapses between neurons  are either electrical or chemical. The network is assumed to be fully connected. The maximum conductances vary randomly. Under the condition that all  neurons initial conditions are drawn independently from the same law that depends only on the population they belong to, we prove that a propagation of chaos phenomenon takes places, namely that in the mean-field limit, any finite number of neurons become independent and, within each population, have the same probability distribution. This probability distribution is solution of a set of  implicit equations, either nonlinear stochastic differential equations resembling the McKean-Vlasov equations, or non-local partial differential equations resembling the McKean-Vlasov-Fokker-Planck equations. We prove the well-posedness of these equations, i.e. the existence and uniqueness of a solution. We also show the results of some preliminary numerical experiments that indicate that the mean-field equations are a good representation of the mean activity of a finite size network, even for modest sizes. These experiment also indicate that the  McKean-Vlasov-Fokker-Planck equations may be a good way to understand the mean-field dynamics through, e.g., a bifurcation analysis.

\noindent
{\bf AMS 2010 subject classification:} 60F99, 60B10, 92B20, 82C32, 82C80, 35Q80\\

\noindent
{\bf Keywords and phrases:} Mean-Field Limits, Propagation of chaos, Stochastic differential equations, McKean-Vlasov equations, Fokker-Planck equations, Neural Networks, Neural assemblies, Hodgkin-Huxley neurons, Fitzhugh-Nagumo neurons.

\section{Introduction}\label{section:intro}
Cortical activity displays highly complex behaviors which are often characterized by the presence of noise.  Reliable responses to specific stimuli often arise at the level of population assemblies (cortical areas or cortical columns) featuring a very large number of neuronal cells, each of these presenting a highly nonlinear behavior, that are interconnected in a very intricate fashion. Understanding the global behavior of large-scale neural assemblies has been a great endeavor in the past decades. One of the main interests of large scale modeling is characterizing brain function, which most imaging techniques are recording. Moreover, anatomical data recorded in the cortex reveal the existence of structures, such as the cortical columns, with a diameter of about $50 \mu m$ to $1 mm$, containing of the order of one hundred to one hundred thousand neurons belonging to a few different types. These columns have specific functions. For example, in the human visual area V1, they respond to preferential orientations of bar-shaped visual stimuli. In this case, information processing does not occur at the  scale of individual neurons but rather corresponds to an  activity integrating the individual dynamics of many interacting neurons and resulting in a mesoscopic signal arising through averaging effects, and that effectively depends on a few effective control parameters. This vision, inherited from statistical physics requires that the space scale be large enough to include sufficiently many neurons and small enough so that the region considered is homogeneous. This is in effect the case of cortical columns.

\medskip

In the field of mathematics, studying the limits of systems of particles systems in interaction has been a longstanding problem and presents many technical difficulties. One of the questions addressed in mathematics was to characterize the limit of the probability distribution of an infinite set of interacting diffusion processes, and the fluctuations around the limit for a finite number of processes. The first breakthroughs to find answers to this question are due to Henry McKean (see e.g.~\cite{mckean:66,mckean:67}). It was then investigated in various contexts by a large number of authors such as Braun-Hepp~\cite{braun-hepp:77}, Dawson~\cite{dawson:83}, Dobrushin~\cite{dobrushin:70}, and most of the theory was achieved by Tanaka and collaborators~\cite{tanaka:83,tanaka:78,tanaka:84,tanaka-hitsuda:81}, and of course Alain-Sol Sznitman~\cite{sznitman:89,sznitman:84,sznitman:84a,sznitman:86}. When considering that all particles (in our case neurons) have the same, independent initial condition, they mathematically proved using stochastic theory (Wasserstein distance, large deviation techniques) that in the limit where the number of particles tend to infinity, any finite number of particles behave independently of the other ones, and they all present the same probability distribution, which satisfies a nonlinear Markov equation. Finite-size fluctuations around the limit are derived in a general case in \cite{sznitman:84a}. Most of these models use standard hypothesis of global Lipschitz continuity and linear growth condition of the drift and diffusion coefficients of the diffusions, as well as Lipschitz continuity of the interaction function. Extensions to discontinuous c\`adl\`ag processes including singular interactions (through a local time process) were developed in \cite{sznitman:86}. Problems involving singular interaction variables (e.g. nonsmooth functions) are also widely studied in the field, but are not relevant in our case.

\medskip

In the present article, we apply this mathematical approach to the problem of interacting neurons arising in neuroscience. To this end we extend the theory to encompass a wider class of models. This implies the use of locally (instead of globally) Lipschitz coefficients and of a Lyapunov-like growth condition replacing the customary linear growth assumption for some of the functions appearing in the equations. The contributions of this article are threefold:
\begin{enumerate}
\item We derive in a rigorous manner the mean-field equations resulting from the interaction of infinitely many neurons in the case of widely accepted  models of spiking neurons and synapses.
\item We prove a propagation of chaos property which shows that in the mean-field limit neurons become independent, in agreement with some recent experimental work \cite{ecker-berens-etal:10} and with the idea that the brain processes information in a somewhat optimal way.
\item We show numerically that the mean-field limit is a good approximation of the mean activity of the network even for fairly small sizes of neuronal populations.
\item We suggest numerically that the changes in the dynamics of the mean-field limit when varying parameters can be understood by studying the mean-field Fokker-Planck equation.
\end{enumerate}
We start by reviewing such models in section \ref{sec:Motiv} to  motivate the present study. It is in section \ref{sec:Instant} that we provide  the limit equations describing the behaviours of an infinite number of interacting neurons, and state and prove the existence and uniqueness of solutions in the case of conductance-based models. The detailed proof of the second main theorem, that of the convergence of the network equations to the mean-field limit, is given in appendix \ref{section:proofs}. In section \ref{section:numerics} we begin to address the difficult problem of the numerical simulation of the mean-field equations and show some results indicating that they may be an efficient way of representing the mean activity of a finite-size network as well as to study the changes in the dynamics when varying biological parameters. The final section \ref{section:Discussion} focuses on the conclusions of our mathematical and numerical results and raises some important questions for future work.


\section{Spiking Conductance-based models}\label{sec:Motiv}
\olivier{This section sets the stage for our results. We review in section  \ref{subsection:HH} the Hodgkin-Huxley model equations in the case where both the membrane potential and the ion channels equations include noise. We then proceed in section \ref{subsection:FN} with the Fitzhugh-Nagumo equations in the case where the membrane potential equation includes noise. We next discuss in section \ref{subsection:synapses} the connectivity models of networks of such neurons, starting with the synapses, electrical and chemical, and finishing with several stochastic models of the synaptic weights. In section \ref{subsection:alltogether} we write the network equations in the various cases considered in the previous section and express them in a general \john{abstract} mathematical form that is the one used for stating and proving the results about the mean-field limits in section \ref{sec:Instant}. Before we jump into this we conclude in section \ref{subsection:mean-field} with a brief overview of the mean-field methods popular in computational neuroscience.}

From the mathematical point of view, each neuron is a complex system, whose dynamics is often described by a set of stochastic nonlinear differential equations. Such models aim at reproducing the biophysics of ion channels governing the membrane potential and therefore the spike emission. This is the case of the classical model of Hodgkin and Huxley \cite{hodgkin-huxley:52} and of its reductions~\cite{fitzhugh:66,fitzhugh:69,izhikevich:07}. Simpler models use discontinuous processes mimicking the spike emission by modeling the membrane voltage and considering that spikes are emitted when it reaches a given threshold. These are called integrate-and-fire models~\cite{lapicque:07,tuckwell:88} \john{ and will not be addressed here}. The models of large networks \john{we deal with here} therefore consist of systems of coupled nonlinear diffusion processes. 

\subsection{Hodgkin-Huxley model}\label{subsection:HH}
One of the most important models in computational neuroscience is the Hodgkin-Huxley model. Using pioneering experimental techniques of that time, Hodgkin and Huxley \cite{hodgkin-huxley:52} determined that the activity of the giant squid axon is controlled by three major currents: voltage-gated persistent $K^+$ current with four activation gates, voltage-gated transient $Na^+$ current with three activation gates and one inactivation gate, and Ohmic leak current, $I_L$, which is carried mostly by chloride ions $Cl^-$. \olivier{In this paper we only use the space-clamped Hodgkin-Huxley model which we slightly generalize to a stochastic setting in order to better take into account the variability of the parameters.}

The basic electrical relation between the membrane potential and the currents is simply:
\begin{equation*}
 C\der{V}{t} = I^{\rm ext}(t) - I_K - I_{Na} - I_L,
\end{equation*}
\olivier{$I^{\rm ext}(t)$ is an external current. The detailed expressions for $I_K$, $I_{Na}$ and $I_L$ can be found in several textbooks, e.g., \cite{izhikevich:07,ermentrout-terman:10b}:
\begin{align*}
I_K &= \bar{g}_K n^4 (V-E_K)\\
I_{Na} &= \bar{g}_{Na} m^3 h (V-E_{Na})\\
I_L &= g_L(V-E_L)
\end{align*}
where $\bar{g}_K$ (respectively $\bar{g}_{Na}$) is the maximum conductance of the potassium (respectively the sodium) channel. $g_L$ is the conductance of the Ohmic channel. $n$ (respectively $m$) is the activation variable for $K^+$ (respectively for $Na$). There are four (respectively 3) activation gates for the $K^+$ (respectively the $Na$) current which accounts for the power 4 (respectively 3) in the expression of $I_K$ (respectively $I_{Na}$). $h$ is the inactivation variable for $Na$. These activation/deactivation variables, denoted by $x \in \{n,\,m,\,h\}$ in what follows represent a proportion of open channels (they vary between 0 and 1).  
This proportion can be computed through a Markov chain modeling assuming the channels to open with rate $\rho_x(V)$ (the dependence in $V$ accounts for the voltage-gating of the channel) and to close with rate $\zeta_x(V)$. These processes can be shown to converge, under standard assumptions, towards the following ordinary differential equations:}
\[
\dot{x}=\rho_x(V)(1-x)-\zeta_x(V)\,x,\ x \in \{n,\,m,\,h\}
\]
\olivier{The functions $\rho_x(V)$ and $\zeta_x(V)$ are smooth and bounded functions whose exact values can be found in several textbooks such as the ones cited above.
A more precise model taking into account the finite number of channels results in the stochastic differential equation (see e.g.~\cite{wainrib:2010} and references therein):
\[
dx_t=(\rho_x(V)(1-x)-\zeta_x(V)\,x)\,dt + \sqrt{\rho_x(V)(1-x)+\zeta_x(V)\,x}\;\chi(x)\,dW^x_t
\]
where $W^x_t$, $x \in \{n,\,m,\,h\}$ are independent standard Brownian motions. $\chi(x)$ is a function that vanishes outside $(0,1)$. This guarantees that the solution remains a proportion i.e. lies between 0 and 1 for all times. We define
\begin{equation}\label{eq:diffusion}
\sigma_x(V,x)=\sqrt{\rho_x(V)(1-x)+\zeta_x(V)\,x}\;\chi(x).
\end{equation}
}\\

\olivier{In order to complete our stochastic Hodgkin-Huxley neuron model, we assume that the external current $I^{\rm ext}(t)$ is the sum of a deterministic part, noted $I(t)$, and a stochastic part, a white noise with variance $\sigma_{ext}$ built from a standard Brownian motion $W_t$ independent of $W^x_t$, $x \in \{n,\,m,\,h\}$.
Considering the current produced by the income of ion through these channels, we end up with the following stochastic differential equation: 
\begin{equation}\label{eq:HodgkinHuxley}
   \begin{cases}
   C d{V}_t &= \Big(I(t) - \bar{g}_K n^4 (V-E_K) - \bar{g}_{Na} m^3 h (V-E_{Na}) - \bar{g}_L (V-E_L)\Big)\,dt\\
            & \qquad \qquad +\sigma_{ext}\,dW_t \\
   d{x}_t   & = \Big(\rho_x(V) (1-x) -\zeta_x(V) x\Big)\,dt +\sigma_x(V,x)\,dW^x_t \quad x \in \{n,\,m,\,h\}
   \end{cases}
\end{equation}
}
%
%
%
The advantages of this model are numerous, and one of the most prominent aspects in its favor is its correspondence with the  most widely accepted formalism to describe the dynamics of the nerve cell membrane. A very extensive literature can also be found about the mathematical properties of this system, and it is now quite well understood. 


The stochastic version of the \olivier{space-clamped} Hodgkin-Huxley model we study is, from the mathematical point of view, a four-dimensional nonlinear stochastic differential equation. It is easy to check that the drift and diffusion functions are all Lipchitz-continuous and \olivier{hence} satisfy the linear growth condition, using the fact that all variables $n$, $m$ and $h$ are in the interval $[0,1]$ (we recall that these variables are proportions of open channels). Standard theorems of stochastic calculus ensure under these two conditions existence and uniqueness of strong solutions \john{ for any sufficiently regular (square integrable)} initial condition (see e.g.~\cite{karatzas-shreve:91,mao:08}).

\olivier{To illustrate the model we show in figure~\ref{fig:HHneuron} the time evolution of the three ion channel variables $n$, $m$ and $h$ as well as that of the membrane potential $V$ for a constant input $I=20.0$.}
The system of ODEs has been solved using a Runge-Kutta 4 scheme with integration time step $\Delta t=0.01$.

In figure \ref{fig:HHneuronnoisy} we show the same time evolution when noise is added to the channel variables and the membrane potential.\\
For the membrane potential, we have used $\sigma_{ext}=3.0$ (see equation~\eqref{eq:HodgkinHuxley}), while for the noise in the ion channels we have used the following $\chi$ function (see equation \eqref{eq:diffusion}):

\begin{equation}\label{eq:ChiFunction}
\chi(x) = \left\{
  \begin{array}{l l}
    \Gamma e^{-\frac{\textstyle \Lambda}{\textstyle 1-(2x-1)^{2}}} & \quad \text{if $0<x<1$}\\
    0 & \quad \text{if $x \leq 0 \vee x \geq 1$}\\
  \end{array} \right.
\end{equation}

with $\Gamma=0.1$ and $\Lambda=0.5$ for all the ion channels.
The system of SDEs has been integrated using the Euler-Maruyama scheme with $\Delta t=0.01$.
\begin{figure}[htbp]
\centerline{
\includegraphics[width=0.5\textwidth]{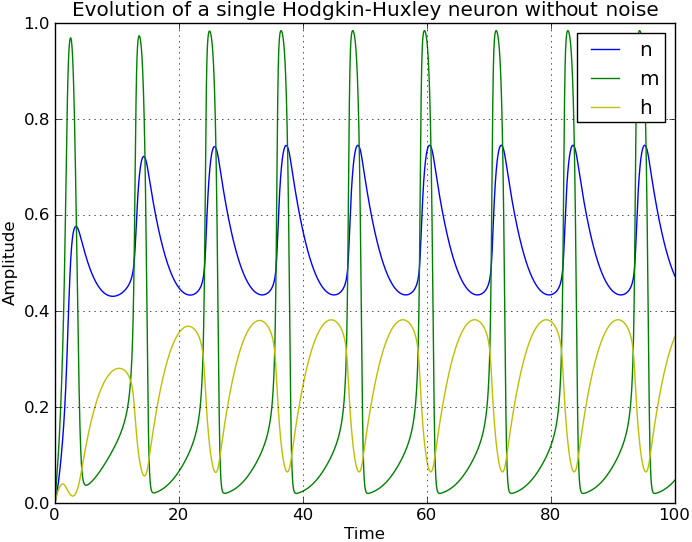}\includegraphics[width=0.5\textwidth]{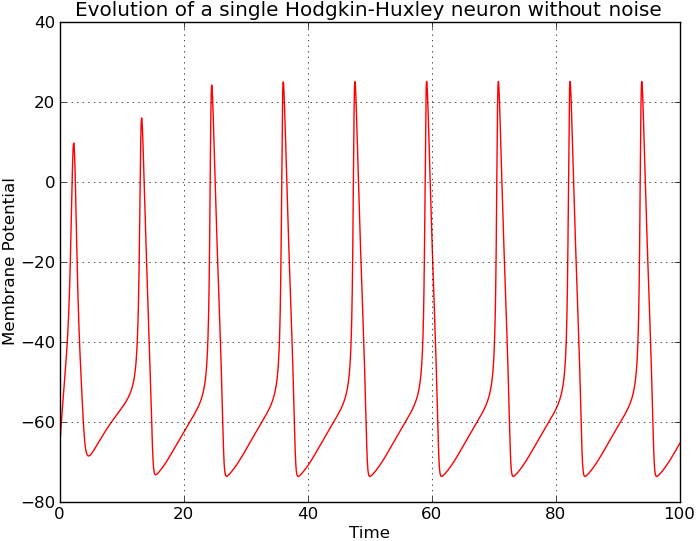}
}
\caption{\john{Solution} of the noiseless Hodgkin-Huxley model. Left: Time evolution of the three ion channel variables $n$, $m$ and $h$. Right: Corresponding time evolution of the membrane potential. \john{Parameters given in the text}.}
\label{fig:HHneuron}
\end{figure}

\begin{figure}[htbp]
\centerline{
\includegraphics[width=0.5\textwidth]{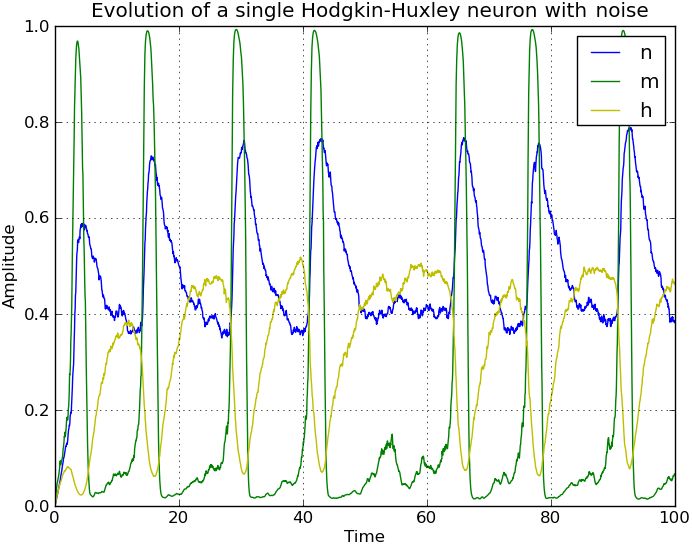}\includegraphics[width=0.5\textwidth]{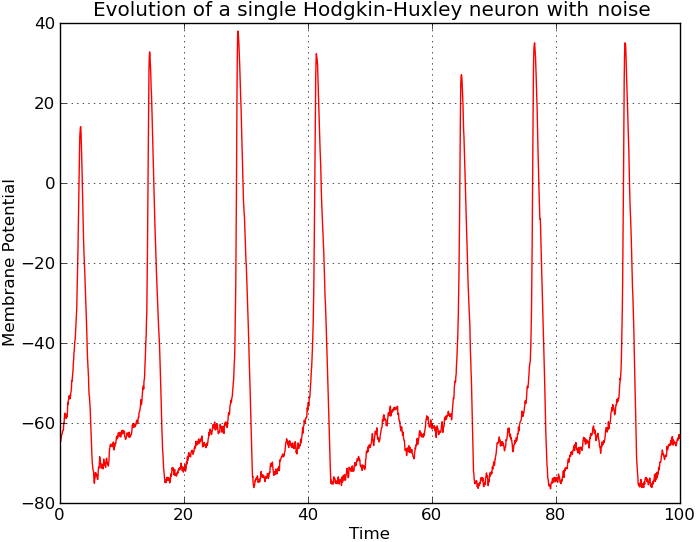}
}
\caption{\john{Noisy Hodgkin-Huxley model. Left: Time evolution of the three ion channel variables $n$, $m$ and $h$. Right: Corresponding time evolution of the membrane potential. Parameters given in the text.}}
\label{fig:HHneuronnoisy}
\end{figure}

\medskip 
 \olivier{Because the Hodgkin-Huxley model is rather complicated and high-dimensional, many reductions have been proposed, in particular to two dimensions instead of four.}
These reduced models include the famous Fitzhugh-Nagumo and Morris-Lecar models.  These two models are two-dimensional approximations of the original Hodgkin-Huxley model based on quantitative observations of the time scale of the dynamics of each variable and identification of variables. Most reduced models still comply with the Lipschitz and linear growth conditions ensuring existence and uniqueness \olivier{of a solution}, except for the Fitzhugh-Nagumo model \olivier{which} we now introduce. 

\subsection{The Fitzhugh-Nagumo model}\label{subsection:FN}
In order to reduce the dimension of the Hodgkin-Huxley model, Richard Fitzhugh \cite{fitzhugh:55,fitzhugh:66,fitzhugh:69} introduced a simplified two-dimensional model. \olivier{The motivation was} to isolate conceptually essential mathematical features yielding excitation and transmission properties from the analysis of the biophysics of sodium and potassium flows. Nagumo and collaborators~\cite{nagumo-arimoto-etal:62} followed up with an electrical system reproducing the dynamics of this model and studied its properties. The model consists of two equations, one governing a voltage-like variable $V$ having a cubic nonlinearity and a slower recovery variable $w$. It can be written as:
\begin{equation}\label{eq:FNOriginal}
	\begin{cases}
		\dot{V}=f(V)-w+I^{\rm ext}\\
		\dot{w}=c(V+a-bw)
	\end{cases}
\end{equation}
\olivier{where $f(V)$ is a cubic polynomial in $V$ which we choose, w.l.o.g., to be  $f(V)=V-V^3/3$.} The parameter $I^{\rm ext}$ models the input current the neuron receives, the parameters $a$, $b>0$ and $c>0$ describe the kinetics of the recovery variable $w$. \olivier{As in the case of the Hodgkin-Huxley model, the current $I^{\rm ext}$ is assumed to be the sum of a deterministic part, noted $I$,}  and a stochastic white noise accounting for the randomness of the environment. The stochastic Fitzhugh-Nagumo equation is deduced from \eqref{eq:FNOriginal} and reads:
\begin{equation}\label{eq:FNStoch}
	\begin{cases}
		dV_t=(V_t-\frac{V_t^3}{3} - w_t+I)\,dt + \sigma_{\rm ext}\,dW_t\\
		dw_t=c\,(V_t+a-bw_t)\,dt
	\end{cases}
\end{equation}
\olivier{Because the function $f(V)$ is not {\emph globally} Lipschitz continuous (only locally), we prove the following proposition stating that the stochastic differential equation~\eqref{eq:FNStoch} is well-posed.}

\begin{proposition}\label{prop:FN}
 Let $T > 0$ be a fixed time. If $\vert I(t)\vert \leq I_m$ on $[0,T]$, equation \eqref{eq:FNStoch} together with an initial condition $(V_0,w_0)$ with a given distribution has a unique strong solution which belongs to ${\rm L}^{\olivier{2}}([0,T];\R^2)$.
\end{proposition}
\begin{proof}
The function $h=(h_1,h_2): (V,w) \to (f(V)-w+I_d,\,c(V+a-bw))$ is locally Lipschitz in $\R^2$. According to \cite[Chapter \olivier{2}, Theorem 3.5]{mao:08} it is sufficient to show that, if we note $X$ the vector $(V,w)$ of $\R^2$, there exists a positive constant $K$ such that for all $(X,t)$ in $\R^2 \times [0,T]$
\[X^T h(X,t)+\olivier{\frac{1}{2}}\sigma_{\rm ext}^2 \leq K(1+|X|^2)\]
\olivier{Let $L=(c-1)/2$, we have:
\begin{multline*}
 X^T h(X,t) = V^2-\frac{V^4}{3}+(c-1)V\,w-b\,c\,w^2+V\,I+a\,c\,w  \leq V^2+2L Vw+2V\frac{I}{2}+2 \frac{ac}{2}w\\
\leq \left(V+Lw\right)^2+\left(V+\frac{I}{2}\right)^2+(w+ac)^2  \leq 2\left(2V^2+(L^2+1)w^2+\frac{I^2+a^2c^2}{4}\right) \\
 \leq 2\max\left\{2,L^2+1,\frac{I^2+a^2c^2}{4}\right\}(1+V^2+w^2)
\end{multline*}
}
\olivier{The conclusion follows since $I$ is bounded.}
\end{proof}

\noindent
\olivier{We show in figure~\ref{fig:FN} the time evolution of the adaptation variable and the membrane potential in the case where the input $I$ is constant and equal to $0.7$. The lefthand side of the figure shows the case with no noise while the righthand side shows the case where noise of intensity $\sigma_{ext} = 0.25$ (see equation \eqref{eq:FNStoch}) has been added.}\\
The deterministic model has been solved with a Runge-Kutta 4 method, while the stochastic model with the Euler-Maruyama scheme.
In both cases, we have used an integration time step $\Delta t=0.01$.

\begin{figure}
\centerline{
\includegraphics[width=0.5\textwidth]{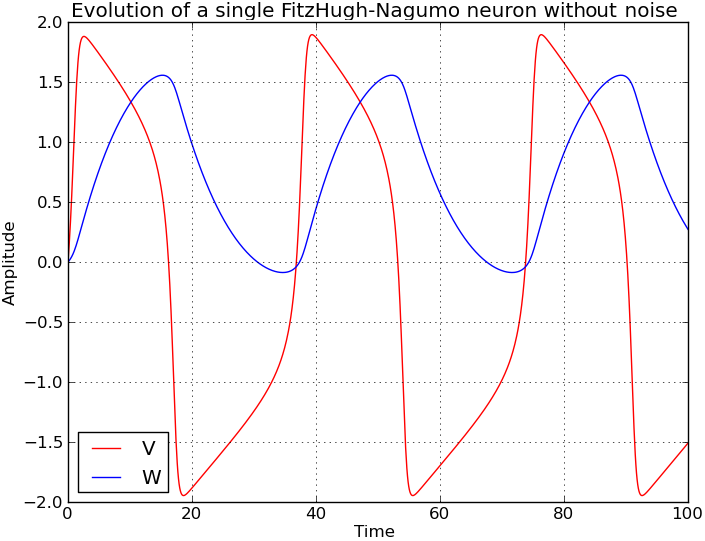}
\includegraphics[width=0.5\textwidth]{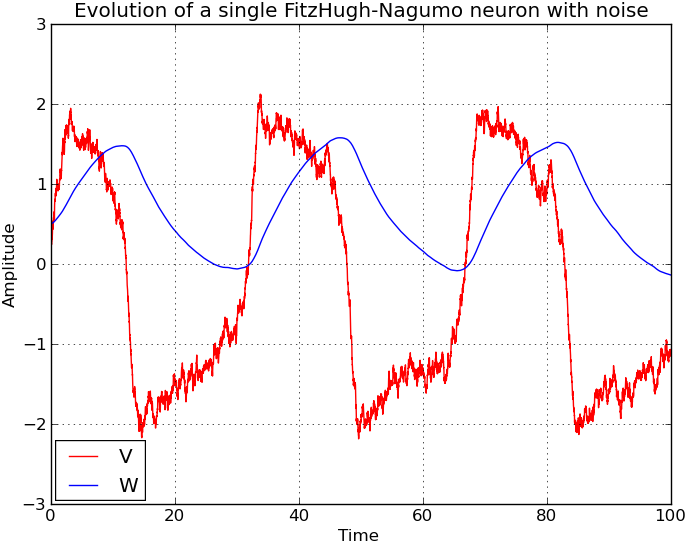}
}
\caption{Time evolution of the membrane potential and the adaptation variable in the Fitzhugh-Nagumo model. Left: without noise. Right: with noise. See text.}
\label{fig:FN}
\end{figure}

\olivier{
\subsection{Partial conclusion}
\john{We have reviewed two main models of space-clamped single neurons, the Hodgkin-Huxley and the Fitzhugh-Nagumo models. These models are stochastic, including various sources of noise, external, and internal. The noise sources are supposed to be independent Brownian processes. We have shown that the resulting stochastic differential equations~\eqref{eq:HodgkinHuxley} and~\eqref{eq:FNStoch} were well-posed. As pointed out above, this analysis extends to a large number of reduced versions of the Hodgkin-Huxley such as those that can be found in the book \cite{izhikevich:07}.}
}

\subsection{Models of synapses and maximum conductances}\label{subsection:synapses}
\olivier{We now study the situation in which several of these neurons are connected to one another forming a network, which we will assume to be fully connected. Let $N$ be the total number of neurons. These neurons belong to  $P$ populations, e.g. pyramidal cells, interneurons\ldots. If the index of a neuron is $i$, $1 \leq i \leq N$, we note $p(i)=\alpha$, $1 \leq \alpha \leq P$ the population it belongs to. We note $N_{p(i)}$ the number of neurons in population $p(i)$. Since we want to be as close to biology as possible while keeping the possibility of a mathematical analysis of the resulting model, we consider two types of simplified, but realistic, synapses, chemical and electrical or gap junctions. The following material concerning synapses is standard and can be found in textbooks \cite{ermentrout-terman:10b}. The new, and we think important, twist is add noise to our models. To unify notations, in what follows $i$ is the index of a postsynaptic neuron belonging to population $\alpha=p(i)$, and $j$ is the index of a presynaptic neuron to neuron $i$ belonging to population $\gamma=p(j)$.}
 
\subsubsection{Chemical synapses}
The principle of functioning of chemical synapses is based on the release of neurotransmitter in the presynaptic neuron synaptic button, which binds to specific receptors on the postsynaptic cell. This process is, similarly to the currents described in the Hodgkin and Huxley model, governed by the value of the cell membrane potential. We use the model described in~\cite{destexhe-mainen-etal:94b,ermentrout-terman:10b}, which \olivier{features} a quite realistic biophysical representation of the processes \olivier{at work} in the spike transmission, \olivier{and} is consistent with the \olivier{previous} formalism used to describe the conductances of other ion channels. The model emulates the fact that following the arrival of an action potential at the presynaptic terminal, neurotransmitter is released in the synaptic cleft and binds to postsynaptic receptor with a first order kinetic scheme. Let $j$ be a presynaptic neuron to the postynaptic neuron $i$. The synaptic current induced by the synapse from $j$ to $i$ can be modeled as the product of a conductance $g_{ij}$ with a voltage difference:
\begin{equation}\label{eq:SynCur}
I^{\rm syn}_{ij}=g_{ij}(t)(V^i-V_{\rm rev}^i)
\end{equation}
\olivier{The synaptic reversal potential $V_{\rm rev}^i$ is approximately constant within each population: $V_{\rm rev}^i:=V_{\rm rev}^{\alpha}$.}
The conductance $g_{ij}$ is the product of the maximum conductance $J_{ij}(t)$ with a function $y^j(t)$ that denotes the fraction of open channels and depends only upon the presynaptic neuron $j$:
\olivier{
\begin{equation}\label{eq:SynCond}
g_{ij}(t)=J_{ij}(t) y^j(t)
\end{equation}
}
 The function $y^j(t)$  is often modelled \cite{ermentrout-terman:10b} as satisfying the following ordinary differential equation
\[
\dot{y}^j(t)=\olivier{a_r^{j}} S_{\olivier{j}}(V^j) (1-y^j(t))-\olivier{a_d^j} y^j(t),
\]
\olivier{The positive constants $a_r^j$ and $a_d^j$ characterize the rise and decay rates, respectively, of the synaptic conductance. Their values depend only on the population of the presynaptic neuron $j$, i.e. $a_r^{j}:=a_r^\gamma$ and $a_d^{j}:=a_d^\gamma$, but may vary significantly from one population to the next. For example, ${\rm GABA}_{\rm B}$ synapses are slow to activate and slow to turn off while the reverse is true for ${\rm GABA}_{\rm A}$ and AMPA synapses \cite{ermentrout-terman:10b}}. $S_{\olivier{j}}(V^j)$ denotes the concentration of transmitter released into the synaptic cleft by a presynaptic spike. \olivier{We assume that the function $S_j$ is sigmoidal and that its exact form depends only upon the population of the neuron $j$}.  \olivier{Its expression is given by (see e.g. \cite{ermentrout-terman:10b}):
\begin{equation}\label{eq:S}
S_{\gamma}(V^j)=\frac{T_{\rm max}^\gamma}{1+e^{-\lambda_\gamma(V^j-V_T^\gamma)}}
\end{equation}
}
Destexhe et al. \cite{destexhe-mainen-etal:94b} \olivier{give some typical values of the parameters,} $T_{\rm max} =1 mM$, $V_T = 2 mV$, and $1/\lambda = 5 mV$.

Because of the dynamics of ion channels \olivier{and of their finite number}, similarly to the channel noise models derived through the Langevin approximation in the Hodgkin-Huxley model \eqref{eq:HodgkinHuxley}, we assume that the proportion of active channels \olivier{is actually governed by} a stochastic differential equation with diffusion coefficient \olivier{$\sigma_\gamma(V,y)$ depending only on the population $\gamma$ of $j$} of the form~\eqref{eq:diffusion}:
\[
dy^j_t=\left(\olivier{a_r^\gamma} S_{\gamma}(V^j) (1-y^j(t))-\olivier{a_d^\gamma} y^j(t)\right)\,dt + \sigma_{\olivier{\gamma}}^{y}(V^{\olivier{j}},y^{\olivier{j}})\,dW^{j,y}_t
\]
\olivier{In detail we have
\begin{equation}\label{eq:siggam}
\sigma_\gamma^{y}(V^j,y^j)=\sqrt{a_r^\gamma S_\gamma(V^j)(1-y^j)+a_d^\gamma y^j}\chi(y^j)
\end{equation}
}
\olivier{Remember that the form of the diffusion term guarantees that the solutions to this equation with appropriate initial conditions stay between 0 and 1. The Brownian motions $W^{j,y}$ are assumed to be independent from one neuron to the next.}

\subsubsection{Electrical synapses}
The electrical synapse transmission is rapid and stereotyped, and is mainly used to send simple depolarizing signals for systems requiring the fastest possible response. At the location of an electrical synapse, the separation between two neurons is very small ($\approx 3.5nm$). This narrow gap is bridged by the \emph{gap junction channels}, specialized protein structures that conduct the flow of ionic current from the presynaptic to the postsynaptic cell (see e.g.~\cite{kandel-schwartz-etal:00}). 

Electrical synapses thus work by allowing ionic current to flow passively through the gap junction pores from one neuron to another. The usual source of this current is the potential difference generated locally by the action potential. Without the need for receptors to recognize chemical messengers, signaling at electrical synapses is more rapid than that which occurs across chemical synapses, the predominant kind of junctions between neurons. The relative speed of electrical synapses also allows for many neurons to fire synchronously.

\olivier{
We model the current for this type of synapse as
\begin{equation}\label{eq:chemical}
I_{ij}^{\rm che}=J_{ij}(t)(V_i-V_j),
\end{equation}
where $J_{ij}(t)$ is the maximum conductance.
}

\subsubsection{The maximum conductances}
\olivier{
As shown in equations~\eqref{eq:SynCur}, \eqref{eq:SynCond} and \eqref{eq:chemical}, we model the current going through the synapse connecting neuron $j$ to neuron $i$ as being proportional to the maximum conductance $J_{ij}$. Because the synaptic transmission through a synapse is affected by the nature of the environment, the maximum conductances are affected by dynamical random variations (we do not take into account such phenomena as plasticity). What kind of models can we consider for these random variations?}

\olivier{The simplest idea is to assume that the maximum conductances are independent diffusion processes with mean $\frac{\bar{J}_{\olivier{\alpha\gamma}}}{N_{\gamma}}$ and standard deviation $\frac{\sigma_{\alpha\gamma}^{J}}{N_{\gamma}}$, i.e. that depend only  on the populations. The quantities $\bar{J}_{\alpha\gamma}$, being conductances, are positive. We write the following equation
\begin{equation}\label{eq:weightssimple}
J_{i \gamma}(t)=\frac{\bar{J}_{\alpha\gamma}}{N_{\gamma}}+\frac{\sigma_{\alpha\gamma}^{J}}{N_{\gamma}} \xi^{i,\,\gamma}(t),
\end{equation}
where the $\xi^{i,\,\gamma}(t)$, $i=1,\cdots,N$, $\gamma=1,\cdots,P$, are $NP$ independent zero mean unit variance white noise processes derived from $NP$ independent standard Brownian motions $B^{i,\,\gamma}(t)$, i.e. $\xi^{i,\,\gamma}(t)=\frac{dB^{i,\,\gamma}(t)}{dt}$, which we also assume to be independent of all the previously defined Brownian motions. This dynamics main advantage is its simplicity. Its main disadvantage is that if we increase the noise level $\sigma_{\alpha\gamma}$, the probability that $J_{ij}(t)$ becomes negative increases also: this would result in a negative conductance!}

\olivier{One way to alleviate this problem is  to modify the dynamics~\eqref{eq:weightssimple} to a slightly more complicated one \john{whose solutions do not change sign,} such as \john{for instance} the Cox-Ingersoll-Ross model \cite{cox-ingersoll-jr-etal:85} \john{given by:}
\begin{equation}\label{eq:weightscox}
dJ_{ij}(t)=\theta_{\alpha \gamma}(\frac{\bar{J}_{\alpha \gamma}}{N_{\gamma}}-J_{ij}(t))dt+\frac{\sigma_{\alpha \gamma}^{J}}{N_{\gamma}}\sqrt{J_{ij}(t)}dB^{i,\, \gamma}(t).
\end{equation}
Note that the righthand side only depends upon the population $\gamma=p(j)$. Let $J_{ij}(0)$ be the initial condition, it is known, \cite{cox-ingersoll-jr-etal:85}, that
\begin{align*}
\Exp\left[J_{ij}(t) \right]&=J_{ij}(0)e^{-\theta_{\alpha\gamma}t}+\frac{\bar{J}_{\alpha \gamma}}{N_{ \gamma}}(1-e^{-\theta_{\alpha\gamma}t}) \\
{\rm Var}\left(J_{ij}(t)\right)&=J_{ij}(0)\frac{(\sigma_{\alpha \gamma}^{J})^2}{N_{ \gamma}^2\theta_{\alpha\gamma}}(e^{-\theta_{\alpha\gamma}t}-e^{-2\theta_{\alpha\gamma}t})+\frac{\bar{J}_{\alpha \gamma}(\sigma_{\alpha \gamma}^{J})^2}{2N_{ \gamma}^3\theta_{\alpha\gamma}}(1-e^{-\theta_{\alpha\gamma}t})^2
\end{align*}

This shows that if the initial condition $J_{ij}(0)$ is equal to the mean $\frac{\bar{J}_{\alpha \gamma}}{N_{ \gamma}}$, the mean of the process is constant over time and equal to $\frac{\bar{J}_{\alpha \gamma}}{N_{ \gamma}}$. Otherwise, if the initial condition $J_{ij}(0)$ is of the same sign as $\bar{J}_{\alpha \gamma}$, i.e. positive, then the long term mean is $\frac{\bar{J}_{\alpha \gamma}}{N_{ \gamma}}$ and the process is guaranteed not to touch 0 if the condition $2 N_\gamma \theta_{\alpha\gamma}\bar{J}_{\alpha \gamma} \geq (\sigma_{\alpha\gamma}^{J})^2$ holds \cite{cox-ingersoll-jr-etal:85}. Note that the long term variance is $\frac{\bar{J}_{\alpha \gamma}(\sigma_{\alpha \gamma}^{J})^2}{2N_{ \gamma}^3\theta_{\alpha\gamma}}$.}



\olivier{
\subsection{Putting everything together}\label{subsection:alltogether}
We are ready to write the equations of a network of Hodgkin-Huxley or Fitzhugh-Nagumo neurons, study their properties and their limit, if any, when the number of neurons becomes large. The external current for neuron $i$ has been modelled as the sum of a deterministic part and a stochastic part:
\[
I^{\rm ext}_i(t)=I_i(t)+\sigma_{\rm ext}^i \frac{dW^i_t}{dt}
\]
We will assume that the deterministic part is the same for all neurons in the same population, $I_i:=I_\alpha$, and that the same is true for the variance, $\sigma_{\rm ext}^i:=\sigma_{\rm ext}^\alpha$. We further assume that the $N$ Brownian motions $W^i_t$ are $N$ independent Brownian motions and independent of all the other Brownian motions defined in the model. In other words
\begin{equation}\label{eq:ext}
I^{\rm ext}_i(t)=I_\alpha(t)+\sigma_{\rm ext}^\alpha \frac{dW^i_t}{dt}\quad \alpha=p(i),\ i=1,\cdots,N
\end{equation}
We only cover the case of chemical synapses, and leave it to the reader to derive the equations in the \john{simpler} case of gap junctions.
}
\olivier{
\subsubsection{Network of Fitzhugh-Nagumo neurons}
We assume that the parameters $a_i$, $b_i$ and $c_i$ in the equation~\eqref{eq:FNStoch} of the adaptation variable $w^i$ of neuron $i$ are only functions of the population $\alpha=p(i)$.}
\olivier{
\paragraph{Simple maximum conductances variation}\ If we assume that the maximum conductances fluctuate according to equation~\eqref{eq:weightssimple},
the state of the $i$th neuron in a fully connected network of Fitzhugh-Nagumo neurons with chemical synapses is determined by the variables $(V^i,w^i,y^i)$ that satisfy the following set of $3N$ stochatic differential equations:}
\olivier{
\begin{equation}\label{eq:FNNetwork}
 \begin{cases}
 	dV^i_t & = \left(V^i_t-\frac{(V^i_t)^3}{3}-w^i_t+I^{\alpha}(t)\right)dt+ \\
               & \qquad \left(\sum_{\gamma=1}^P \frac{1}{N_\gamma}  \sum_{j,\,p(j)=\gamma}\bar{J}_{\alpha\gamma}(V^i_t-V_{\rm rev}^{\alpha})y^{j}_t\right)dt+\\
			& \qquad   \sum_{\gamma=1}^P \frac{1}{N_\gamma}\left(\sum_{j,\,p(j)=\gamma} \sigma_{\alpha\gamma}^{J} (V^i_t-V_{\rm rev}^\alpha)y^{j}_t\right) dB^{i,\,\gamma}_t+\\
                        & \qquad \qquad \sigma_{\rm ext}^\alpha\,dW_t^i \\
	d{w}^i_t & = c_{\alpha}\left(V^i_t+a_{\alpha}-b_{\alpha}w^i_t\right)dt\\
	dy^i_t&= \left(a_r^\alpha S_{\alpha}(V^i_t) (1-y^i_t)-a_d^\alpha y^i_t\right)dt+\sigma_{\alpha}^{y}(V^i_t,y^i_t) dW^{i,\,y}_t
 \end{cases}
\end{equation}
}
$S_{\olivier{\alpha}}(V^i_t)$ is given by equation~\eqref{eq:S}, $\sigma_\alpha^{y}$ by equation \eqref{eq:siggam}, and  $W^{i,\,y}_t$, $i=1,\cdots,N$ are $N$ independent Brownian processes that model noise in the process of transmitter release into the synaptic clefts.
\olivier{
\paragraph{Sign preserving maximum conductances variation}\ If we assume that the maximum conductances fluctuate according to equation~\eqref{eq:weightscox} the situation is slightly more complicated. In effect, the state space of the neuron $i$ has to be augmented by the $P$ maximum conductances $J_{i\gamma}$, $\gamma=1,\cdots,P$. We obtain
\begin{equation}\label{eq:FNNetwork1}
 \begin{cases}
 	dV^i_t & = \left(V^i_t-\frac{(V^i_t)^3}{3}-w^i_t+I^{\alpha}(t)\right)dt+ \\
               & \qquad \left(\sum_{\gamma=1}^P \frac{1}{N_\gamma}  \sum_{j,\,p(j)=\gamma} J_{ij}(t)(V^i_t-V_{\rm rev}^{\alpha})y^{j}_t\right)dt+\\
                        & \qquad \qquad \sigma_{\rm ext}^\alpha\,dW_t^i \\
	d{w}^i_t & = c_{\alpha}\left(V^i_t+a_{\alpha}-b_{\alpha}w^i_t\right)dt\\
	dy^i_t&= \left(a_r^\alpha S_{\alpha}(V^i_t) (1-y^i_t)-a_d^\alpha y^i_t\right)dt+\sigma_{\alpha}^{y}(V^i_t,y^i_t) dW^{i,\,y}_t\\
        dJ_{i\gamma}(t) &= \theta_{\alpha \gamma}(\frac{\bar{J}_{\alpha \gamma}}{N_{ \gamma}}-J_{i\gamma}(t))dt+\frac{\sigma_{\alpha \gamma}^{J}}{N_{\gamma}}\sqrt{\left|J_{i\gamma}(t)\right|}dB^{i,\, \gamma}(t)\quad \gamma=1,\cdots,P
 \end{cases}
\end{equation}
which is a set of $N(P+3)$ stochastic differential equations.
}
\olivier{
\subsubsection{Network of Hodgkin-Huxley neurons}
We provide a  similar description in the case of Hodgkin-Huxley neurons. We assume that the functions $\rho_x^i$ and $\zeta_x^i$, $x\in\{n.\,m,\,h\}$ that appear in equation~\eqref{eq:HodgkinHuxley} only depend upon $\alpha=p(i)$.
}
\olivier{
\paragraph{Simple maximum conductances variation}\ If we assume that the maximum conductances fluctuate according to equation~\eqref{eq:weightssimple},
the state of the $i$th neuron in a fully connected network of Hodgkin-Huxley neurons with chemical synapses is therefore determined by the variables $(V^i,n^i,m^i,h^i,y^i)$ that satisfy the following set of $5N$ stochatic differential equations:
\begin{equation}\label{eq:HHNetwork}
 \begin{cases}
   C dV^i_t &= \left(I^\alpha(t) - \bar{g_K} n_i^4 (V^i_t-E_K) - \bar{g_{Na}} m_i^3 h_i (V^i_t-E_{Na}) - \bar{g_L} (V^i_t-E_L)+\right)dt\\
              & \qquad \left(\sum_{\gamma=1}^P \frac{1}{N_\gamma}  \sum_{j,\,p(j)=\gamma}\bar{J}_{\alpha\gamma}(V^i_t-V_{\rm rev}^{\alpha})y^{j}_t\right)dt+\\

			& \qquad   \sum_{\gamma=1}^P \frac{1}{N_\gamma}\left(\sum_{j,\,p(j)=\gamma} \sigma_{\alpha\gamma}^{J} (V^i_t-V_{\rm rev}^\alpha)y^{j}_t\right) dB^{i,\,\gamma}_t+\\

                        & \qquad \qquad \sigma_{\rm ext}^\alpha\,dW_t^i \\
   dx_i(t)   & = (\rho_x^\alpha(V^i) (1-x_i) -\zeta_x(V^i) x_i)\,dt + \sigma_x(V^i,x_i) dW^{x,i}_t\quad x \in \{n,\,m,\,h\}\\
   	dy^i_t&= \left(a_r^\alpha S_{\alpha}(V^i_t) (1-y^i_t)-a_d^\alpha y^i_t\right)dt+\sigma_{\alpha}^{y}(V^i_t,y^i_t) dW^{i,\,y}_t
 \end{cases}
\end{equation}
}
\olivier{
\paragraph{Sign preserving maximum conductances variation}\ If we assume that the maximum conductances fluctuate according to equation~\eqref{eq:weightscox} we use the same idea as in the Fitzhugh-Nagumo case of augmenting the state space of each individual neuron and obtain the following set of $(5+P)N$ stochastic differential equations.
\begin{equation}\label{eq:HHNetwork1}
 \begin{cases}
   C dV^i_t &= \left(I^\alpha(t) - \bar{g_K} n_i^4 (V^i_t-E_K) - \bar{g_{Na}} m_i^3 h_i (V^i_t-E_{Na}) - \bar{g_L} (V^i_t-E_L)+\right)dt\\
              & \qquad \left(\sum_{\gamma=1}^P \frac{1}{N_\gamma}  \sum_{j,\,p(j)=\gamma}J_{ij}(t)(V^i_t-V_{\rm rev}^{\alpha})y^{j}_t\right)dt+\\
                        & \qquad \qquad \sigma_{\rm ext}^\alpha\,dW_t^i \\
   dx_i(t)   & = (\rho_x^\alpha(V^i_t) (1-x_i) -\zeta_x^\alpha(V^i_t) x_i)\,dt + \sigma_x(V^i_t,x_i) dW^{x,i}_t\quad x \in \{n,\,m,\,h\}\\
   	dy^i_t&= \left(a_r^\alpha S_{\alpha}(V^i_t) (1-y^i_t)-a_d^\alpha y^i_t\right)dt+\sigma_{\alpha}^{y}(V^i_t,y^i_t) dW^{i,\,y}_t\\
        dJ_{i\gamma}(t) &= \theta_{\alpha \gamma}(\frac{\bar{J}_{\alpha \gamma}}{N_{ \gamma}}-J_{i\gamma}(t))dt+\frac{\sigma_{\alpha \gamma}^{J}}{N_{\gamma}}\sqrt{\left|J_{i\gamma}(t)\right|}dB^{i,\, \gamma}(t)\quad \gamma=1,\cdots,P
 \end{cases}
\end{equation}
}

\olivier{
\subsubsection{Partial conclusion}\label{subsubsection:conclusion2}
Equations~\eqref{eq:FNNetwork}-\eqref{eq:HHNetwork1} have a quite similar structure. They are well-posed, i.e. given any initial condition and any time $T>0$ they have a unique solution on $[0,T]$ which is square integrable. A little bit of care has to be taken when choosing these initial conditions for some of the parameters, i.e. $n$, $m$ and $h$ which take values between 0 and 1, and the maximum conductances when one wants to preserve their signs.
}

\olivier{In order to prepare the grounds for section \ref{sec:Instant} we explore a bit more the aforementioned common structure. Let us first consider the case of the simple maximum conductance variations for the Fitzhugh-Nagumo network. Looking at equation \eqref{eq:FNNetwork}, we define the three-dimensional state vector of neuron $i$ to be $X^i_t=(V^i_t,\,w^i_t,y^i_t)$. Let us now define $f_\alpha: \R \times \R^3 \to \R^3$, $\alpha=1,\cdots,P$,
by
\[
f_\alpha(t,X^i_t)=\left[
\begin{array}{c}
V^i_t-\frac{(V^i_t)^3}{3}-w^i_t+I^{\alpha}(t)\\
c_{\alpha}\left(V^i_t+a_{\alpha}-b_{\alpha}w^i_t\right)\\
a_r^\alpha S_{\alpha}(V^i_t) (1-y^i_t)-a_d^\alpha y^i_t
\end{array}
\right]
\]
Let us next define $g_\alpha: \R \times \R^3 \to \R^{3 \times 2}$ by
\[
g_\alpha(t,X^i_t)=\left[
                \begin{array}{cc}
                \sigma_{\rm ext}^\alpha & 0 \\
                0 & 0 \\
                0  & \sigma_{\alpha}^{y}(V^i_t,y^i_t)
                \end{array}
            \right]
\]
It appears that the intrinsic dynamics of the neuron $i$ is conveniently described by the equation
\[
dX^i_t=f_\alpha(t,X^i_t)\,dt+g_\alpha(t,X^i_t)\left[
                                                \begin{array}{c}
                                                dW_t^i\\
                                                dW^{i,\,y}_t
                                                \end{array}
                                                \right]
\]
We next define the functions $b_{\alpha\gamma}: \R^3 \times \R^3 \to \R^3$, for $\alpha,\,\gamma=1,\cdots,P$ by
\[
b_{\alpha\gamma}(X^i_t,X^j_t)=
\left[
\begin{array}{c}
\bar{J}_{\alpha\gamma}(V^i_t-V_{\rm rev}^{\alpha})y^{j}_t\\
0\\
0
\end{array}
\right],
\]
and the function $\beta_{\alpha\gamma}: \R^3 \times \R^3 \to \R^{3 \times 1}$ by
\[
\beta_{\alpha\gamma}(X^i_t,X^j_t)=
\left[
\begin{array}{c}
\sigma_{\alpha\gamma}^{J}(V^i_t-V_{\rm rev}^{\alpha})y^{j}_t\\
0\\
0
\end{array}
\right].
\]
It appears that the full dynamics of the neuron $i$, corresponding to equation~\eqref{eq:FNNetwork} can be described compactly by
\begin{multline*}
dX^i_t=f_\alpha(t,X^i_t)\,dt+g_\alpha(t,X^i_t)\left[
                                                \begin{array}{c}
                                                dW_t^i\\
                                                dW^{i,\,y}_t
                                                \end{array}
                                                \right]
+\sum_{\gamma=1}^P \frac{1}{N_\gamma}  \sum_{j,\,p(j)=\gamma} b_{\alpha\gamma}(X^i_t,X^j_t)\,dt+\\
\sum_{\gamma=1}^P \frac{1}{N_\gamma}  \sum_{j,\,p(j)=\gamma}\beta_{\alpha\gamma}(X^i_t,X^j_t)dB^{i,\,\gamma}_t
\end{multline*}
}
\olivier{
Let us now move to the case of the sign preserving variation of the maximum conductances, still for Fitzhugh-Nagumo neurons. The state of each neuron is now $P+3$-dimensional: we define $X^i_t=(V^i_t,\,w^i_t,\,y^i_t,J_{i1}(t),\cdots,J_{iP}(t))$.
We then define the functions $f_\alpha: \R \times \R^{P+3} \to \R^{P+3}$, $\alpha=1,\cdots,P$,
by
\[
f_\alpha(t,X^i_t)=\left[
\begin{array}{c}
V^i_t-\frac{(V^i_t)^3}{3}-w^i_t+I^{\alpha}(t)\\
c_{\alpha}\left(V^i_t+a_{\alpha}-b_{\alpha}w^i_t\right)\\
a_r^\alpha S_{\alpha}(V^i_t) (1-y^i_t)-a_d^\alpha y^i_t\\
\theta_{\alpha \gamma}(\frac{\bar{J}_{\alpha \gamma}}{N_{ \gamma}}-J_{i\gamma}(t))\quad \gamma=1,\cdots,P
\end{array}
\right],
\]
}
\olivier{
and the functions $g_\alpha: \R \times \R^{P+3} \to \R^{(P+3) \times (P+2)}$ by
\[
g_\alpha(t,X^i_t)=\left[
                \begin{array}{ccccc}
                \sigma_{\rm ext}^\alpha & 0                             & 0                              &   \cdots                                                      & 0\\
                0                       & 0                             & 0                              &\cdots                                                        & 0\\
                0                       & \sigma_{\alpha}^{y}(V^i_t,y^i_t) & 0                               &\cdots                                                          & 0\\
                0                       & 0                            & \frac{\sigma_{\alpha 1}^{J}}{N_{1}}\sqrt{\left|J_{i1}(t)\right|} & \cdots & 0\\
                \vdots                  & \vdots                       & \vdots                                                       & \vdots     & \vdots \\
                0                        & 0                           & 0                                                            & \cdots & \frac{\sigma_{\alpha P}^{J}}{N_{P}}\sqrt{\left|J_{iP}(t)\right|}
                \end{array}
            \right]
\]
It appears that the intrinsic dynamics of the neuron $i$ isolated from the other neurons is conveniently described by the equation
\[
dX^i_t=f_\alpha(t,X^i_t)\,dt+g_\alpha(t,X^i_t)\left[
                                                \begin{array}{c}
                                                dW_t^i\\
                                                dW^{i,\,y}_t\\
                                                dB^{i,1}_t\\
                                                \vdots\\
                                                dB^{i,P}_t
                                                \end{array}
                                                \right]
\]
}
\olivier{
Let us finally define the functions $b_{\alpha\gamma}: \R^{P+3} \times \R^{P+3} \to \R^{P+3}$, for $\alpha,\,\gamma=1,\cdots,P$ by
\[
b_{\alpha\gamma}(X^i_t,X^j_t)=
\left[
\begin{array}{c}
\bar{J}_{\alpha\gamma}(V^i_t-V_{\rm rev}^{\alpha})y^{j}_t\\
0\\
\vdots\\
0\\
\end{array}
\right].
\]
It appears that the full dynamics of the neuron $i$, corresponding to equation~\eqref{eq:FNNetwork1} can be described compactly by
\[
dX^i_t=f_\alpha(t,X^i_t)\,dt+g_\alpha(t,X^i_t)\left[
                                                \begin{array}{c}
                                                dW_t^i\\
                                                dW^{i,\,y}_t\\
                                                dB^{i,1}_t\\
                                                \vdots\\
                                                dB^{i,P}_t
                                                \end{array}
                                                \right]
+\sum_{\gamma=1}^P \frac{1}{N_\gamma}  \sum_{j,\,p(j)=\gamma} b_{\alpha\gamma}(X^i_t,X^j_t)\,dt
\]
}
\olivier{We let the reader apply the same machinery to the network of Hodgkin-Huxley neurons.}


We are interested in the  behavior of the solutions of these equations as the number of neurons tends to infinity. This problem has been longstanding in neuroscience arousing the interest of many researchers in different domains. We discuss the different approaches developed in the field in the next subsection. 

\subsection{Mean-field methods in computational neuroscience: a quick overview}\label{subsection:mean-field}
Obtaining the equations of evolution of the effective mean-field from microscopic dynamics is a very complex problem. Many approximate solutions have been provided, mostly based on the statistical physics literature.

Many models describing the emergent behavior arising from the interaction of neurons in large-scale networks have relied on continuum limits ever since the seminal work of Wilson and Cowan and Amari \cite{amari:72,amari:77,wilson-cowan:72,wilson-cowan:73}. Such models represent the activity of the network by macroscopic variables, e.g., the population-averaged firing rate, which are generally assumed to be deterministic. \olivier{When the spatial dimension is not taken into account in the equations they are referred to as neural masses otherwise as neural fields. The equations that relate these variables are ordinary differential equations for neural masses, integro-differential equations for neural fields. In the second case they fall in a category studied in \cite{hammerstein:30} or can be seen as ordinary differential equations
defined on specific functional spaces \cite{faugeras-grimbert-etal:08}.} Many analytical and numerical results have been derived from these equations and related to cortical phenomena, for instance for the problem of  spatio-temporal pattern formation in spatially extended models (see e.g.~\cite{coombes-owen:05,ermentrout:98,ermentrout-cowan:79,laing-troy-etal:02}). \olivier{The use of bifurcation theory has also proven to be quite powerful \cite{veltz-faugeras:10,chossat-faugeras:09}}. \olivier{Despite all its qualities, this approach implicitly makes the assumption that the effect of noise vanishes at the mesoscopic and macroscopic scales and hence that the behavior of such populations of neurons is deterministic.} 

A different approach has been to study regimes where the activity is uncorrelated. A number of computational studies on the integrate-and-fire neuron showed that under certain conditions neurons in large assemblies end up firing asynchronously, producing null correlations \cite{abbott-van-vreeswijk:93,amit-brunel:97,brunel-hakim:99}. In these regimes, the correlations in the firing activity decrease towards zero in the limit where the number of neurons tends to infinity. The emergent global activity of the population in this limit is deterministic, and evolves according to a mean-field firing rate equation. 
\olivier{However, according to the theory, these states only exist in the limit where the number of neurons is infinite, thereby}
\olivier{raising the question of how the finiteness of the number of neurons impacts the existence and behavior of asynchronous states.}
The study of finite-size effects for asynchronous states is generally not reduced to the study of mean firing rates and can include higher order moments of firing activity \cite{mattia-del-giudice:02,elboustani-destexhe:09,brunel:00}. In order to go beyond asynchronous states and take into account the stochastic nature of the firing and \olivier{understand} how this activity scales as the network size increases, different approaches have been developed, such as the population density method and related approaches \cite{cai-tao-etal:04}. Most of these approaches involve expansions in terms of the moments of the corresponding random variables, and the moment hierarchy needs to be truncated which is not a simple task that can raise a number of difficult technical issues (see e.g.\cite{ly-tranchina:07}).

However, increasingly many researchers now believe that the different intrinsic or extrinsic noise sources are part of the neuronal signal, and rather than being a pure disturbing effect related to the intrinsically noisy biological substrate of the neural system, they suggest that noise conveys information that can be an important principle of brain function \cite{rolls-deco:10}. At the level of a single cell, various studies have shown that the firing statistics are highly stochastic with probability distributions close to Poisson distributions \cite{softky-koch:93}, and several computational studies confirmed the stochastic nature of single-cells firings \cite{brunel-latham:03,plesser:99,touboul-faugeras:07b,touboul-faugeras:08}.  How does the variability at the single neuron level affect the dynamics of cortical networks is less well-established. Theoretically, the interaction of a large number of neurons that fire stochastic spike trains can naturally produce correlations in the firing activity of the population. For instance power-laws in the scaling of avalanche-size distributions has been studied both via models and experiments~\cite{beggs-plenz:04,benayoun-cowan:10,levina-etal:09,touboul-destexhe:10}. In  these regimes the randomness plays a central role. 

In order to study the effect of the stochastic nature of the firing in large networks, many authors strived to introduce randomness in a tractable form. Some of the models proposed in the area are based on the definition of a Markov chain governing the firing dynamics of the neurons in the network, where the transition probability satisfies a differential equation\john{, \sout{called} the} \emph{master equation}. Seminal works of the application of such modeling for neuroscience date back to the early 90s and have been recently developed by several authors \cite{ohira-cowan:93,buice-cowan:07,buice-cowan:10,bressloff:09,elboustani-destexhe:09}. Most of these approaches are proved correct in some parameter regions using statistical physics tools such as path integrals and Van-Kampen expansions, and \john{their analysis often} involve a moment expansion and truncation. Using a different approach, a static mean-field study of multi population network activity was developed by Treves in \cite{treves:93}.  This author did not consider external inputs but incorporated dynamical synaptic currents and adaptation effects. His analysis was completed in \cite{abbott-van-vreeswijk:93}, where the authors proved, using a Fokker-Planck formalism, the stability of an asynchronous state in the network. Later on, Gerstner in \cite{gerstner:95} built a new approach to characterize the mean-field dynamics for the Spike Response Model, via the introduction of suitable kernels propagating the collective activity of a neural population in time.
\olivier{Another approach is based on the use of large deviation techniques to study large networks of neurons \cite{faugeras-touboul-etal:09}. This approach is inspired by the work on spin-glass dynamics, e.g., \cite{guionnet:97}. It takes into account the randomness of the maximum conductances and the noise at various levels. The individual neuron models are rate models, hence already mean-field models. The mean-field equations are not rigorously derived from the network equations in the limit of an infinite number of neurons but they are shown to have a unique, non-Markov solution, i.e. with infinite memory, for each initial condition.}

Brunel and Hakim considered a network of integrate-and-fire neurons connected with constant maximum conductances \cite{brunel-hakim:99}. In the case of sparse connectivity, stationarity, and in a regime where individual neurons emit spikes at low rate, they were able to study analytically the dynamics of the network and to show that it exhibits a sharp transition between a stationary regime and a regime of fast collective oscillations weakly synchronized. Their approach was based on a perturbative analysis of the Fokker-Planck equation. A similar formalism was used in \cite{mattia-del-giudice:02} which, when complemented with self-consistency equations, resulted in the dynamical description of the mean-field equations of the network, and was extended to a multi population network.
Finally, Chizhov and Graham \cite{chizhov-graham:07} have recently proposed
a new method based on a population density approach allowing to characterize the mesoscopic
behaviour of neuron populations in conductance-based models.

\olivier{Let us finish this very short and incomplete survey by mentioning the work of Sompolinsky and colleagues.} Assuming a linear intrinsic dynamics for the individual neurons \olivier{described by a rate model} and random centered maximum conductances for the connections, they showed~\cite{sompolinsky-crisanti-etal:88,sompolinsky-zippelius:82,crisanti-sommers-etal:90} that the system undergoes a phase transition between two different stationary regimes: a ``trivial'' regime where the system has a unique null and uncorrelated solution, and a ``chaotic'' regime in which the firing rate converges towards a non-zero value and correlations stabilize on a specific curve which they were able to characterize. 

All these approaches have in common that they are not based on the most widely accepted microscopic dynamics (such as the ones represented by the Hodgkin-Huxley equations or some of their simplifications), and/or involve approximations or moment closures. Our approach is distinct in that it aims at deriving rigorously and without approximations the mean-field equations of populations of neurons whose individual neurons are described by biological, if not correct at least plausible, representations. The price to pay is the complexity of the resulting mean-field equations. The specific study of their solutions is therefore a crucial step, that will be developed in forthcoming papers.

\section{Mean-field equations for conductance-based models}\label{sec:Instant}
In this section, we \olivier{give a general formulation of  the neural networks models} introduced in the previous section and \olivier{use it in a probabilistic framework to} address the problem of the asymptotic behavior of the networks, as the number of neurons $N$ goes to infinity. \olivier{In other words} we derive the limit in law of $N$ interacting neurons, each of which satisfying a nonlinear stochastic differential equation \olivier{of the type} described in section~\ref{sec:Motiv}. In the remaining of this section, we work in a complete probability space $(\Omega, \mathcal{F},\mathbb{P})$ satisfying the usual conditions, and endowed with a filtration $\big(\mathcal{F}_t\big)_t$.

\subsection{Setting of the problem}

We recall that the neurons in  the network  fall into $P$ different populations. The populations differ through the intrinsic properties of their neurons and the input they receive. We assume that the number of neurons in each population $\alpha \in \{1,\ldots,P\}$, denoted by $N_{\alpha}$, increases as the network size increases, and moreover that the asymptotic proportion of neurons in population $\alpha$ is non-trivial, i.e. $N_{\alpha}/N\to \lambda_{\alpha}\in (0,1)$ as $N$ goes to infinity\footnote{\johnNew{As we will see in the proof, most properties are valid as soon as $N_{\alpha}$ tends to infinity as $N$ goes to infinity for all $\alpha\in\{1,\cdots,P\}$, the previous assumption will allow quantifying the speed of convergence towards the asymptotic regime.}}.

\olivier{We use the notations introduced in section~\ref{subsubsection:conclusion2} and the reader should refer to this section to give a concrete meaning to the rather abstract (but required by the mathematics) setting that we now establish.}

Each neuron \olivier{$i$} in population $\alpha$ \olivier{is described by a state vector noted $X^{i,N}_t$ in $\R^d$} and has an intrinsic dynamics governed by a drift function $f_{\alpha}:\R\times \R^d \mapsto \R^d$ and a diffusion matrix $g_{\alpha}:\R\times \R^d \mapsto \R^{d\times m}$ assumed \olivier{uniformly} locally Lipschitz continuous with respect to the second variable. For a neuron $i$ in population $\alpha$, the dynamics of the  $d$-dimensional process $(X^i_t)$ governing the evolution of the membrane potential and additional variables (adaptation, ionic concentrations), when there is no interaction, is governed by the equation:
\[
dX^{i,N}_t= f_{\alpha}(t,X^{i,N}_t)\, dt + g_{\alpha}(t,X^{i,N}_t)\,dW_t^i.
\]
We moreover assume, \olivier{as it is the case for all the models described in section \ref{sec:Motiv}}, that the solutions of this stochastic differential equation exist for all time. 

When included in the network, these processes interact with \olivier{those of all the other} neurons through a \olivier{a set of continuous functions that only depend on the population $\alpha=p(i)$ the neuron $i$ belongs to and the populations $\gamma$ of  the presynaptic neurons}. These functions $b_{\alpha\gamma}(x,y): \R^d \times \R^d \mapsto \R^d$, are scaled  by the coefficients $1/N_{\gamma}$, so that the maximal interaction is independent of the size of the network (in particular, neither diverging nor vanishing as $N$ goes to infinity). 

As discussed in section \ref{sec:Motiv}, due to the stochastic nature of ionic currents and the noise effects linked with the discrete nature of charge carriers, the maximum conductances are perturbed dynamically through the \olivier{$N \times P$} independent Brownian motions \olivier{$B^{i,\,\alpha}_t$} of dimension $\delta$ \olivier{that were previously} introduced. The interaction between the neurons \olivier{and} the noise term is \olivier{represented by} the function $\beta_{\alpha\gamma} : \R^d \times \R^d \mapsto \R^{d\times \delta}$. 

In order to introduce the stochastic current and stochastic maximum conductances, we define two independent sequences of independent  $m-$ and $\delta$-dimensional Brownian motions noted $(W^{i}_t)_{i\in\N}$ and $(B^{i\alpha}_t)_{i\in\N, \alpha\in \{1\cdots P\}}$ which are adapted to the filtration $\mathcal{F}_t$. 

The resulting equation for the $i$th neuron, including the noisy interactions reads:
\begin{multline}\label{eq:Network}
	d\,X^{i,N}_t=f_{\alpha}(t,X^{i,N}_t) \, dt + \sum_{\gamma=1}^P \frac{1}{N_{\gamma}} \sum_{j,\, p(j)=\gamma}b_{\alpha\gamma}(X^{i,N}_t,X^{j,N}_t) \,dt + g_{\alpha}(t,X^{i,N}_t)  dW^{i}_t \\
	+ \sum_{\gamma=1}^P \frac{1}{N_{\gamma}} \sum_{j, \,p(j)=\gamma}\beta_{\alpha\gamma}(X^{i,N}_t	,X^{j,N}_t) dB^{i\gamma}_t.
\end{multline}
\olivier{Note that this implies that $X^{i,N}$ and $X^{j,N}$ have the same law whenever $p(i)=p(j)$.}\\
These equations are similar to the equations studied in another context by a number of mathematicians, among which McKean, Tanaka and Sznitman (see the Introduction), in that \olivier{they involve} a very large number of particles (here particles are neurons) in interaction. Motivated by the study of McKean-Vlasov equations, these authors studied special cases of equations~\eqref{eq:Network}. This theory, referred to as the kinetic theory, is chiefly interested in the study of the thermodynamics questions. They show the property that in the limit where the number of particles tends to infinity, provided that the \olivier{initial} state of each particle is drawn independently from the same law, each particle behaves independently and has the same law, which is \olivier{given by} an implicit stochastic equation. They also evaluate the fluctuations around this limit, under diverse conditions~\cite{tanaka-hitsuda:81,tanaka:78,tanaka:84,sznitman:84a,sznitman:86,mckean:66,mckean:67}. \olivier{Some} extensions to biological problems where the drift term is not globally Lipschitz but satisfies the monotone growth condition \eqref{eq:MonotoneCondition} were studied in~\cite{bolley-canizo-etal:10}. \johnNew{This is the approach we undertake here.}

We now spell out our assumptions on the function appearing in the  model:
\renewcommand{\theenumi}{(H\arabic{enumi})}
\begin{enumerate}
	\item {\bf Locally Lipschitz \john{dynamics}:} \label{Assump:LocLipsch} for all $\alpha\in\{1,\ldots,P\}$, the functions $f_{\alpha}$ and $g_{\alpha}$ are \olivier{uniformly} locally Lipschitz-continuous with respect to the second variable. In detail, for all $U>0$ there exists $K_U>0$ and independent of $t \in [0,T]$ such that for all $x,\,y\ \in B^d_U$, the ball of $\R^d$ of radius $U$:
\[
	\Vert f_\alpha(t,x) - f_\alpha(t,y) \Vert + \Vert g_\alpha(t,x) - g_\alpha(t,y) \Vert \leq K_U \, \Vert x-y \Vert\quad \alpha=1,\cdots,P,
\]
	\item {\bf Lipschitz \john{interactions}:}\label{Assump:LocLipschb} for all $\alpha, \gamma \in\{1,\ldots,P\}$, the functions $b_{\alpha\gamma}$ and $\beta_{\alpha\gamma}$ are Lipschitz-continuous with Lipschitz constant $L$:
for all $(x,y)$ and $(x',y')$ in $\R^d \times \R^d$ we have:
\[
\Vert b_{\alpha\gamma}(x,y)-b_{\alpha\gamma}(x',y') \vert  +\| \beta_{\alpha\gamma}(x,y)-\beta_{\alpha\gamma}(x',y') \| \leq L \, (\Vert x-x'\Vert + \Vert y-y'\Vert )
 \]
\item {\bf Linear growth \john{of the interactions}:}\label{Assump:bBound} There exists a $\tilde{K}>0$ such that:
\[\max(\Vert b_{\alpha\gamma}(x,z)\Vert^2,\Vert \beta_{\alpha\gamma}(x,z)\Vert^2) \leq \tilde{K}(1+ \Vert x \Vert^2)\]
\item {\bf Monotone growth \john{of the dynamics}:}\label{Assump:MonotoneGrowth} 
We assume that $f_\alpha$ and $g_\alpha$ satisfy the following monotonous condition for all $\alpha=1,\cdots,P$:
	\begin{equation}\label{eq:MonotoneCondition}
	 	x^T f_{\alpha}(t,x) + \frac 1 2 \Vert g_{\alpha}(t,x) \Vert^2 \leq K \; (1+\Vert x \Vert^2)
	\end{equation}

\end{enumerate}

\subsection{Convergence of the network equations to the mean-field equations and properties of those equations}
\olivier{We now show that the same type of phenomena that were predicted for systems of interacting particles happen in networks of neurons. In detail we prove
that in the limit of large populations the network displays the property of propagation of chaos. This means that any finite number of diffusion processes become independent, and all neurons belonging to a given population $\alpha$ have asymptotically the same probability distribution, which is the solution of the following mean-field equation:}
\begin{multline}\label{eq:MFE}
	d\bar{X}^{\alpha}_t= f_{\alpha}(t,\bar{X}^{\alpha}_t)\,dt + \sum_{\gamma=1}^P\Exp_{\bar{Z}}[b_{\alpha\gamma}(\bar{X}^{\alpha}_t, \bar{Z}^{\gamma}_t)] \,dt + g_{\alpha}(t,\bar{X}^{\alpha}_t)\,dW^{\alpha}_t \\
	+ \sum_{\gamma=1}^P\Exp_{\bar{Z}}[\beta_{\alpha\gamma}(\bar{X}^{\alpha}_t, \bar{Z}^{\gamma}_t)] \,dB^{\alpha\gamma}_t\quad \alpha=1,\cdots,P
\end{multline}
where $\bar{Z}$ is a process independent of $\bar{X}$ that has the same law, and $\Exp_{\bar{Z}}$ denotes the expectation under the law of $\bar{Z}$. In other words, the mean-field equation can be written, denoting by \olivier{$m_t^{\gamma}(dz)$ the law of $\bar{Z}^{\gamma}_t$ (hence also of $\bar{X}^\gamma_t$)}:
\begin{multline} \label{eq:MFE1}
	d\bar{X}^{\alpha}_t= f_{\alpha}(t,\bar{X}^{\alpha}_t)\,dt + \sum_{\gamma=1}^P\olivier{\left(\int_{\R^d} b_{\alpha\gamma}(\bar{X}^{\alpha}_t, z) dm^{\gamma}_t(z)\right)}\,dt + g_{\alpha}(t,\bar{X}^{\alpha}_t)\,dW_t^{\alpha}\\
	+ \sum_{\gamma=1}^P \olivier{\left(\int_{\R^d} \beta_{\alpha\gamma}(\bar{X}^{\alpha}_t, z) dm^{\gamma}_t(z)\right)}\,dB_t^{\alpha\gamma}
\end{multline}
 In these equations, $W_t^{\alpha}$, for $\alpha=1\cdots P$, are independent standard $m$-dimensional Brownian motions. Let us point out the fact that the righthand side of \eqref{eq:MFE} and \eqref{eq:MFE1} depends on the law of the solution, this fact is sometimes referred to as ``the process $\bar{X}$ is attracted by its own law''. This equation is also classically written as the McKean-Vlasov-Fokker-Planck equation on the probability distribution $p$ of the solution.
\olivier{This equation which we use in section~\ref{section:numerics}
can be easily derived from equation~\eqref{eq:MFE}. Let $p_\alpha(t,z)$, $z=(z_1,\cdots,z_d)$, be the probability density at time $t$ of the solution $\bar{X}_t^\alpha$ to~\eqref{eq:MFE} (this is equivalent to $dm^\alpha_t(z)=p_\alpha(t,z)dz$), we have:
\begin{multline}\label{eq:McKeanVlasofFP}
	\partial_t p_\alpha(t,z)= \\
- {\rm div}_{z} \left(\left(f_\alpha\left(t,z\right) + \sum_{\gamma=1}^P \int b_{\alpha \gamma}\left(z, y\right) p_\gamma\left(t,y\right)\,dy \right) p_\alpha\left(t,z\right)\right) \\
	+  \frac 1 2 \sum_{i,\,j=1}^d \frac{\partial^2}{\partial z_i\partial z_j} \left(D_{ij}^\alpha\left(z\right)p_\alpha\left(t,z\right)\right)\quad \alpha=1,\cdots,P,
\end{multline}
where the $d \times d$ matrix $D^\alpha$ is given by
\[
D^\alpha(z)=\sum_{\gamma=1}^P \Exp_Z[\beta_{\alpha \gamma}(z,Z)]\Exp_Z^T[\beta_{\alpha \gamma}(z,Z)].
\]
\johnNew{with}
\[
 \Exp_Z[\beta_{\alpha \gamma}(z,Z)]=\int \beta_{\alpha \gamma}(z,y) p_\gamma(t,y)\,dy.
\]
}
The $P$ equations \eqref{eq:McKeanVlasofFP} yield the probability densities of the solutions $\bar{X}^\alpha_t$ of the mean-field equations \eqref{eq:MFE}. Because of the propagation of chaos result, the $\bar{X}^\alpha_t$ are statistically independent, but their probability functions are clearly functionally dependent.

\olivier{We now spend some time on notations in order to obtain a somewhat more compact form of equation~\eqref{eq:MFE}.}
We define $\bar{X}_t$ to be the $d\,P$-dimensional process $\bar{X}_t=(\bar{X}^{\alpha}_t\;;\; \alpha=1\cdots P)$. We similarly define $f$, $g$, $b$ and $\beta$ the concatenation of functions $f_{\alpha}$, $g_{\alpha}$, $b_{\alpha,\beta}$ and $\beta_{\alpha,\gamma}$. In details $f(t,\bar{X}) = (f_{\alpha}(t,\bar{X}^{\alpha}_t)\;;\;\alpha=1\cdots P)$, $b(X,Y) = (\sum_{\gamma=1}^P b_{\alpha\gamma}(X^{\alpha},Y^{\gamma})\;;\;\alpha=1\cdots P)$), and $W=(W^{\alpha}\;;\;\alpha=1\cdots P)$. The term of noisy synaptic interactions requires a more careful treatment. We define $\beta=(\beta_{\alpha\gamma}\;;\; \alpha,\gamma =1\cdots P) \in \big(\R^{d\times \delta}\big)^{P\times P}$ and $B=(B^{\alpha\gamma}\;;\;\;\; \alpha,\gamma =1\cdots P) \in \big(\R^{\delta}\big)^{P\times P}$, and the product $\odot$ of an element $M \in \big(\R^{d\times \delta}\big)^{P\times P}$ and an element $X\in\big(\R^{\delta}\big)^{P\times P}$ the element of $\big(\R^{d})^{P}$:
\[(M\odot X)_{\alpha} = \sum_{\gamma} M_{\alpha\gamma} X^{\alpha\gamma} \]
We obtain the equivalent compact  mean-field equation:
\begin{equation}\label{eq:CompactMFE}
	d\bar{X}_t=\Big(f(t,\bar{X}_t) + \Exp_{\bar{Z}}[b(\bar{X}_t, \bar{Z}_t)] \Big)\,dt + g(t,\bar{X}_t)\,dW_t \\
	+ \Exp_{\bar{Z}}[\beta(\bar{X}_t, \bar{Z}_t)] \odot dB_t
\end{equation}

The equations~\eqref{eq:MFE} and~\eqref{eq:McKeanVlasofFP} are implicit equations on the law of $\bar{X}_t$ .


We now state the main theoretical results of the paper as two theorems. The first theorem is about the well-posedness of the mean-field equation~\eqref{eq:MFE}. The second is about the convergence of the solutions of the network equations to those of the mean-field equations. Since the proof of the second theorem involves similar ideas to those used in the proof of the first, it is given in the appendix~\ref{section:proofs}. 
\begin{theorem}\label{thm:Principal1}
	Under the assumptions \ref{Assump:LocLipsch} to \ref{Assump:MonotoneGrowth}  
there exists a unique solution to the mean-field equation~\eqref{eq:MFE} on $[0,T]$ for any $T>0$.
\end{theorem}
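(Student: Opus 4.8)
The plan is to recast \eqref{eq:MFE} as a fixed-point problem on the space of laws over path space and to combine the classical coupling argument of Sznitman with a localization that neutralizes the lack of global Lipschitz continuity of $f_\alpha$ and $g_\alpha$. Let $\mathcal{C}=C([0,T];\R^{dP})$ and let $\mathcal{M}_2(\mathcal{C})$ be the set of probability measures $m$ on $\mathcal{C}$ with $\int_{\mathcal{C}}\sup_{t\le T}\Vert x_t\Vert^2\,dm(x)<\infty$, endowed with the family of Wasserstein-type distances $W_t$ comparing the marginals of the paths restricted to $[0,t]$. Given $m\in\mathcal{M}_2(\mathcal{C})$, I would first \emph{freeze} the law appearing in the expectations, i.e.\ replace $\bar Z$ by an independent process of law $m$; equation \eqref{eq:MFE1} then becomes an ordinary It\^o SDE whose drift $f_\alpha(t,x)+\sum_\gamma\int b_{\alpha\gamma}(x,z)\,dm^\gamma_t(z)$ and diffusion coefficients are, by \ref{Assump:LocLipsch} and \ref{Assump:LocLipschb}, locally Lipschitz in $x$, and which still satisfies a monotone growth condition inherited from \ref{Assump:bBound} and \ref{Assump:MonotoneGrowth}. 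By the classical existence-uniqueness theory for SDEs under a local Lipschitz plus monotone (Lyapunov) growth condition, this frozen equation has a unique strong solution, and I define $\Phi(m)$ to be its law. A solution of \eqref{eq:MFE} is then exactly a fixed point of $\Phi$.

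The cornerstone is an a priori moment estimate controlling $\mathbb{E}[\sup_{t\le T}\Vert\bar X_t\Vert^2]$ uniformly. Applying It\^o's formula to $\Vert\bar X^\alpha_t\Vert^2$, the intrinsic terms $2(\bar X^\alpha_t)^Tf_\alpha+\Vert g_\alpha\Vert^2$ are bounded by $2K(1+\Vert\bar X^\alpha_t\Vert^2)$ thanks to \ref{Assump:MonotoneGrowth}, while the interaction terms are controlled by \ref{Assump:bBound}; note that the bound there depends only on the postsynaptic variable $x$, so the estimate closes without reference to the moments of the frozen law. Treating the stochastic integrals with the Burkholder-Davis-Gundy inequality and concluding with Gronwall's lemma yields the desired bound. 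This is what prevents explosion in finite time and guarantees that $\Phi$ maps $\mathcal{M}_2(\mathcal{C})$ into itself.

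To actually run a contraction I would truncate: replace $f_\alpha,g_\alpha$ by globally Lipschitz coefficients $f^U_\alpha,g^U_\alpha$ agreeing with them on the ball $B^{dP}_U$. For the truncated system all coefficients are globally Lipschitz (the interactions already are, with linear growth), so Sznitman's coupling argument applies: coupling two candidate laws through common Brownian motions and initial data and using It\^o, BDG and the Lipschitz bounds gives $W_t(\Phi^U m^1,\Phi^U m^2)^2\le C\int_0^t W_s(m^1,m^2)^2\,ds$; iterating shows that a power of $\Phi^U$ is a contraction, hence $\Phi^U$ has a unique fixed point $m^U$, i.e.\ a unique solution $\bar X^U$ of the truncated mean-field equation. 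The uniform moment bound then lets me remove the truncation: setting $\tau_U=\inf\{t:\Vert\bar X^U_t\Vert\ge U\}$, Chebyshev's inequality gives $\mathbb{P}(\tau_U\le T)\to0$, so the truncated and untruncated dynamics differ only on an event of vanishing probability, and a stability argument based on the same moment control shows that $\bar X^U$ converges in $\mathcal{M}_2(\mathcal{C})$ to a process whose law is a fixed point of $\Phi$, hence a solution of \eqref{eq:MFE} on $[0,T]$.

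For uniqueness of the original equation I would take two solutions $\bar X,\bar Y$ with the same initial law, couple them with common noise, and set $\tau_U=\inf\{t:\Vert\bar X_t\Vert\ge U\text{ or }\Vert\bar Y_t\Vert\ge U\}$. On $[0,t\wedge\tau_U]$ I use the local Lipschitz constant $K_U$ for the intrinsic terms and the global constant $L$ for the interactions; the mean-field contributions produce a Wasserstein distance between the two marginal laws, itself bounded by $\mathbb{E}[\sup_{s\le t\wedge\tau_U}\Vert\bar X_s-\bar Y_s\Vert^2]$ through the chosen coupling, so Gronwall forces $\bar X=\bar Y$ on $[0,\tau_U]$, and letting $U\to\infty$ (with $\tau_U\uparrow T$ a.s.\ by the moment bound) yields pathwise uniqueness. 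I expect the main obstacle to be precisely the interplay between the absence of global Lipschitz continuity and the McKean-Vlasov nonlinearity: the contraction underlying the classical theory only closes for globally Lipschitz coefficients, so the real content is to show that the monotone growth condition \ref{Assump:MonotoneGrowth} delivers moment bounds strong enough to make the localization $U\to\infty$ legitimate, both for passing to the limit in the truncated fixed points and for patching the local uniqueness estimates.
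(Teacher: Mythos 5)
Your proposal is correct in outline and shares the skeleton of the paper's argument---a fixed-point formulation, truncation to globally Lipschitz coefficients, and removal of the truncation through stopping times plus the a priori second-moment bound coming from assumption~\ref{Assump:MonotoneGrowth}---but your fixed-point map is genuinely different from the paper's. The map $\Phi$ in the paper acts on a \emph{process} $X$ by substituting $X$ (and an independent copy $Z$) into the entire right-hand side of~\eqref{eq:MFE}, so each iterate $X^{k+1}=\Phi(X^k)$ is an explicit stochastic integral and no SDE has to be solved at any stage; convergence of the Picard iterates is then extracted from the Gronwall-type recursion~\eqref{eq:Bounds}, the factorial estimate~\eqref{eq:CauchySeq}, and Borel--Cantelli, which give almost sure uniform convergence, after which the limit is identified as a solution. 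Your map instead freezes only the \emph{law} $m$ inside the expectations and solves, for each $m$, a classical SDE with locally Lipschitz coefficients and an inherited monotone growth condition, defining $\Phi(m)$ as the law of that solution, and you obtain uniqueness of the fixed point of the truncated map by a Wasserstein contraction (a power of $\Phi^U$ contracts), which is Sznitman's classical route. What your version buys is modularity: existence for the frozen equation is delegated to Khasminskii-type SDE theory, and the contraction argument is standard. What the paper's version buys is self-containedness: it never has to verify that the frozen coefficients satisfy the hypotheses of the classical theory (for instance, regularity in $t$ of $t\mapsto\int b_{\alpha\gamma}(x,z)\,dm^\gamma_t(z)$ for a general $m\in\M(\C)$), and the same two-process estimate~\eqref{eq:Bounds} is recycled verbatim for uniqueness.

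One caveat. Your ``cornerstone'' estimate claims control of $\Exp\big[\sup_{t\leq T}\Vert \bar X_t\Vert^2\big]$ via Burkholder--Davis--Gundy and Gronwall, but assumption~\ref{Assump:MonotoneGrowth} alone allows $\Vert g_\alpha(t,x)\Vert^2$ to grow superlinearly in $\Vert x\Vert$ (that is the whole point of the monotone condition: a strongly dissipative $x^Tf_\alpha$, as in Fitzhugh--Nagumo, may compensate a large $\Vert g_\alpha\Vert^2$), so the BDG step does not close using second moments only. This is not fatal, because every place you use the estimate---Chebyshev's bound on $\mathbb{P}(\tau_U\leq T)$ and the patching of truncated solutions---only requires the weaker stopped bound $\Exp\big[\Vert X_{t\wedge\tau_U}\Vert^2\big]\leq C(T)$ uniformly in $U$, which is exactly what It\^o's formula plus~\ref{Assump:MonotoneGrowth},~\ref{Assump:bBound} and Gronwall give (this is the content of lemma~\ref{lem:SoluL2} in the paper, which bounds $\Exp[\Vert X_t\Vert^2]$ pointwise in $t$, not the expectation of the supremum). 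You should state your moment estimate in that stopped, pointwise-in-$t$ form; as written, the claimed supremum bound is an overstatement that your own argument does not need.
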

Let us denote by $\mathcal{M}(\mathcal{C})$ the set of probability distributions on $\mathcal{C}$ the set continuous functions $[0,T] \mapsto \big(\R^{d}\big)^{P}$, and $\M^2(\C)$ the space of square-integrable processes. Let also $(W^{\alpha}\;;\; \alpha =1\cdots P)$ (respectively $(B^{\alpha\gamma}, \ \alpha,\gamma =1\cdots P)$) be a family of $P$ (\olivier{respectively} $P^2$) independent $m$-(\olivier{respectively} $\delta$)-dimensional  adapted standard Brownian motions on $(\Omega,\F,P)$. Let us also note $X_0 \in \mathcal{M}(\R^d)^P$ the (random) initial condition of the mean-field equation. We introduce the map $\Phi$ acting on stochastic processes and defined by:
\[\Phi: 
\begin{cases}
	\mathcal{M}(\mathcal{C}) &\mapsto \mathcal{M}(\mathcal{C})\\
	X &\mapsto (Y_t=\{Y^{\alpha}_t, \alpha=1\cdots P\})_t \text{ with } \\
	Y_t^{\alpha}&=X_0^{\alpha} +  \int_{0}^t \Big (f_{\alpha}(s,X_s^{\alpha}) + \sum_{\gamma=1}^P\Exp_Z [b_{\alpha\gamma}(X_s^{\alpha},Z_s^{\gamma})] \Big) \,ds + \\
        & \quad \int_0^t g_{\alpha}(s,X_s^{\alpha})\,dW_s^{\alpha}\\
	 &\quad \quad + \sum_{\gamma=1}^P\int_0^t \Exp_Z [\beta_{\alpha\gamma}(X_s^{\alpha},Z_s^{\gamma})]\,dB_s^{\alpha\gamma}\quad \alpha=1,\cdots,P
\end{cases}
\]
\olivier{We have introduced in the previous formula the process $Z_t$ with the same law as and independent of $X_t$}. There is a trivial identification between the solutions of the mean-field equation~\eqref{eq:MFE} and the fixed points of the map $\Phi$: any fixed point of $\Phi$ provides a solution for equation~\eqref{eq:MFE} and conversely any solution of equation~\eqref{eq:MFE} is a fixed point of $\Phi$. 

The following lemma is useful to prove the theorem. 


\begin{lemma}\label{lem:SoluL2}
	Let $X_0 \in \mathbbm{L}^2(\johnNew{(\R^{d})^P})$ 
 a square-integrable random variable. Let $X$ be a solution of the mean-field equation~\eqref{eq:MFE} with initial condition $X_0$.
Under the assumption~\eqref{eq:MonotoneCondition}, there exists a constant $C(T)>0$ depending on the parameters of the system and on the horizon $T$, such that: 
\[
\Exp[ \Vert X_t \Vert^2]\leq C(T),\ \olivier{\forall t \in [0,T]}
\] 
\end{lemma}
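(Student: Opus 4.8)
The plan is to obtain the bound by applying Itô's formula to the square norm $\Vert \bar{X}_t\Vert^2 = \sum_{\alpha=1}^P \Vert \bar{X}^\alpha_t\Vert^2$ and deriving a Gronwall-type integral inequality. Because the self-dynamics $f_\alpha$ is only \emph{locally} Lipschitz and may grow faster than linearly, the quantity $\Exp[\Vert \bar{X}_t\Vert^2]$ cannot be manipulated directly: a priori the stochastic integrals need not be true martingales and the expectations need not be finite. I would therefore localise by introducing the stopping times $\tau_n = \inf\{t\ge 0 : \Vert \bar{X}_t\Vert \ge n\}$, which increase to $+\infty$ almost surely since, by assumption, the solution exists on all of $[0,T]$. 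Working on $[0, t\wedge \tau_n]$ keeps every integrand bounded, so Itô's formula applies cleanly and the martingale contributions coming from the $dW^\alpha$ and $dB^{\alpha\gamma}$ integrals have zero expectation.

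After taking expectations, the deterministic part produced by Itô's formula is, for each population $\alpha$, the integrand
\begin{multline*}
2(\bar{X}^\alpha_s)^T f_\alpha(s,\bar{X}^\alpha_s) + \Vert g_\alpha(s,\bar{X}^\alpha_s)\Vert^2 + 2(\bar{X}^\alpha_s)^T \sum_{\gamma=1}^P \Exp_{\bar{Z}}[b_{\alpha\gamma}(\bar{X}^\alpha_s,\bar{Z}^\gamma_s)] \\
+ \sum_{\gamma=1}^P \Vert \Exp_{\bar{Z}}[\beta_{\alpha\gamma}(\bar{X}^\alpha_s,\bar{Z}^\gamma_s)]\Vert^2.
\end{multline*}
The first two terms are controlled by the monotone growth condition~\eqref{eq:MonotoneCondition}, which gives exactly $2(\bar{X}^\alpha)^T f_\alpha + \Vert g_\alpha\Vert^2 \le 2K(1+\Vert \bar{X}^\alpha\Vert^2)$ even though $f_\alpha$ itself need not have linear growth. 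For the interaction terms I would use Young's inequality $2 x^T u \le \Vert x\Vert^2 + \Vert u\Vert^2$ together with Jensen's inequality to pull the norm inside the expectation over $\bar{Z}$, and then the linear growth assumption~\ref{Assump:bBound}. Crucially, the bound in~\ref{Assump:bBound} depends only on the first argument, so $\Exp_{\bar{Z}}[\Vert b_{\alpha\gamma}(\bar{X}^\alpha_s,\bar{Z}^\gamma_s)\Vert^2] \le \tilde{K}(1+\Vert \bar{X}^\alpha_s\Vert^2)$ with no contribution from the (a priori unknown) moments of $\bar{Z}$; the same holds for $\beta_{\alpha\gamma}$. Summing over $\alpha$ and setting $\psi_n(t) = \Exp[\Vert \bar{X}_{t\wedge\tau_n}\Vert^2]$, these estimates yield an inequality of the form $\psi_n(t) \le \Exp[\Vert X_0\Vert^2] + A\,T + B\int_0^t \psi_n(s)\,ds$ with constants $A,B$ depending only on $K$, $\tilde{K}$, $P$ and $T$.

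Gronwall's lemma then gives $\psi_n(t) \le (\Exp[\Vert X_0\Vert^2] + A\,T)\,e^{BT} =: C(T)$ uniformly in $n$, and letting $n\to\infty$ with Fatou's lemma (using $\bar{X}_{t\wedge\tau_n}\to \bar{X}_t$ almost surely) produces the claimed bound $\Exp[\Vert \bar{X}_t\Vert^2]\le C(T)$ for all $t\in[0,T]$. The main obstacle, and the reason the statement is nontrivial, is precisely that the self-dynamics is not globally Lipschitz: the linear-growth assumption~\ref{Assump:bBound} alone cannot bound $x^T f_\alpha$, and a direct application of Itô followed by taking expectations would be unjustified. Both difficulties are overcome by the single device of combining the localising stopping times $\tau_n$ with the one-sided monotone estimate~\eqref{eq:MonotoneCondition}, which is tailored to exactly this situation.
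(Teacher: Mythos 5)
Your proposal is correct and follows essentially the same route as the paper: It\^o's formula applied to $\Vert X_t\Vert^2$, the monotone growth condition~\eqref{eq:MonotoneCondition} to control $x^T f_\alpha + \frac{1}{2}\Vert g_\alpha\Vert^2$, the linear growth assumption~\ref{Assump:bBound} (whose bound depends only on the first argument) for the interaction terms, and Gronwall's lemma. The only difference is your localisation by the stopping times $\tau_n$ before taking expectations, a technical refinement the paper skips when it directly asserts that the stochastic integral has zero expectation; this makes the martingale step fully rigorous but does not change the substance of the argument.
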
 
\begin{proof}
Using It\^o formula for $\|X_t\|^2$, we have:
\begin{multline*}
\|X_t\|^2=\|X_0\|^2 + 2\int_0^t \Big(X_s^T f(s,X_s) + \frac 1 2 \| g(s,X_s)\|^2 +X_s^T \Exp_{Z}[b(X_s,Z_s)] + \\
\olivier{\frac 1 2}\| \Exp_{Z}[\beta(X_s,Z_s)]\|^2)\Big)\,ds + \olivier{N_t}
\end{multline*}
where \olivier{$N_t$} is a stochastic integral, hence with a null expectation, \olivier{$\Exp[N_t]=0$}. 

This expression involves the term  $x^T \,\olivier{b}(x,z)$. Because of assumption~\ref{Assump:bBound}, we clearly have:
\[\vert x^T \,b(x,z)\vert \leq \| x\| \; \| \olivier{b}(x,z)\| \leq \| x\| \sqrt{\tilde{K}(1+\| x \|^2)}\leq \olivier{\sqrt{\tilde{K}}} (1+\| x \|^2) \]
\olivier{It also involves the term $x^T \,f(t,x)+\frac 1 2 \| g(t,x) \|^2$ which, because of assumption~\ref{Assump:MonotoneGrowth}, is upperbounded by $K(1+\| x \|^2)$. Finally, assumption~\ref{Assump:bBound} again allows us to upperbound the term $\frac 1 2\| \Exp_{Z}[\beta(X_s,Z_s)]\|^2$ by $\frac{\tilde{K}}{2} (1+\| X_s \|^2)$.}\\
\olivier{Finally we obtain}
\[\Exp[1+\|X_t\|^2]=\Exp[1+\|X_0\|^2] + 2\,(K+\frac{\tilde{K}}{2} + \sqrt{\tilde{K}}) \int_0^t \Exp[1+\|X_s\|^2]\,ds.\]
Using Gronwall's inequality, we deduce the $\L^2$ boundedness of the solutions of the mean-field equations. 
\end{proof}

\olivier{This lemma puts us} in a position to prove the existence and uniqueness theorem:


\begin{proof} 
	We start by showing the existence of solutions, and then prove the uniqueness property. We recall that by application of lemma~\ref{lem:SoluL2}, the 
solutions will all have bounded second order moment.  \\
	\noindent {\it Existence:}\\
	Let \olivier{$X^0=(X^0_t=\{X^{0\,\alpha}_t,\,\alpha=1\cdots P\})\in \mathcal{M}(\mathcal{C})$ be} a given stochastic process, and define the sequence of probability distributions $(X^k)_{k \geq 0}$ on $\M(\C)$ defined by induction by $X^{k+1}=\Phi(X^k)$. \olivier{Define also} a sequence of processes $Z^k$, $k \geq 0$, independent of the \olivier{sequence} of processes $X^k$ and having the same law. \olivier{We note this as ``$X$ and $Z$ i.i.d.'' below.} We stop the processes at the time $\tau^k_{U}$ the first hitting time of the \olivier{norm} of $X^k$ to the constant value $U$. For convenience, we will make an abuse of notation in the proof and denote $X^k_t=X^k_{t\wedge \tau^k_U}$.  \olivier{This implies that $X^k_t$ belongs to $B_U^d$ for all times $t \in [0,T]$.}

	
	\olivier{Using the notations introduced for equation~\eqref{eq:CompactMFE} we decompose the difference $X^{k+1}_t-X^k_t$ as follows:
	\begin{align*}
		X^{k+1}_t-X^k_t & =\underbrace{\int_0^t \Big (f(s,X^{k}_s) - f(s,X^{k-1}_s)\Big)\,ds}_{A_t} \\
		& \quad + \underbrace{\int_0^t \Exp_Z \Big [ b(X^{k}_s, Z^{k}_s) - b(X^{k-1}_s, Z^{k-1}_s) \Big] \, ds}_{B_t} \\ 
		 & \quad +\underbrace{\int_0^t \Big ( g(s,X^{k}_s)-g(s,X^{k-1}_s) \Big)\, dW_s}_{C_t}\\
		 & \quad + \underbrace{\int_0^t \Exp_Z \Big [ \beta(X^{k}_s, Z^{k}_s) - \beta(X^{k-1}_s, Z^{k-1}_s) \Big] \, \odot dB_s}_{D_t} \\
	\end{align*}
        }
        \olivier{and find an upperbound for $M^k_t:=\Exp\Big[\sup_{s\leq t} \Vert X^{k+1}_s-X^k_s \Vert^2 \Big] $ by finding
        upperbounds for  the corresponding norms of}
	the four terms $A_t$, $B_t$, $C_t$ and $D_t$. \olivier{Applying the discrete Cauchy-Schwartz} inequality we have:
	\[\Vert X^{k+1}_t-X^k_t \Vert^2 \leq 4 \Big( \Vert A_t\Vert^2 + \Vert B_t\Vert^2 + \Vert C_t\Vert^2+ \Vert D_t\Vert^2\Big)\]
	and treat each term separately.  \olivier{The upperbounds for the first two terms are obtained using the Cauchy-Schwartz inequality, those of the last two terms using the Burkholder-Davis-Gundy martingale moment inequality.}
	
	\olivier{The term $A_t$ is easily controlled using the Cauchy-Schwarz inequality. In detail
	\begin{align*}
          \| A_s \|^2 & = \left\| \int_0^s \left(f(u,X_u^k)-f(u,X_u^{k-1})\right) \, du      \right\|^2\\
           \text{(Cauchy-Schwartz)} & \leq T \int_0^s \left\| f(u,X_u^k)-f(u,X_u^{k-1})  \right\|^2\,du\\
          \text{(assumption~\ref{Assump:LocLipsch})} & \leq K_U^2 T \int_0^s \left\|X_u^k-X_u^{k-1}\right\|^2\,du 
          \end{align*}
Taking the $\sup$ of both sides of the last inequality we obtain
\[
\sup_{s \leq t}  \| A_s \|^2 \leq  K_U^2 T \int_0^t \left\|X_s^k-X_s^{k-1}\right\|^2\,ds \leq  K_U^2 T \int_0^t \sup_{u \leq s} \left\|X_u^k-X_u^{k-1}\right\|^2\,ds,
\]
from which follows the fact that
\[
\Exp\left[\sup_{s \leq t}  \| A_s \|^2\right] \leq K_U^2 T \int_0^t \Exp\left[\sup_{u \leq s} \left\|X_u^k-X_u^{k-1}\right\|^2 \right]\,ds
\]
}
\olivier{The term $B_t$ is controlled using the Cauchy-Schwartz inequality, the uniform continuity of the function $b$ and the fact that the processes $X$ and $Z$ are independent with the same law. In detail
\begin{align*}
\|B_s\|^2 &= \left\| \int_0^s \Exp_{Z} \left[ b(X_u^k,Z_u^k)- b(X_u^{k-1},Z_u^{k-1}) \right] \, du \right\|^2\\
\text{(Cauchy-Schwartz)} &\leq \john{T\int_0^s \left\| \Exp_{Z} \left[ b(X_u^k,Z_u^k)- b(X_u^{k-1},Z_u^{k-1}) \right] \right\|^2 \, du}\\
\text{(Cauchy-Schwartz)} &\leq T\int_0^s  \Exp_{Z} \left[ \left\| b(X_u^k,Z_u^k)- b(X_u^{k-1},Z_u^{k-1})\right\|^2\right] \, du\\
\text{(assumption \ref{Assump:LocLipschb})} & \leq 2TL^2 \int_0^s  \Exp_{Z}\left[  \left\|X_u^k-X_u^{k-1}\right\|^2+ \left\|Z_u^k-Z_u^{k-1}\right\|^2\right]\, du\\
\text{($X$ and $Z$ i.i.d.)}& \leq 2TL^2 \int_0^s\left( \left\|X_u^k-X_u^{k-1}\right\|^2+ \Exp \left[ \left\|X_u^k-X_u^{k-1}\right\|^2\right]\right)\, du\\
\end{align*}
}
\olivier{
Taking the $\sup$ of both sides of the last inequality we obtain
\begin{align*}
\sup_{s \leq t} \|B_s\|^2 &\leq  2TL^2 \int_0^t \left( \left\|X_s^k-X_s^{k-1}\right\|^2+ \Exp \left[ \left\|X_s^k-X_s^{k-1}\right\|^2\right]\right)\, ds\\
& \john{\leq  2TL^2 \int_0^t \left( \sup_{u \leq s} \left\|X_u^k-X_u^{k-1}\right\|^2+ \sup_{u \leq s} \Exp \left[ \left\|X_u^k-X_u^{k-1}\right\|^2\right]\right)\, ds}\\
& \leq  2TL^2 \int_0^t \left( \sup_{u \leq s} \left\|X_u^k-X_u^{k-1}\right\|^2+  \Exp \left[\sup_{u \leq s} \left\|X_u^k-X_u^{k-1}\right\|^2\right]\right)\, ds,
\end{align*}
}
\olivier{from which follows the fact that
\[
\Exp\left[\sup_{s \leq t} \|B_s\|^2 \right] \leq 4TL^2 \int_0^t \Exp \left[\sup_{u \leq s} \left\|X_u^k-X_u^{k-1}\right\|^2\right]\, ds
\]
}

\olivier{The term $C_t$ is controlled using the fact that it is a martingale and applying the Burkholder-Davis-Gundy martingale moment inequality:
\begin{align*}
 \Exp\left[\sup_{s \leq t} \|C_s\|^2 \right] 
 &\leq 4 \int_0^t \Exp\left[ \left\| g(s,X^{k}_s)-g(s,X^{k-1}_s) \right\|^2 \right]\,ds\\
 \text{(assumption~\ref{Assump:LocLipsch})} &\leq 4K_U^2 \int_0^t  \Exp\left[ \left\|X^{k}_s-X^{k-1}_s \right\|^2 \right]\,ds\\
&\leq 4K_U^2 \int_0^t \Exp\left[\sup_{u \leq s}  \left\| X^{k}_u-X^{k-1}_u \right\|^2 \right]\,ds\\
\end{align*} 
The term $D_t$ is also controlled using the fact that it is a martingale and applying the Burkholder-Davis-Gundy martingale moment inequality:

\begin{align*}
 \Exp\left[\sup_{s \leq t} \|D_t\|^2 \right] 
&\leq  4  \int_0^t \Exp\left[ \left\|\Exp_Z \left[ \beta(X^{k}_s, Z^{k}_s) - \beta(X^{k-1}_s, Z^{k-1}_s) \right] \right\|^2 \right]\,ds\\
\text{(Cauchy-Schwartz)} &\leq 4  \int_0^t \Exp\left[\Exp_Z \left[ \left\|\beta(X^{k}_s, Z^{k}_s) - \beta(X^{k-1}_s, Z^{k-1}_s)\right\|^2  \right] \right]\,ds\\
\text{(assumption \ref{Assump:LocLipschb})} & \leq 8 L^2  \int_0^t \Exp\left[\Exp_Z \left[ \left\|X^{k}_s-X^{k-1}_s\right\|^2+\left\|Z^{k}_s-Z^{k-1}_s\right\|^2\right]\right]\,ds\\
\text{($X$ and $Z$ i.i.d.)}& \leq 16 L^2  \int_0^t  \Exp\left[\left\|X^{k}_s-X^{k-1}_s\right\|^2\right]\,ds\\
& \leq 16 L^2  \int_0^t  \Exp\left[\sup_{u \leq s} \left\|X^{k}_u-X^{k-1}_u\right\|^2\right]\,ds
\end{align*}
}
	Putting all this together, we get:
	\begin{multline}\label{eq:Bounds}
		\Exp\Big[\sup_{s\leq t} \Vert X^{k+1}_s-X^k_s \Vert^2 \Big] \leq 4(T+4)(K_U^2 + 4L^2) \int_0^t \Exp[ \sup_{u\leq s} \Vert X^{k}_u - X^{k-1}_u \Vert^2] ds
	\end{multline}

From the relation $M^k_t \leq K'' \int_0^t M^{k-1}_s\,ds$ with \olivier{$K''=4(T+4)(K_U^2 + 4L^2)$}, we get by an immediate recursion:
\begin{align}
	\nonumber M^k_t &\leq (K'')^k \int_0^t\int_0^{s_1}\cdots \int_0^{s_{k-1}}M^0_{s_k}\;ds_1\ldots ds_k\\
	\label{eq:CauchySeq} & \leq \frac{(K'')^k\, t^k}{k!} M^0_T
\end{align}
and \olivier{$M^0_T$} is finite because the processes are bounded. Bienaym\'e-Tchebychev inequality \olivier{and~\eqref{eq:CauchySeq}} now give:
\[
P\left(\sup_{s\leq t} \Vert X^{k+1}_s-X^k_s \Vert^{\olivier{2}} > \frac 1 {2^{\olivier{2(k+1)}}}\right) \leq 4\frac{(4\,K'' t)^k}{k!} M^0_T
\]
ans this upper bound is the term of a convergent series. From the Borel-Cantelli lemma stems that for almost any $\omega\in \Omega$ there exists a positive integer \olivier{$k_0(\omega)$ ($\omega$ denotes an element of the probability space $\Omega$)} such that
\[
\sup_{s\leq t} \Vert X^{k+1}_s-X^k_s \Vert^{\olivier{2}} \leq \frac 1 {2^{\olivier{2(k+1)}}} \qquad \forall \; \olivier{k\geq k_0(\omega)},
\]
\olivier{and hence
\[
\sup_{s\leq t} \Vert X^{k+1}_s-X^k_s \Vert \leq \frac{1}{2^{k+1}} \qquad \forall \; k\geq k_0(\omega).
\]
}
It follows that with probability $1$ the partial sums:
\[X^0_t+\sum_{k=0}^n (X^{k+1}_t-X^k_t) = X^n_t\]
are uniformly (in $t\in [0,T]$) convergent. Denote by $\bar{X}_t$ the thus defined limit. It is clearly continuous and $\F_t$ adapted. On the other hand, \olivier{the inequality~\eqref{eq:CauchySeq} shows} that for every fixed $t$, the sequence $\{X^n_t\}_{n \geq 1}$ \olivier{is} a Cauchy sequence in $\L^2$. 
\olivier{Lemma~\ref{lem:SoluL2} shows} that $\bar{X} \in \M^2(\C)$. 

\john{It is easy to show using routine methods that $\bar{X}$ indeed satisfies the equation~\eqref{eq:MFE}}.

\olivier{To complete the proof we use} a standard truncation property. This method replaces the function $f$ by the truncated function:
\[f_U(t,x)= \begin{cases} f(t,x) & \Vert x \Vert \leq U \\ f(t,Ux/\Vert x \Vert ) & \Vert x \Vert > U \end{cases}\]
and similarly for $g$. The functions $f_U$ and $g_U$ are globally Lipchitz-continuous, hence the previous proof shows that there exists a unique solution $\bar{X}_U$ to equations~\eqref{eq:MFE} associated with the truncated functions. 
\olivier{This solution satisfies the equation
\begin{multline}\label{eq:integral}
	\bar{X}_U(t)=X_0+\int_0^t\left(f_U(t,\bar{X}_U(s)) + \Exp_{\bar{Z}}[b(\bar{X}_U(s), \bar{Z}_s)] \right)\,ds +\int_0^t g_U(t,\bar{X}_U(s))\,dW_s \\
	+ \int_0^t\Exp_{\bar{Z}}\left[\beta(\bar{X}_U(s), \bar{Z}_s)\right] \odot dB_s\quad t \in [0,\,T]
\end{multline}
}
Let us now define the stopping time:
\[
\tau_U=\inf\{t\in [0,T], \Vert \bar{X}_U(t)\Vert \geq U\}.
\]
\olivier{It is easy to show that 
\begin{equation}\label{eq:increasing}
\bar{X}_U(t)=\bar{X}_{U'}(t) \quad \text{if} \quad  0\leq t \leq \tau_U,\,U' \geq U,
\end{equation}
}
implying that the sequence of stopping times $\tau_U$ is increasing. 
\olivier{Using lemma~\ref{lem:SoluL2} which implies that the solution to \eqref{eq:MFE} is almost surely bounded, for almost all $\omega \in \Omega$, there exists $U_0(\omega)$ such that $\tau_U=T$ for all $U\geq U_0$. Now define $\bar{X}(t)=\bar{X}_{U_0}(t)$, $t \in [0,\,T]$. Because of~\eqref{eq:increasing} we have $\bar{X}(t \wedge \tau_U)=\bar{X}_U(t \wedge \tau_U)$ and it follows from~\eqref{eq:integral} that
\begin{align*}
	\bar{X}(t\wedge\tau_U) &= X_0 +  \int_{0}^{t\wedge \tau_U} \Big (f_U(s,\bar{X}_s) + \Exp_{\bar{Z}} [b(\bar{X}_s,\bar{Z}_s)] \Big) \,ds + \int_0^{t\wedge \tau_U} g_U(s,\bar{X}_s)\,dW_s\\
        & \quad \quad +\int_0^{t\wedge \tau_U}  \Exp_{\bar{Z}}\left[\beta(\bar{X}_U(s), \bar{Z}_s)\right] \odot dB_s\\
	&= X_0 +  \int_{0}^{t\wedge \tau_U} \Big (f(s,\bar{X}_s) + \Exp_{\bar{Z}} [b(\bar{X}_s,\bar{Z}_s)] \Big) \,ds + \int_0^{t\wedge \tau_U} g(s,\bar{X}_s)\,dW_s\\
        & \quad \quad +\int_0^{t\wedge \tau_U}  \Exp_{\bar{Z}}\left[\beta(\bar{X}_U(s), \bar{Z}_s)\right] \odot dB_s
\end{align*}
and letting $U\to \infty$ we have shown the existence of solution to equation~\eqref{eq:MFE} which by lemma~\ref{lem:SoluL2} is square integrable. }

\noindent{\it Uniqueness:}\\
Assume that $X$ and $Y$ are two solutions of the mean-field equations \eqref{eq:MFE}. From lemma \ref{lem:SoluL2}, we know that both solutions are in $\M^2(\C)$. Moreover, using the bound \eqref{eq:Bounds}, we directly obtain the inequality:
\[
	\Exp\Big[\sup_{s\leq t} \Vert X_s-Y_s \Vert^2 \Big] \leq K'' \int_0^t \Exp \Big[ \sup_{u\leq s} \Vert X_u - Y_u \Vert^2 \Big]  \, ds
\]
which by Gronwall's theorem directly implies \[\Exp\Big[\sup_{s\leq t} \Vert X_s-Y_s \Vert^2 \Big]=0\] which ends the proof.
\end{proof}
We have proved the well-posedness of the mean-field equations. It remains to show that the solutions to the network equations converge to the solutions of the
mean-field equations. This is what is achieved in the next theorem.
\begin{theorem}\label{thm:Principal2}
Under the assumptions \ref{Assump:LocLipsch} to \ref{Assump:MonotoneGrowth}  
the following holds true:
	\begin{itemize}
		\item {\bf Convergence\footnote{The type of convergence is specified in the proof given in Appendix~\ref{section:proofs}.:}} For each neuron $i$ of population $\alpha$, the law of the multidimensional process $X^{i,N}$ converges towards the law of the solution of the mean-field equation related to population $\alpha$, namely $\bar{X}^{\alpha}$. 
		\item {\bf Propagation of chaos:} For any $k\in \N^*$, and any $k$-uplet $(i_1, \ldots, i_k)$, the law of the process $(X^{i_1,N}_t, \ldots, X^{i_n,N}_t, t\leq T)$ converges towards\footnote{\olivier{The notation $m_t^\alpha$ was introduced right after equation~\eqref{eq:MFE}.}} \olivier{$m_t^{p(i_1)}\otimes\ldots\otimes m_t^{p(i_n)}$}, i.e. the asymptotic processes have the law of the solution of the mean-field equations and are all independent. 
	\end{itemize}
\end{theorem}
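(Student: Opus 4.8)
The plan is to use the coupling method of Sznitman. For each neuron $i$ with $p(i)=\alpha$, I would introduce an associated process $\bar{X}^i$ solving the mean-field equation~\eqref{eq:MFE} for population $\alpha$, driven by the \emph{same} Brownian motions $W^i$ and $(B^{i\gamma})_\gamma$ and started from the \emph{same} initial condition $X^{i,N}_0$ as the $i$-th particle in~\eqref{eq:Network}. By Theorem~\ref{thm:Principal1} each such $\bar{X}^i$ is well defined, and because the driving noises and initial data are independent across $i$ (and identically distributed within a population), the family $(\bar{X}^i)_i$ consists of processes that are independent across neurons and identically distributed within each population, with $\bar{X}^i$ having law $m^{\alpha}$. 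The whole problem then reduces to controlling the pathwise $\mathbb{L}^2$ distance between the genuine particle $X^{i,N}$ and its mean-field companion $\bar{X}^i$.

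To estimate $\Exp[\sup_{s\le t}\Vert X^{i,N}_s - \bar{X}^i_s\Vert^2]$ I would subtract the two equations and treat the resulting drift, interaction, diffusion and noisy-interaction terms exactly as in the existence proof of Theorem~\ref{thm:Principal1}, using Cauchy--Schwarz on the bounded-variation parts and Burkholder--Davis--Gundy on the stochastic integrals. The only genuinely new contribution is the interaction term, which I would split as
\[
\frac{1}{N_\gamma}\sum_{j:p(j)=\gamma} b_{\alpha\gamma}(X^{i,N}_s,X^{j,N}_s) - \Exp_{\bar Z}[b_{\alpha\gamma}(\bar X^i_s,\bar Z^\gamma_s)] = I_s + II_s,
\]
where $I_s = \frac{1}{N_\gamma}\sum_j\big(b_{\alpha\gamma}(X^{i,N}_s,X^{j,N}_s)-b_{\alpha\gamma}(\bar X^i_s,\bar X^j_s)\big)$ and $II_s = \frac{1}{N_\gamma}\sum_j b_{\alpha\gamma}(\bar X^i_s,\bar X^j_s)-\Exp_{\bar Z}[b_{\alpha\gamma}(\bar X^i_s,\bar Z^\gamma_s)]$. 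By assumption~\ref{Assump:LocLipschb} the term $I_s$ is dominated by $\frac{L}{N_\gamma}\sum_j(\Vert X^{i,N}_s-\bar X^i_s\Vert + \Vert X^{j,N}_s-\bar X^j_s\Vert)$, which feeds back into the Gronwall argument. The term $II_s$ is the crucial one: conditionally on $\bar X^i_s$, the summands $b_{\alpha\gamma}(\bar X^i_s,\bar X^j_s)$ for $j\neq i$ are independent with conditional mean exactly $\Exp_{\bar Z}[b_{\alpha\gamma}(\bar X^i_s,\bar Z^\gamma_s)]$, so $II_s$ is a normalized sum of centered independent variables; using the linear growth bound~\ref{Assump:bBound} together with the $\mathbb{L}^2$ bound of Lemma~\ref{lem:SoluL2}, its second moment is $O(1/N_\gamma)$ (the single $j=i$ term also being $O(1/N_\gamma)$). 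The same splitting applies verbatim to the noisy-interaction term with $\beta_{\alpha\gamma}$.

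Since $f_\alpha$ and $g_\alpha$ are only locally Lipschitz, I would carry out the above estimate on the processes stopped at $\tau_U = \inf\{t : \Vert X^{i,N}_t\Vert \vee \Vert \bar X^i_t\Vert \ge U\}$, where the Lipschitz constant is $K_U$, and then remove the truncation by letting $U\to\infty$, exactly as in Theorem~\ref{thm:Principal1}; the $\mathbb{L}^2$ bound of Lemma~\ref{lem:SoluL2} guarantees a uniform-in-$N$ control of the moments, so that the contribution of the event $\{\tau_U<T\}$ is negligible. Exploiting exchangeability within each population (so that $\Exp[\sup_{s\le t}\Vert X^{j,N}_s-\bar X^j_s\Vert^2]$ depends only on $p(j)$) lets me close the system of inequalities across the $P$ populations, and Gronwall's lemma then yields
\[
\max_i \Exp\Big[\sup_{s\le t}\Vert X^{i,N}_s-\bar X^i_s\Vert^2\Big] \le \frac{C(T)}{\min_\gamma N_\gamma}\xrightarrow[N\to\infty]{}0.
\]
This is pathwise $\mathbb{L}^2$ convergence, which is the precise sense of the footnote in the statement and which implies the claimed convergence in law of $X^{i,N}$ to $\bar X^{p(i)}$.

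Finally, propagation of chaos is a direct consequence: for any fixed $k$-uplet $(i_1,\dots,i_k)$ the coupled vector $(\bar X^{i_1},\dots,\bar X^{i_k})$ has by construction the product law $m^{p(i_1)}\otimes\cdots\otimes m^{p(i_k)}$, and the $\mathbb{L}^2$ estimate above shows that $(X^{i_1,N},\dots,X^{i_k,N})$ converges to this same vector in $\mathbb{L}^2$, hence in law. I expect the main obstacle to be the interplay between the localization required by the merely locally Lipschitz, monotone-growth dynamics~\ref{Assump:LocLipsch}--\ref{Assump:MonotoneGrowth} and the fluctuation estimate on $II_s$: the stopping-time argument must be arranged so that the $O(1/N_\gamma)$ control survives the passage $U\to\infty$, which is precisely where Lemma~\ref{lem:SoluL2} and the uniform moment bounds do the essential work.
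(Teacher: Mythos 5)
Your proposal is correct and follows essentially the same route as the paper: the Sznitman coupling with $\bar{X}^i$ driven by the same Brownian motions and initial condition, Lipschitz and Burkholder--Davis--Gundy estimates fed into a Gronwall argument closed by exchangeability, an $O(1/N_\gamma)$ law-of-large-numbers estimate for the fluctuation term (your conditional-centering argument is exactly the paper's computation showing the cross terms vanish by independence and Fubini, leaving $O(N_\gamma)$ bounded terms controlled by Lemma~\ref{lem:SoluL2} and assumption~\ref{Assump:bBound}), a stopping-time/truncation argument for the merely locally Lipschitz dynamics, and the product-law observation for propagation of chaos. The only cosmetic difference is that you split the interaction error into two pieces where the paper uses three (its terms $C_t$, $D_t$, $E_t$, and likewise $F_t$, $G_t$, $H_t$ for the noisy interactions), which changes nothing of substance.
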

This theorem has \olivier{important} implications in neuroscience that we discuss in section~\ref{section:Discussion}. Its proof is given in Appendix~\ref{section:proofs}.

\section{Numerical simulations}\label{section:numerics}
At this point, we have provided a compact description of the activity of the network when the number of neurons tends to infinity. However, the structure of the solutions of these equations is complicated to understand from the implicit mean-field equations~\eqref{eq:MFE} and of their variants (such as the McKean-Vlasov-Fokker-Planck equations~\eqref{eq:McKeanVlasofFP}). In this section, we present some classical ways to numerically approximate the solutions to these equations and give some indications about the rate of convergence and the accuracy of the simulation. \olivier{These numerical schemes allow us to compute and visualise the solutions. We then compare  the results of the two schemes for a network of Fitzhugh-Nagumo neurons belonging to a single population and show their good agreement.}

\olivier{The main difficulty one faces when developing numerical schemes for equations~\eqref{eq:MFE} and~\eqref{eq:McKeanVlasofFP} is that they are non local. By this we mean that, in the case of the McKean-Vlasov equations, they contain the expectation of a certain function under the law of the solution to the equations, see~\eqref{eq:MFE}. In the case of the corresponding Fokker-Planck equation, it contains integrals of the probability density functions which is solution to the equation, see~\eqref{eq:McKeanVlasofFP}.}   

\subsection{Numerical simulations of the McKean-Vlasov equations}\label{subsection:mckv}
The fact that McKean-Vlasov equations involve an expectation of a certain function under the law of the solution of the equation makes them particularly hard to simulate directly. One is often reduced to use Monte-Carlo simulations to compute this expectation, which amounts to simulating the solution of the network equations themselves (see~\cite{talay-vaillant:03}). \olivier{This is the method we used. In its simplest fashion, it consists of a Monte-Carlo simulation where one  solves numerically the $N$ network equations~\eqref{eq:Network} with the classical Euler-Maruyama method a number of times with different initial conditions, and averages the trajectories of the solutions over the number of simulations.}

\olivier{In detail let $\Delta t>0$ and $N\in \N^*$. The discrete-time dynamics implemented in the stochastic numerical simulations consists in simulating $M$ times a $P$-population discrete-time process $(\tilde{X}^i_{n}, n \leq T/\Delta t, i=1 \cdots N)$, solution of the recursion, for $i$ in population $\alpha$:
\begin{multline}\label{eq:NumericalScheme}
	\tilde{X}^{i,r}_{n+1} = \tilde{X}^{i,r}_{n} + \Delta t \Big\{f_{\alpha}(t,\tilde{X}^{i,r}_{n}) \, dt + \sum_{\gamma=1}^P \frac{1}{N_{\gamma}} \sum_{j=1, p(j)=\gamma}^{N_{\gamma}} b_{\alpha\gamma}(\tilde{X}^{i,r}_{n},\tilde{X}^{j,r}_{n})\Big\} \\
	+ \sqrt{\Delta t} \Big\{g_{\alpha}(t,\tilde{X}^{i,r}_{n})  \xi^{i,r}_{n+1}
	+ \sum_{\gamma=1}^P \frac{1}{N_{\gamma}} \sum_{j=1, p(j)=\gamma}^{N_{\gamma}} \beta_{\alpha\gamma}(\tilde{X}^{i,r}_{n} , \tilde{X}^{j,r}_{n}) \cdot \zeta^{i\gamma}_{n+1}\Big\}
\end{multline}
where $\xi^{i,r}_{n}$ and $\zeta^{i\gamma,r}_{n}$ are independent $d$- and $\delta$-dimensional standard normal random variables.} \olivier{The initial conditions $\tilde{X}^{i,r}_1$, $i=1,\cdots,N$ are drawn independently from the same law within each population for each Monte-Carlo simulation $r=1,\cdots,M$. One then chooses one neuron $i_\alpha$ in each population $\alpha=1,\cdots,P$.}
\olivier{If the size $N$ of the population is large enough, theorem~\ref{thm:Principal2} states that the law, noted $p_\alpha(t,X)$,  of $X^{i_\alpha}$ should be close to that of the solution $\bar{X}^\alpha$ of the mean-field equations, for $\alpha=1,\cdots,P$. Hence, in effect, simulating the network is a good approximation, see below, of the simulation of the mean-field or McKean-Vlasov equations \cite{bossy-talay:97,talay-vaillant:03}. An approximation of $p_\alpha(t,X)$ can be obtained from the Monte-Carlo simulations by quantizing the phase space and incrementing the count of each bin whenever the trajectory of the $i_\alpha$ neuron at time $t$ falls into that particular bin. The resulting histogram can then be compared to the solution of the McKean-Vlasov-Fokker-Planck equation~\eqref{eq:McKeanVlasofFP} corresponding to population $\alpha$ whose numerical solution is described next.
}

The mean square error between the solution of the numerical recursion~\eqref{eq:NumericalScheme} $\tilde{X}^i_{n}$ and the solution of the mean-field equations~\eqref{eq:MFE} $\bar{X}^i_{n\Delta t}$ is of order $O(\sqrt{\Delta t} + 1/\sqrt{N})$, the first term being related to the error made by approximating the solution of the network of size $N$, $X^{i,N}_{n\Delta t}$ by an Euler-Maruyama method, and the second term to the convergence of $X^{i,N}_{n\Delta t}$ towards the mean-field equation $\bar{X}^i_{n\Delta t}$ \johnNew{ when considering globally Lipschitz-continuous dynamics (see proof of theorem~\ref{thm:Principal2} in appendix~\ref{section:proofs}).}

\subsection{Numerical simulations of the McKean-Vlasov-Fokker-Planck equation}
For solving the McKean-Vlasov-Fokker-Planck equation~\eqref{eq:McKeanVlasofFP} we have used the  {\it method of lines} ~\cite{schiesser:91,schiesser-griffiths:09}.
Its basic idea is to discretize the phase space and to keep the time continuous.
In this way \olivier{the values $p_\alpha(t,X)$, $\alpha=1,\cdots,P$ of  the probability density function  of population $\alpha$ at each sample point $X$ of the phase space are the solutions of $P$ ordinary differential equations where the independent variable is the time.}
Each sample point in the phase space generates $P$ ODEs, resulting in a system of coupled ODEs. The solutions to this system yield the values of the probability density functions $p_\alpha$ solution of~\eqref{eq:McKeanVlasofFP} at the sample points.
The computation of the integral terms that appear in the McKean-Vlasov-Fokker-Planck equation is achieved through a recursive scheme\john{, the  Newton-Cotes method of order 6 \cite{ueberhuber:97}. The dimensionality of the space being large and numerical errors increasing with the dimensionality of the integrand, such precise integration schemes are necessary}. For an arbitrary real function $f$ to be integrated between the values $x_1$ and $x_2$, \john{this} numerical scheme reads:
\begin{multline*}
	\int_{x_{1}}^{x_{2}}f(x)dx \approx \frac{5}{288}\Delta x\sum_{i=1}^{M/5}[19f(x_{1}+(5i-5)\Delta x)+75f(x_{1}+(5i-4)\Delta x)+ \\
        +50f(x_{1}+(5i-3)\Delta x)+50f(x_{1}+(5i-2)\Delta x)+75f(x_{1}+(5i-1)\Delta x)+19f(x_{1}+5i\Delta x)]
\end{multline*}
where $\Delta x$ is the integration step and $M=(x_{2}-x_{1})/\Delta x$ is chosen to be an integer multiple of 5.

The discretization of the derivatives with respect to the phase space parameters is done through the following fourth-order central difference scheme:
\begin{equation*}
	\frac{df(x)}{dx} \approx \frac{f(x-2\Delta x)-8f(x-\Delta x)+8f(x+\Delta x)-f(x+2\Delta x)}{12\Delta x},
\end{equation*}
for first-order derivatives, and
\begin{equation*}
	\frac{d^{2}f(x)}{dx^{2}} \approx \frac{-f(x-2\Delta x)+16f(x-\Delta x)-30f(x)+16f(x+\Delta x)-f(x+2\Delta x)}{12\Delta x^{2}},
\end{equation*}
for the second-order derivatives, see \cite{morton-mayers:05}.

Finally we have used a Runge-Kutta method of order 2 (RK2) for the numerical integration of the resulting system of ODEs.

\subsection{Comparison between the solutions to the network and the mean-field equations}
As advertised, we illustrate these ideas with the example of a network of 100 Fitzhugh-Nagumo neurons belonging to one, excitatory, population. We choose a finite volume outside of which we assume that the p.d.f., solution of~\eqref{eq:McKeanVlasofFP}, is zero. We then discretize this volume with  $n_{V}n_{W}n_{Y}$ points defined by
\begin{align*}\label{eq:DiscretizationPoints}
	n_{V} &\stackrel{\text{def}}{=}(V_{max}-V_{min})/\Delta V \\
	n_{w} &\stackrel{\text{def}}{=}(w_{max}-w_{min})/\Delta w \\
        n_{y} &\stackrel{\text{def}}{=}(y_{max}-y_{min})/\Delta y
\end{align*}
where $V_{min}$, $V_{max}$, $w_{min}$, $w_{max}$, $y_{min}$, $y_{max}$ define the volume in which we solve the McKean-Vlasov-Fokker-Planck equation and estimate the histogram defined in section~\ref{subsection:mckv} while $\Delta V$, $\Delta w$, $\Delta y$ are the quantization steps in each dimension of the phase space.

In general, the total number of coupled ODEs is the product $P n_{V}n_{w}n_{y}$ (in our case we chose $P=1$). This can become fairly large if we increase the precision of the phase space discretization.
\johnNew{Moreover, increasing the precision of the simulation in the phase space requires, in order to ensure the numerical stability of the method of lines, to decrease the time step $\Delta t$ used in the RK2 scheme. This can strongly impact the efficiency of the numerical method, see section~\ref{subsection:gpus}.}


\olivier{In the simulations shown in the lefthand parts of figures~\ref{fig:FNfocus} and~\ref{fig:FNcycle} we have used  one population of $100$ excitatory FitzHugh-Nagumo neurons connected with chemical synapses. We performed $10,000$ Monte Carlo simulations of the network equations~\eqref{eq:FNNetwork} with the Euler-Maruyama method in order to approximate the probability density. The model for the time variation of the synaptic
weights is the simple model. The p.d.f. $p(0,V,w,y)$ of the initial condition is Gaussian and reads
\begin{equation}\label{eq:InitialConditionPdf}
	p(0,V,w,y) = \frac{1}{(2\pi)^{3/2}\sigma_{V_{0}}\sigma_{w_{0}}\sigma_{y_{0}}}e^{-\frac{(V-\overline{V}_{0})^{2}}{2\sigma_{V_{0}}^{2}}-\frac{(w-\overline{w}_{0})^{2}}{2\sigma_{w_{0}}^{2}}-\frac{(y-\overline{y}_{0})^{2}}{2\sigma_{y_{0}}^{2}}}
\end{equation}
The parameters are given in the first column of table~\ref{tab:Neural-Network-Parameters}. In this table, the parameter $t_{fin}$ is the time at which we stop the computation of the trajectories in the case of the network equations, and the computation of the solution of McKean-Vlasov-Fokker-Planck equation in the case of the mean-field equations. The sequence $[0.5,\,1.5,\,1.8,\,3.0]$ indicates that we compute the solutions at those four time instants corresponding to the four rows of figure~\ref{fig:FNcycle}. The means take two different values corresponding to the two numerical experiments described below. The phase space has been quantized with the parameters shown in the second column of the same table to solve the Fokker-Planck equation. This quantization has also been used to build the histograms that represent the marginal probability densities with respect to the pair $(V,y)$ of coordinates of the state vector of a particular neuron. These histograms have then been interpolated to build the surfaces shown in the lefthand side of figures~\ref{fig:FNfocus} and~\ref{fig:FNcycle}. The parameters of the Fitzhugh-Nagumo model are the same for each neuron of the population: they are shown in the third column of table~\ref{tab:Neural-Network-Parameters}. Two values of the external current are given, corresponding to the two numerical experiments whose results are shown below.}

\olivier{The parameters for the noisy model of maximum conductances of equation~\eqref{eq:weightssimple} are shown in the fourth column of the table. For these values of $\overline{J}$ and $\sigma_{J}$ the probability that the maximum conductances change sign is very small. Finally, the parameters of the chemical synapses are shown in the sixth column. The parameters $\Gamma$ and $\Lambda$ are those of the
$\chi$ function~\eqref{eq:ChiFunction}. The solutions are computed over an interval of $t_{fin}=3$ time units for figure~\ref{fig:FNfocus} and for the four values shown in brackets for figure~\ref{fig:FNcycle} with a time sampling of $\Delta t=0.1$.}\\

\johnNew{The marginals estimated from the trajectories of the network solutions are then compared} with those obtained from the numerical solution of the McKean-Vlasov-Fokker-Planck equation (see figures~\ref{fig:FNfocus} and~\ref{fig:FNcycle} right), using the method of lines explained above, and starting from the same initial conditions~\eqref{eq:InitialConditionPdf} as the neural network.

\begin{center}
\begin{table}[htbp]
\begin{tabular}{| c | c | c | c | c | }
\hline
Initial Condition                            &  Phase space             & FitzHugh-          & Synaptic           &       Synapse \\
                                             &                          & Nagumo            &   Weights          &               \\
\hline
  $t_{fin}=3.0$,                             &          $V_{min}=-3$    & $a=0.7$           & $\overline{J}=1$   &          $V_{rev}=1$  \\
$[0.5, 1.5, 1.8, 3.0]$                       &                          &                   &                    &                      \\
  $\Delta t=0.1$                             &           $V_{max}=3$    & $b=0.8$           &  $\sigma_{J}=0.2$  &        $a_r=1$   \\
$\overline{V}_{0}=-1.472,\ 0.0$              &         $\Delta V=0.1$   & $c=0.08$          &                    &        $a_d=1$  \\
$\sigma_{V_{0}}=0.2$                         &          $w_{min}=-2$    & $I=0,\ 0.7$       &                    &        $T_{max}=1$ \\
$\overline{w}_{0}=-0.965,\ 0.5$              &          $w_{max}=2$     & $\sigma_{ext}=0$  &                    &        $\lambda=0.2$\\
$\sigma_{w_{0}}=0.2$                         &          $\Delta w=0.1$  &                   &                    &        $V_{T}=2$   \\
 $\overline{y}_{0}=0.250,\ 0.3$              &      $y_{min}=0$         &                   &                    &         $\Gamma=0.1$ \\
$\sigma_{y_{0}}=0.05$                        &          $y_{max}=1$     &                   &                    &          $\Lambda=0.5$\\
                                             &        $\Delta y=0.06$   &                   &                    &           \\
\hline
\end{tabular}
\caption{Parameters used in the simulations of the neural network and for solving the McKean-Vlasov-Fokker-Planck equation whose results are shown in figures~\ref{fig:FNfocus} and \ref{fig:FNcycle}, see text.}
\label{tab:Neural-Network-Parameters}
\end{table}
\end{center}

\olivier{In a first series of experiments we choose the external input current value to be $I=0$ (see third column of table~\ref{tab:Neural-Network-Parameters}). This corresponds to a stable fixed point for the isolated Fitzhugh-Nagumo neuron.}\\
\olivier{The initial conditions (see first column of table~\ref{tab:Neural-Network-Parameters}) are chosen in such a way that on average, the initial point is very close to this fixed point. We therefore expect that the solutions of the neural network and the McKean-Vlasov-Fokker-Planck equation will not change much over time and will concentrate their mass around this fixed point. This is what is observed in figure~\ref{fig:FNfocus}, where the simulation of the neural network (lefthand side) is in excellent agreement with the results of the simulation of the McKean-Vlasov-Fokker-Planck equation (righthand side).}
\begin{figure}[htbp]
\centerline{\includegraphics[width=0.5\textwidth]{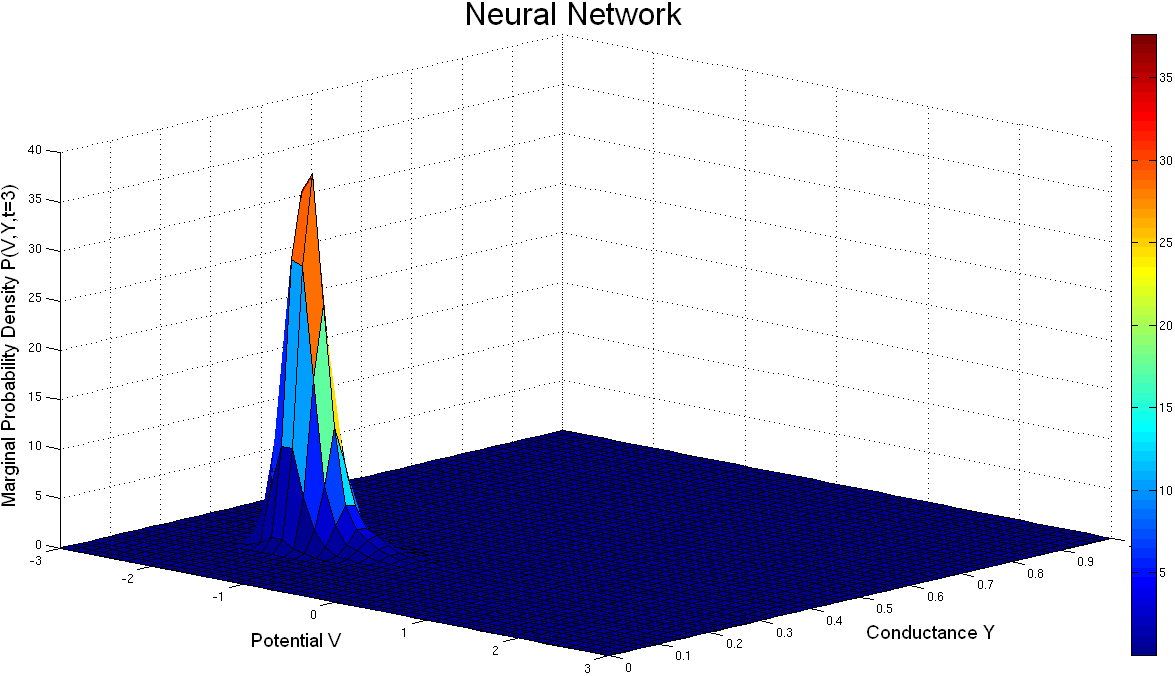}
\includegraphics[width=0.5\textwidth]{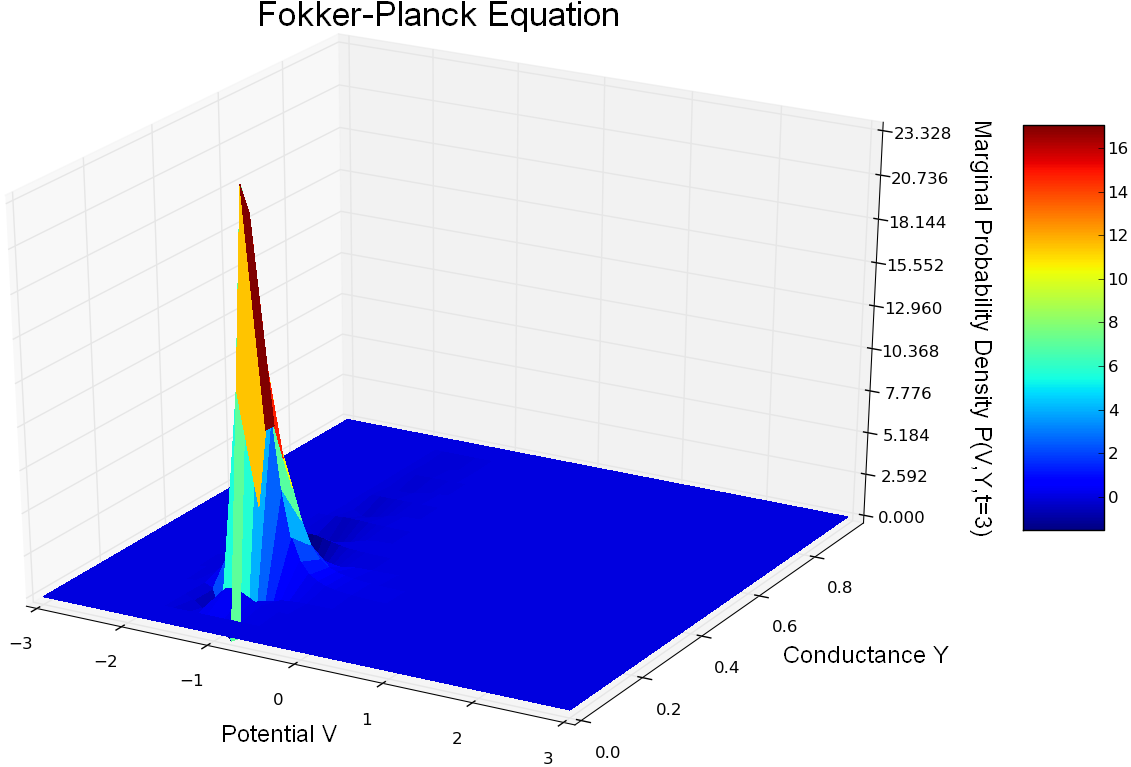}
}
\caption{\johnNew{Joint probability distribution of $(V,y)$, (Left): Monte-Carlo simulation of the McKean-Vlasov equations~\eqref{eq:MFE} and (Right): Simulation of the McKean-Vlasov-Fokker-Planck equation~\eqref{eq:McKeanVlasofFP}. Parameters of table ~\ref{tab:Neural-Network-Parameters}, $I=0$. These parameters corresponds to a stable equilibrium. Initial conditions (first column of table~\ref{tab:Neural-Network-Parameters}) are concentrated around this point. The two distributions are very similar and centered around the equilibrium point, see text.}}
\label{fig:FNfocus}
\end{figure}


\olivier{We then changed the value of the external current to $I=0.7$ (this value corresponds to the existence of a stable limit cycle for the isolated Fitzhugh-Nagumo neuron), keeping the other parameters values the same except for the initial conditions which now take the values $\overline{V}_{0}=0$, $\overline{w}_{0}=0.5$, $\overline{y}_{0}=0.3$.  We have also chosen four values of $t_{fin}$, i.e. $0.5$, $1.5$, $1.8$ and $3.0$, instead of only one in the previous numerical experiment. They correspond to four snapshots of the network dynamics. We therefore expect that the solutions of the neural network and the McKean-Vlasov-Fokker-Planck equation will concentrate their mass around the limit cycle. This is what is observed in figure~\ref{fig:FNcycle}, where the simulation of the neural network (lefthand side) is in excellent agreement with the results of the simulation of the McKean-Vlasov-Fokker-Planck equation (righthand side). Note that the densities display two peaks. These two peaks correspond to the fact that, depending upon the position of the initial condition with respect to the nullclines of the Fitzhugh-Nagumo equations, the points in the phase space follow two different classes of trajectories, as shown in figure~\ref{fig:PhaseSpaceCycle}. The two peaks then rotate along the limit cycle, see also section~\ref{subsection:gpus}.}\\
\begin{figure}[htbp]

\centerline{
\includegraphics[width=0.45\textwidth]{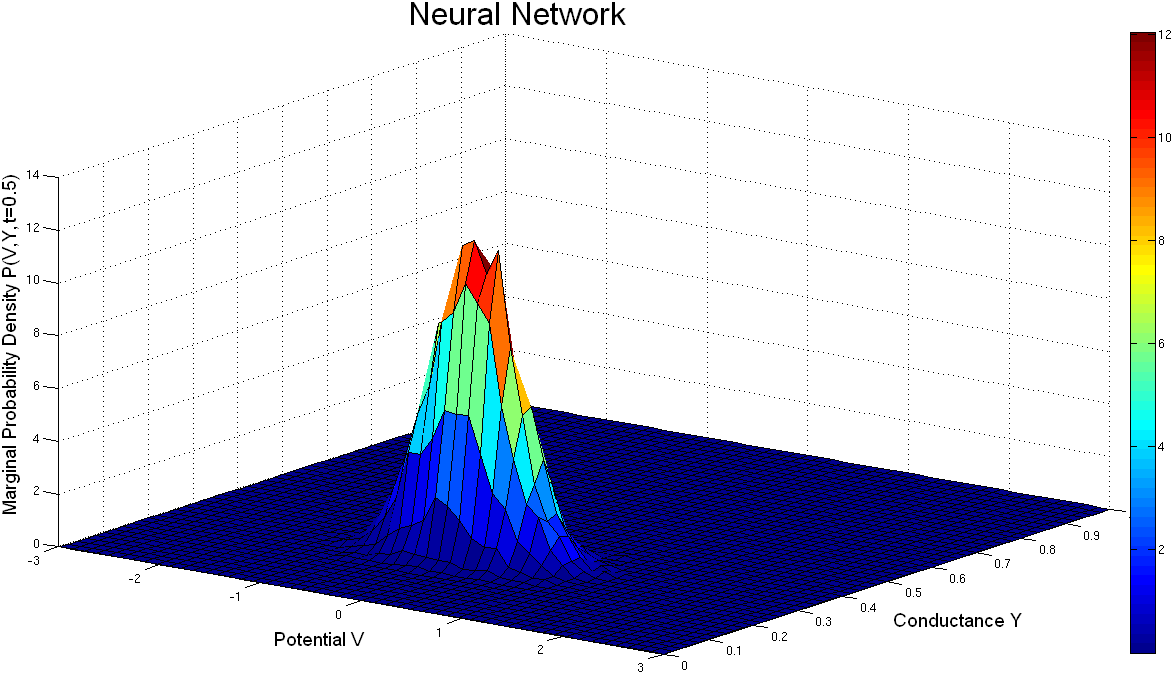}
\includegraphics[width=0.45\textwidth]{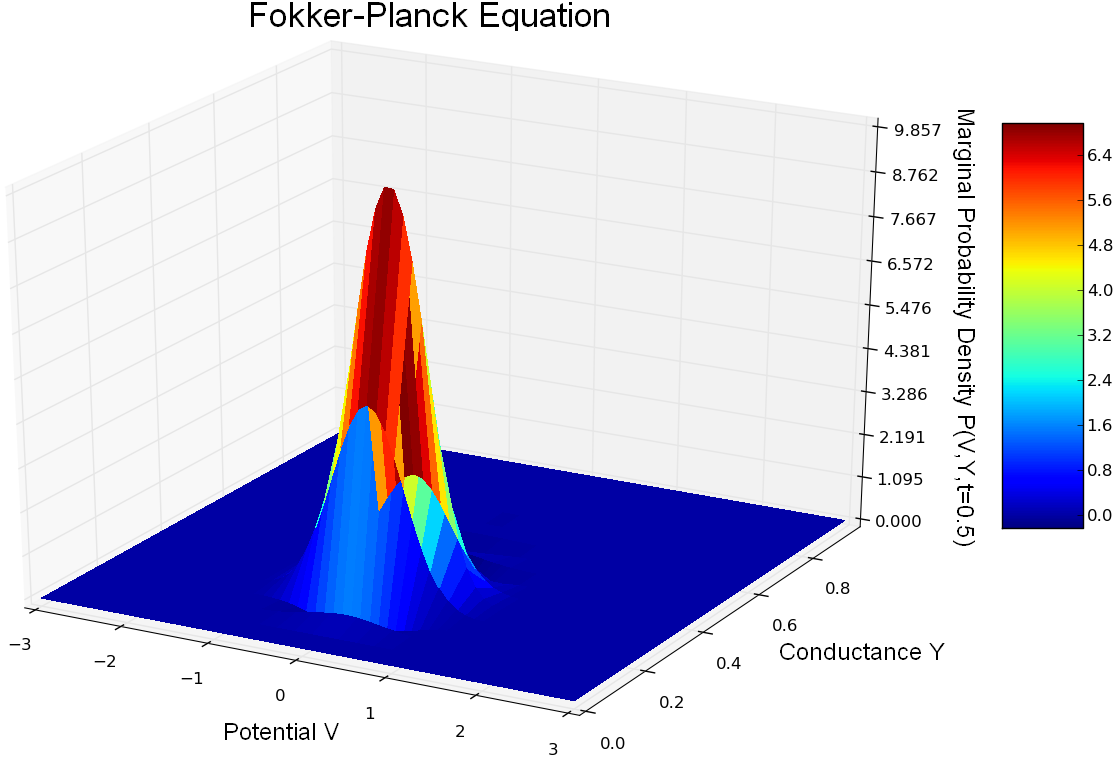}
}

\centerline{
\includegraphics[width=0.45\textwidth]{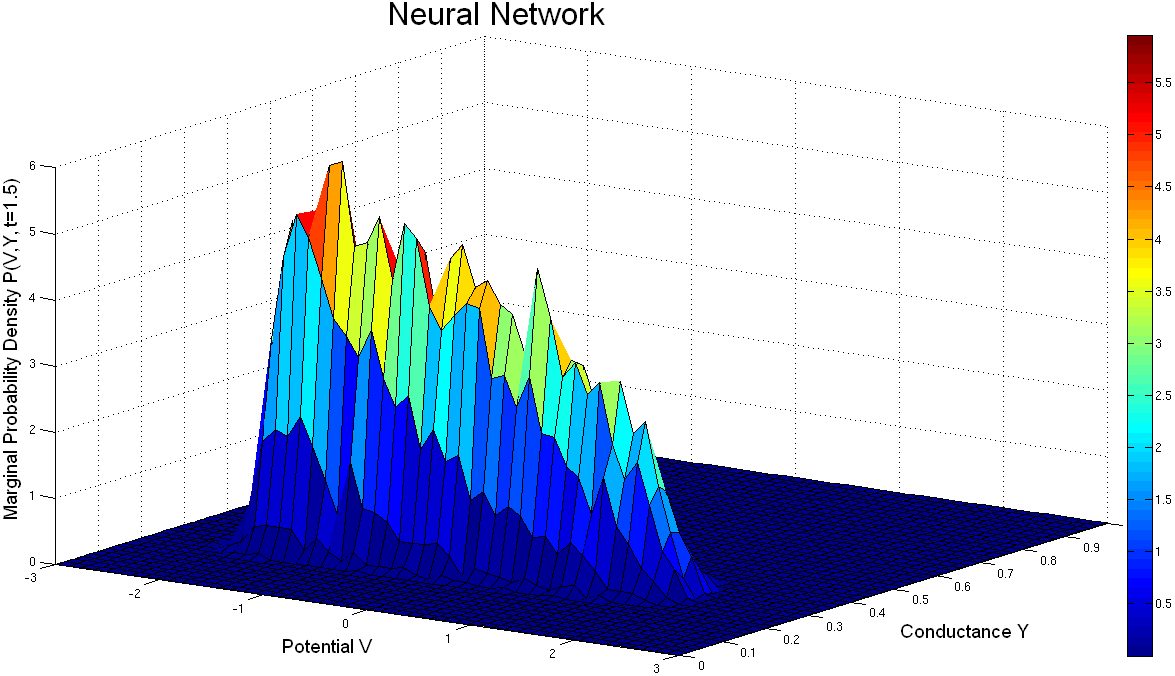}
\includegraphics[width=0.45\textwidth]{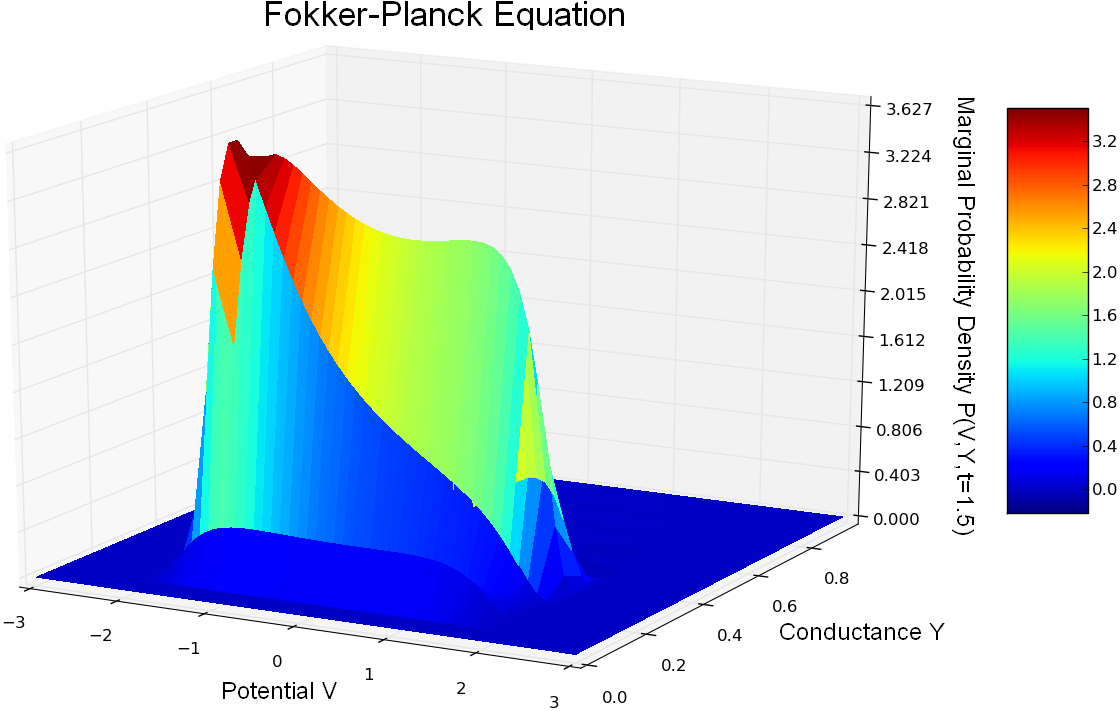}
}

\centerline{
\includegraphics[width=0.45\textwidth]{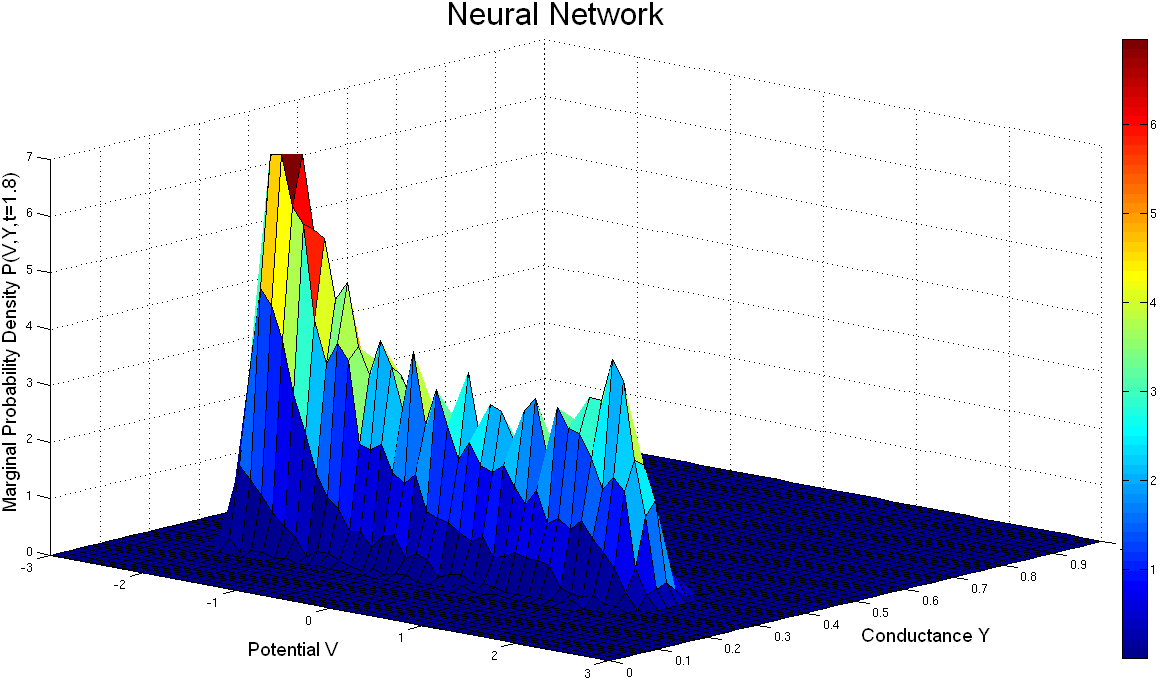}
\includegraphics[width=0.45\textwidth]{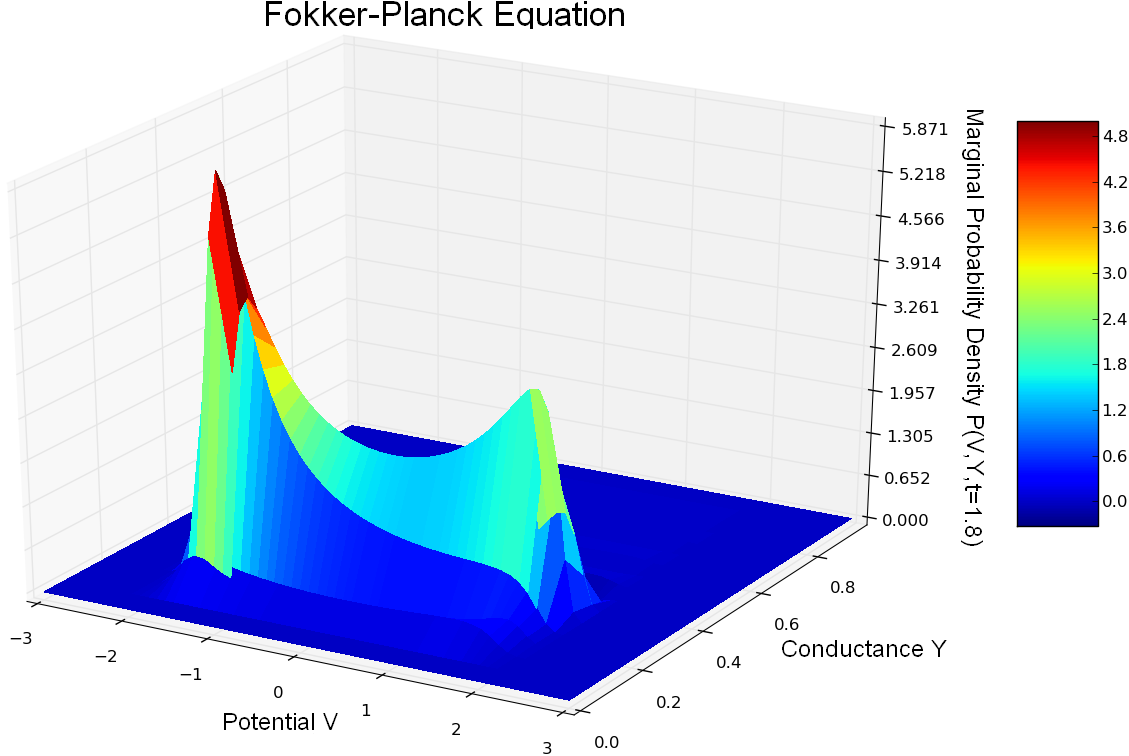}
}

\centerline{
\includegraphics[width=0.45\textwidth]{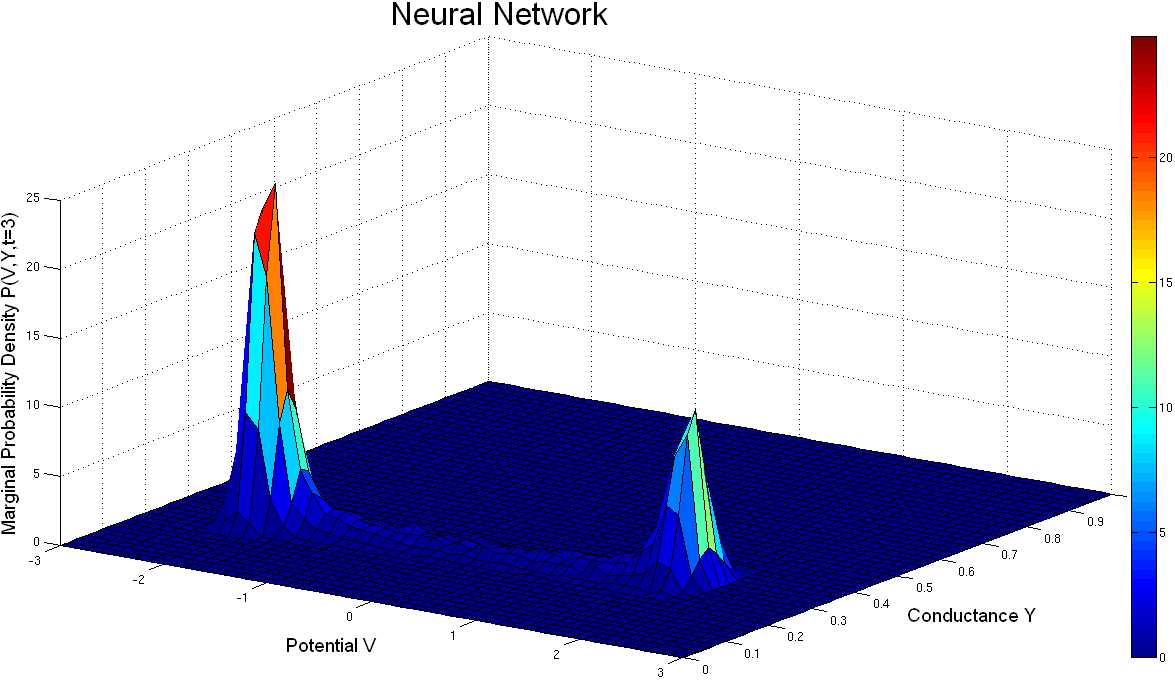}
\includegraphics[width=0.45\textwidth]{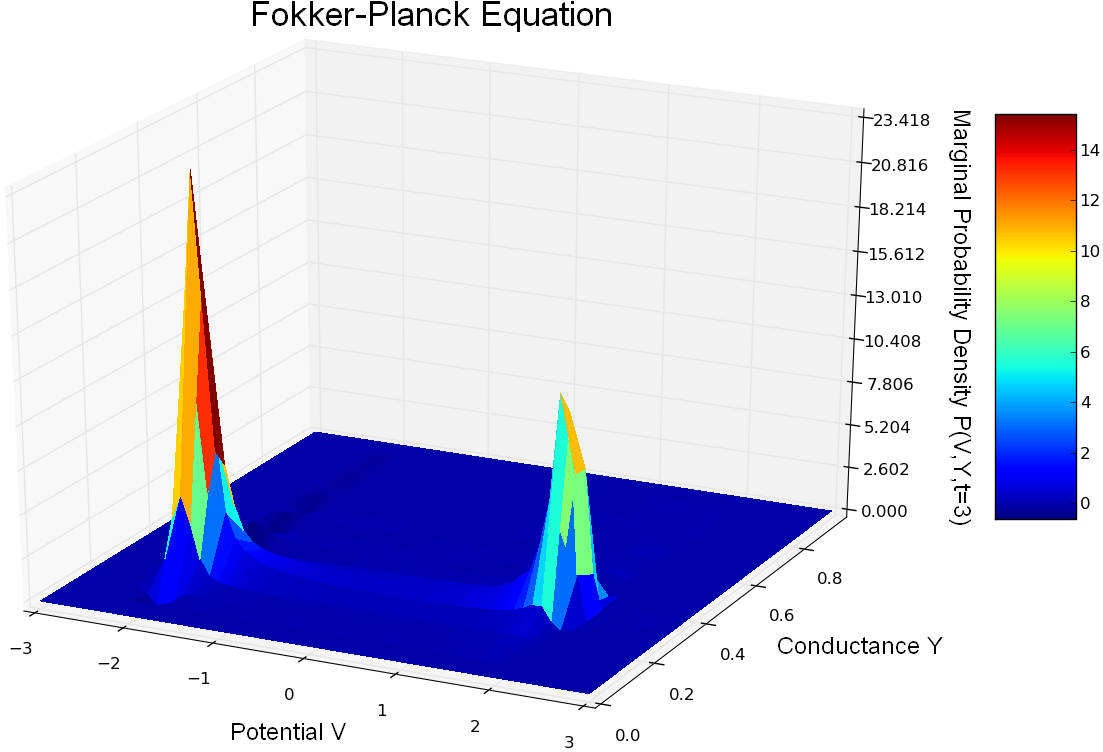}
}

\caption{\johnNew{Joint probability distribution of $(V,y)$ computed with the Monte-Carlo algorithm for the network equations~\eqref{eq:MFE} (left) compared with the solution of the McKean-Vlasov-Fokker-Planck equation~\eqref{eq:McKeanVlasofFP} (right), sampled at for four times $t_{fin}$. Parameters given in table~\ref{tab:Neural-Network-Parameters}, with a current $I= 0.7$ corresponding to a stable limit cycle. Initial conditions (first column of table~\ref{tab:Neural-Network-Parameters}) are concentrated inside this limit cycle. The two distributions are similar and centered around the limit cycle with two peaks, see text.}}
\label{fig:FNcycle}
\end{figure}


\begin{figure}[htbp]

\centerline{
\includegraphics[width=0.5\textwidth]{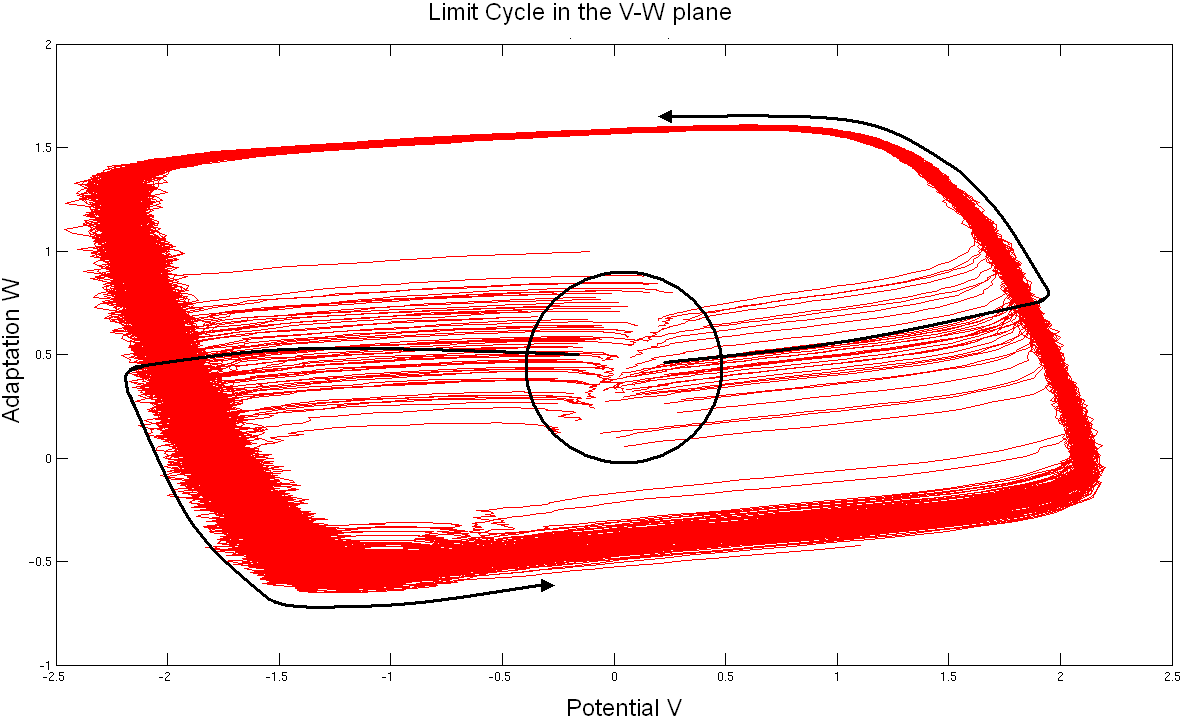}
\includegraphics[width=0.5\textwidth]{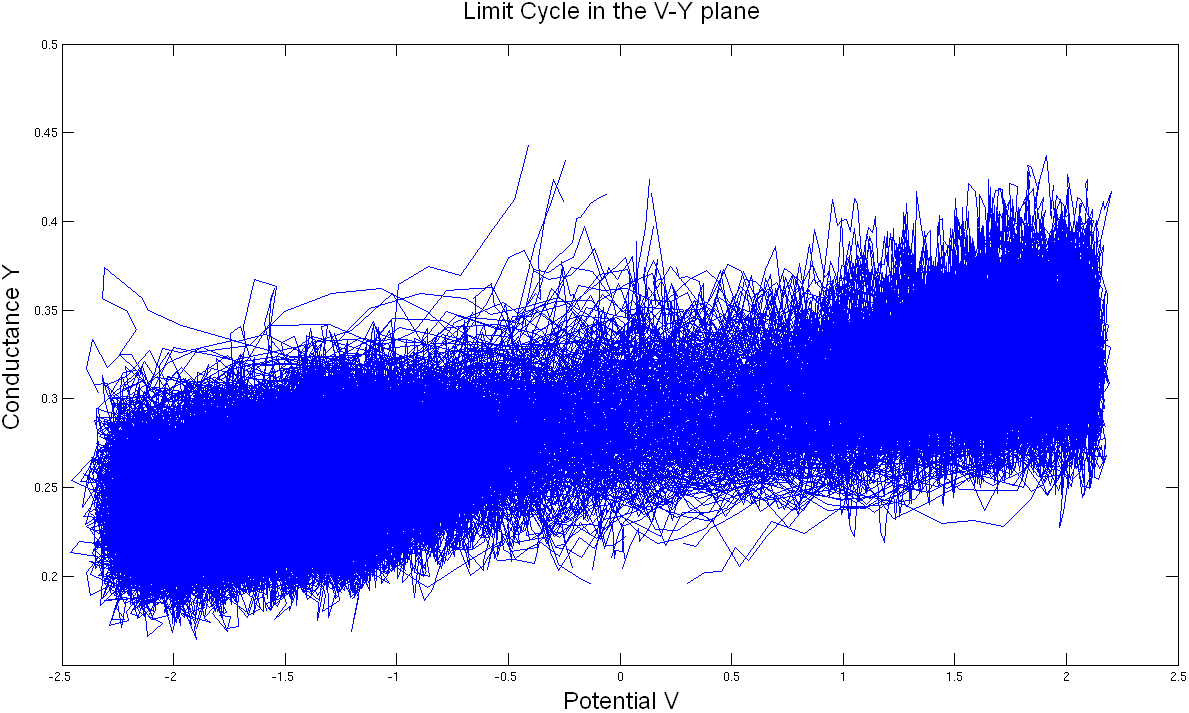}
}

\centerline{
\includegraphics[width=0.5\textwidth]{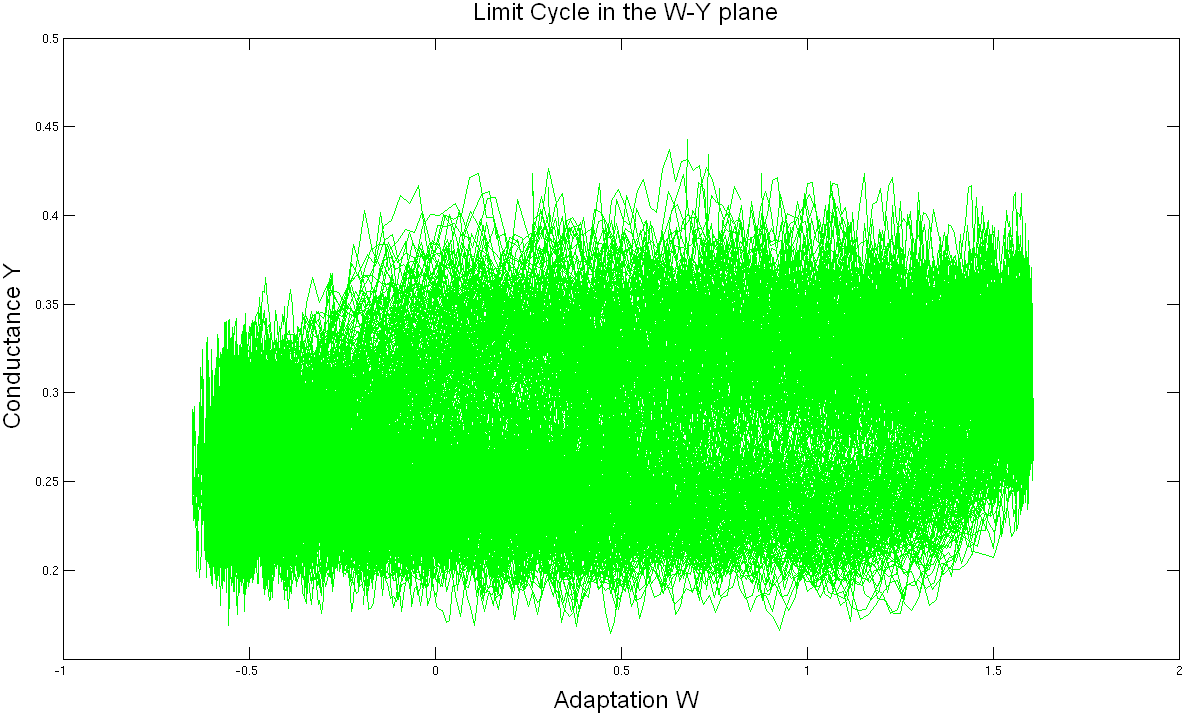}
}

\caption{Projection of $100$ trajectories in the $(V,w)$ (top left), $(V,y)$ (top right) and $(w,y)$ (bottom) planes. The limit cycle is  especially visible in the $(V,w)$ projection (red curves). The initial conditions split the trajectories into two classes corresponding to the two peaks shown in figure~\ref{fig:FNcycle}. Parameters similar to those of figure~\ref{fig:FNcycle}.}
\label{fig:PhaseSpaceCycle}
\end{figure}


\subsection{Numerical simulations with GPUs}\label{subsection:gpus}

Unfortunately the algorithm for solving the McKean-Vlasov-Fokker-Planck equation described in the previous section is computationally very expensive.
In fact, when the number of points in the discretized grid of the $(V,w,y)$ phase space is big, i.e. when the discretization steps $\Delta V$, $\Delta w$ and $\Delta y$ are small, we also need to keep $\Delta t$ small enough in order to guarantee the stability of the algorithm.
This implies that the number of equations that must be solved has to be large and moreover that they must be solved with a small time step if we want to reduce the numerical error.
This will  inevitably slow down the simulations. We have dealt with this problem by using a more powerful hardware, Graphical Processing Units (GPUs). GPUs are low cost highly parallel devices which are currently not only used for image or video processing but also for numerical simulations in physics or other scientific areas. They are specially designed for running a huge amount of low cost threads featuring a high speed context change between them. All these capabilities make them suitable for solving the McKean-Vlasov-Fokker-Planck equations.

We have improved the Runge-Kutta scheme of order 2 used for the simulations shown in section~\ref{section:numerics}, and adopted a Runge-Kutta scheme of order 4.

One of the purposes of the numerical study is to get a feeling for how the different parameters, in particular those related to the sources of noise, influence the solutions of the McKean-Vlasov-Fokker-Planck equation. This is meant to prepare the ground for the study of the bifurcation of these solutions with respect to these parameters, as was done in~\cite{touboul-hermann-faugeras:11} in a different context. For this preliminary study we varied the input current $I$ and the parameter $\sigma_{\rm ext}$ controlling the intensity of the noise on the membrane potential in equations~\eqref{eq:FNNetwork}.
The McKean-Vlasov-Fokker-Planck equation writes in this case\footnote{We have included a small noise (controlled by the parameter $\sigma_w$) on the adaptation variable $w$. This does not change the previous analysis, in particular proposition~\ref{prop:FN}, but makes the McKean-Vlasov-Fokker-Planck equation well-posed in a cube of the state space with 0 boundary value, see e.g., \cite{evans:98}.}:
\begin{multline}\label{eq:McKeanVlasofFPforGPUs}
        \frac{\partial}{\partial t}p(t,V,w,y)= \\
	-\frac{\partial}{\partial V} \Big\{\Big[V-\frac{V^3}{3}-W+I+\bar{J}(V-V_{rev})\int_{\mathbb{R}^3}y'p(t,V',w',y')dV'dw'dy'\Big]p(t,V,w,y)\Big\}+ \\
	-\frac{\partial}{\partial w} [c(V+a-bW)p(t,V,w,y)]-\frac{\partial}{\partial y} \{[a_{r} S(V)(1-y)-a_{d} y]p(t,V,w,y)\}+ \\
        +\frac{1}{2}\frac{\partial^2}{\partial V^2}\Big\{\Big[\sigma_{ext}^{2}+\sigma_{J}^{2}(V-V_{rev})^2\int_{\mathbb{R}^3}y'p(t,V',w',y')dV'dw'dy'\Big]p(t,V,w,y)\Big\}+ \\
        +\frac{1}{2}\sigma_{w}^{2}\frac{\partial^2}{\partial w^2}p(t,V,w,y)+\frac{1}{2}\frac{\partial^2}{\partial y^2}\{[a_{r} S(V)(1-y)+a_{d} y]\chi^2(y)p(t,V,w,y)\}
\end{multline}


\begin{figure}[htbp]
\centerline{
\includegraphics[width=0.5\textwidth]{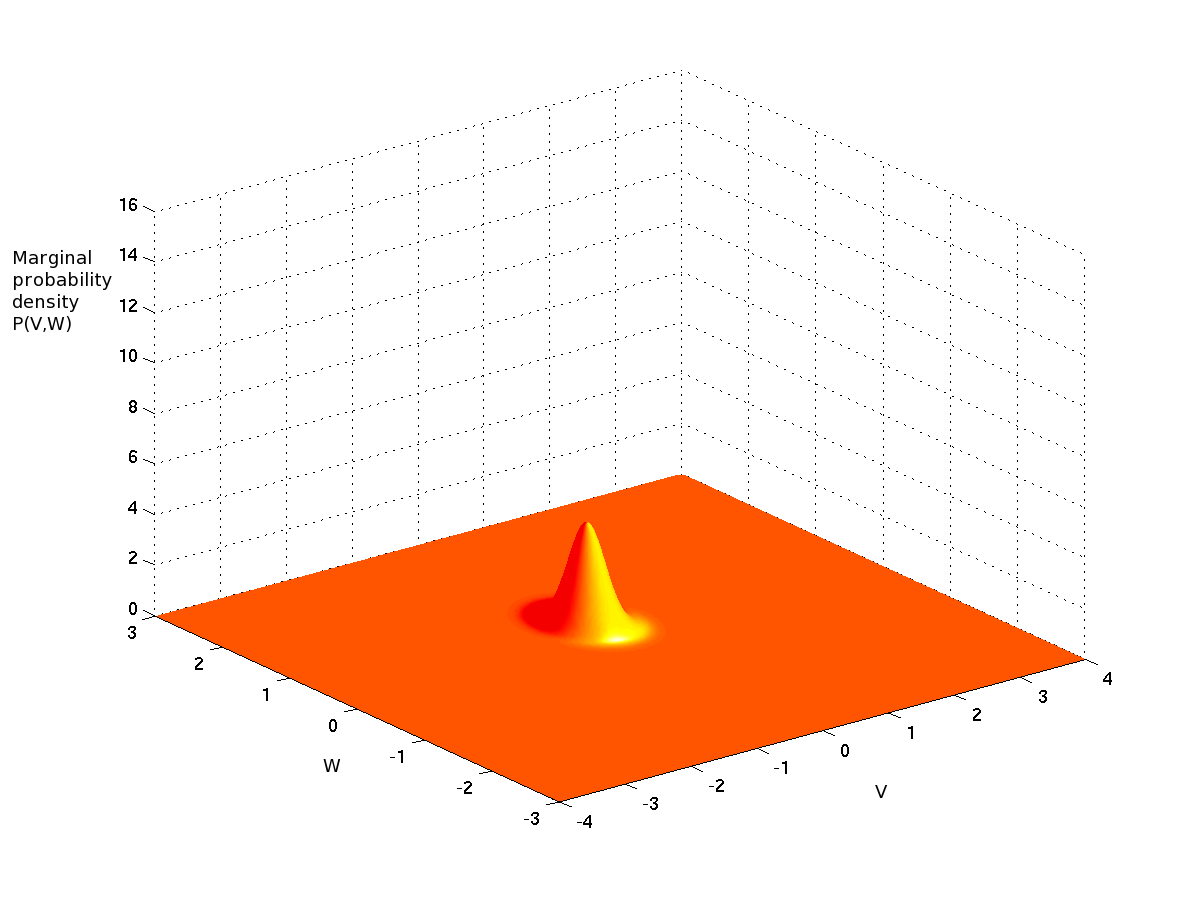}
\includegraphics[width=0.5\textwidth]{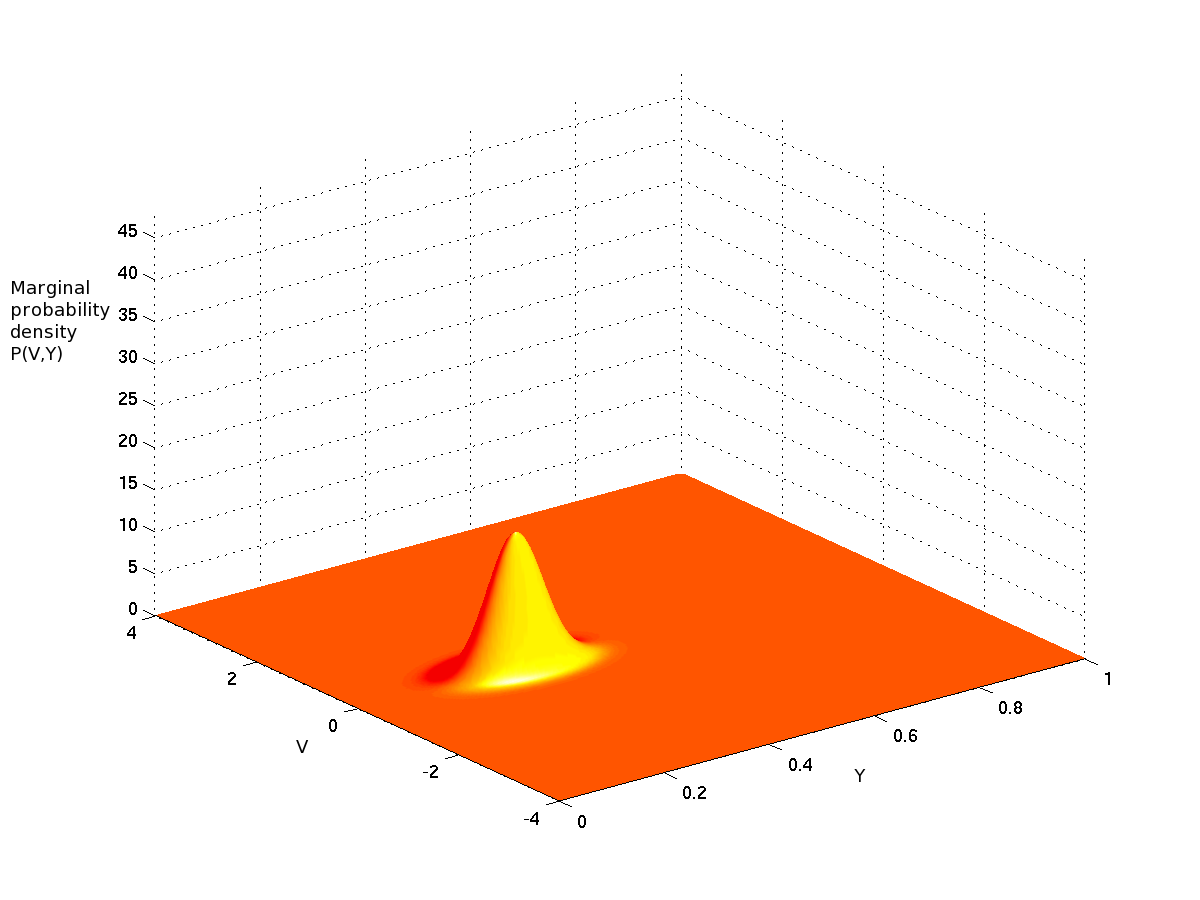}
}
\centerline{
\includegraphics[width=0.5\textwidth]{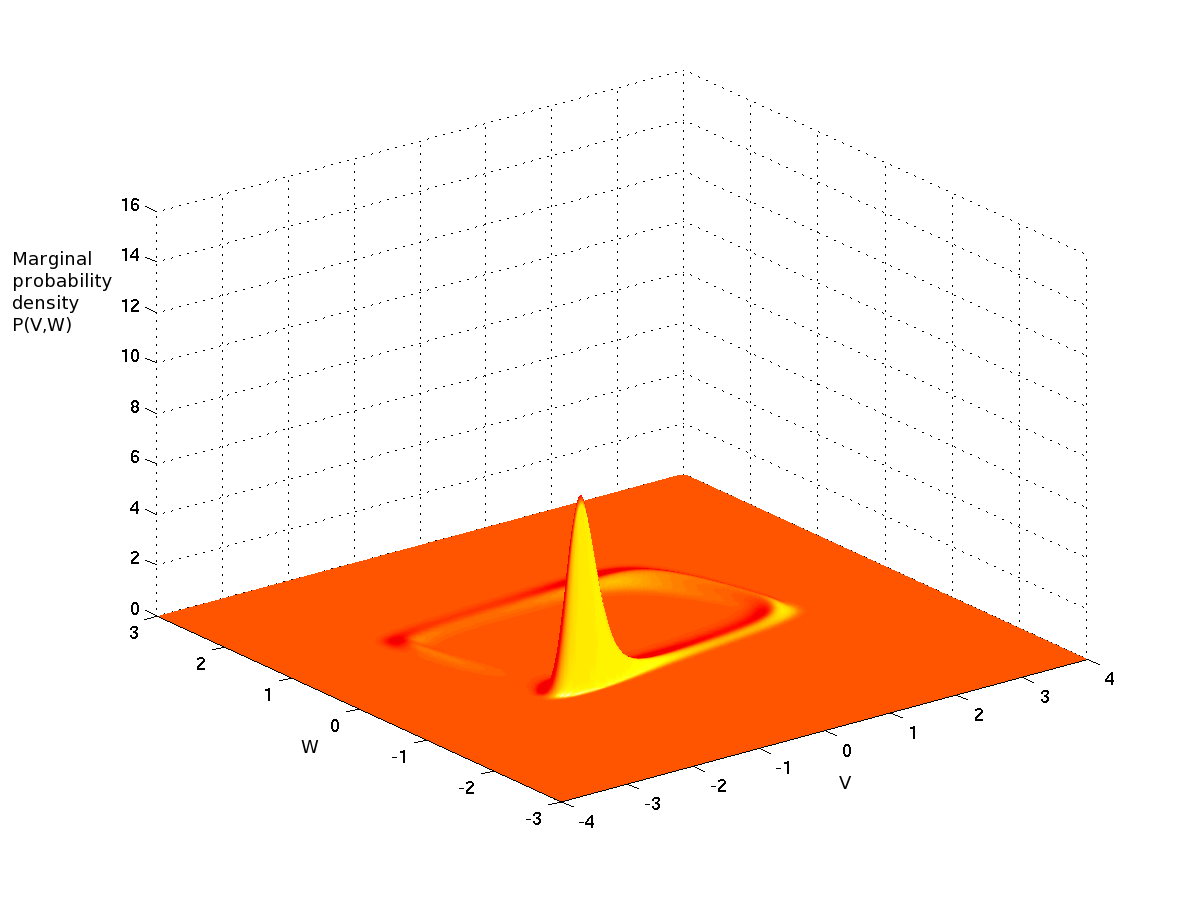}
\includegraphics[width=0.5\textwidth]{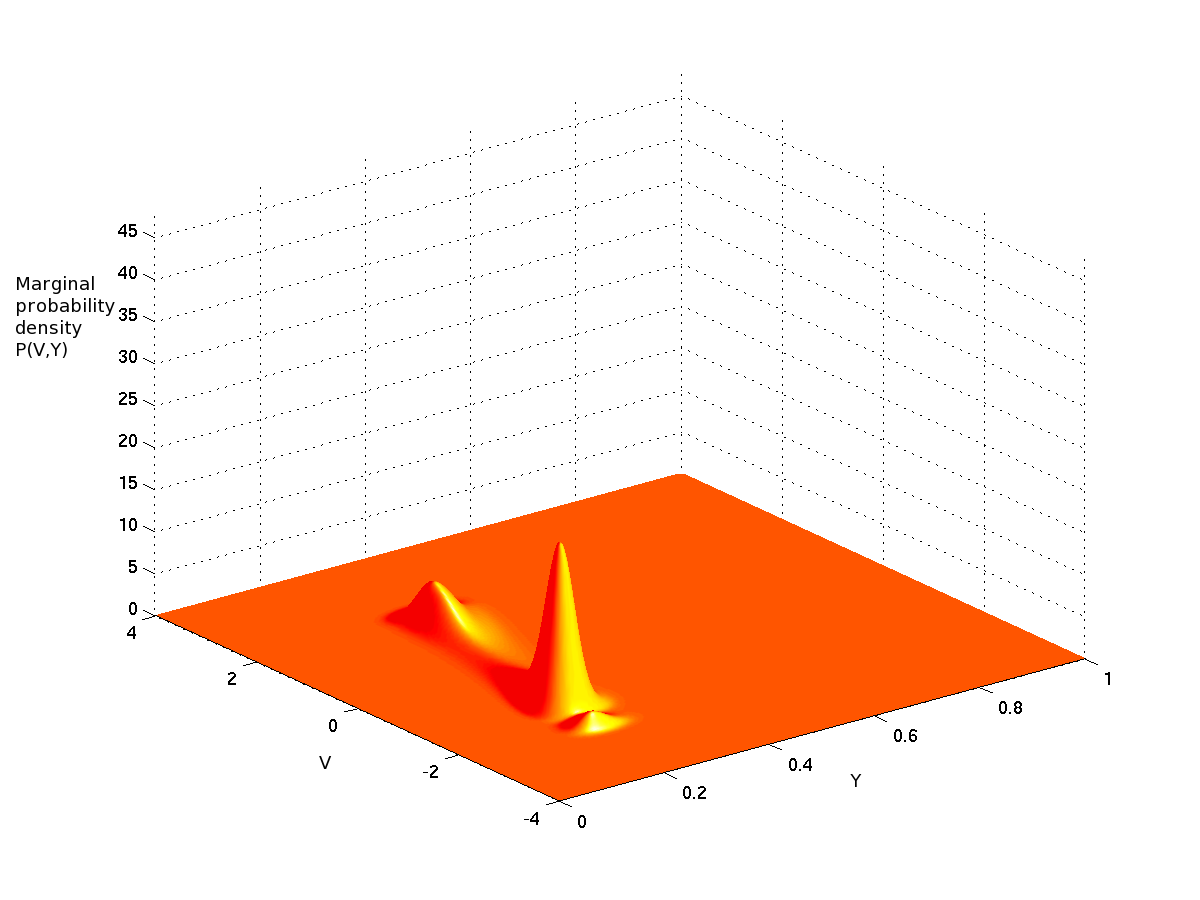}
}
\centerline{
\includegraphics[width=0.5\textwidth]{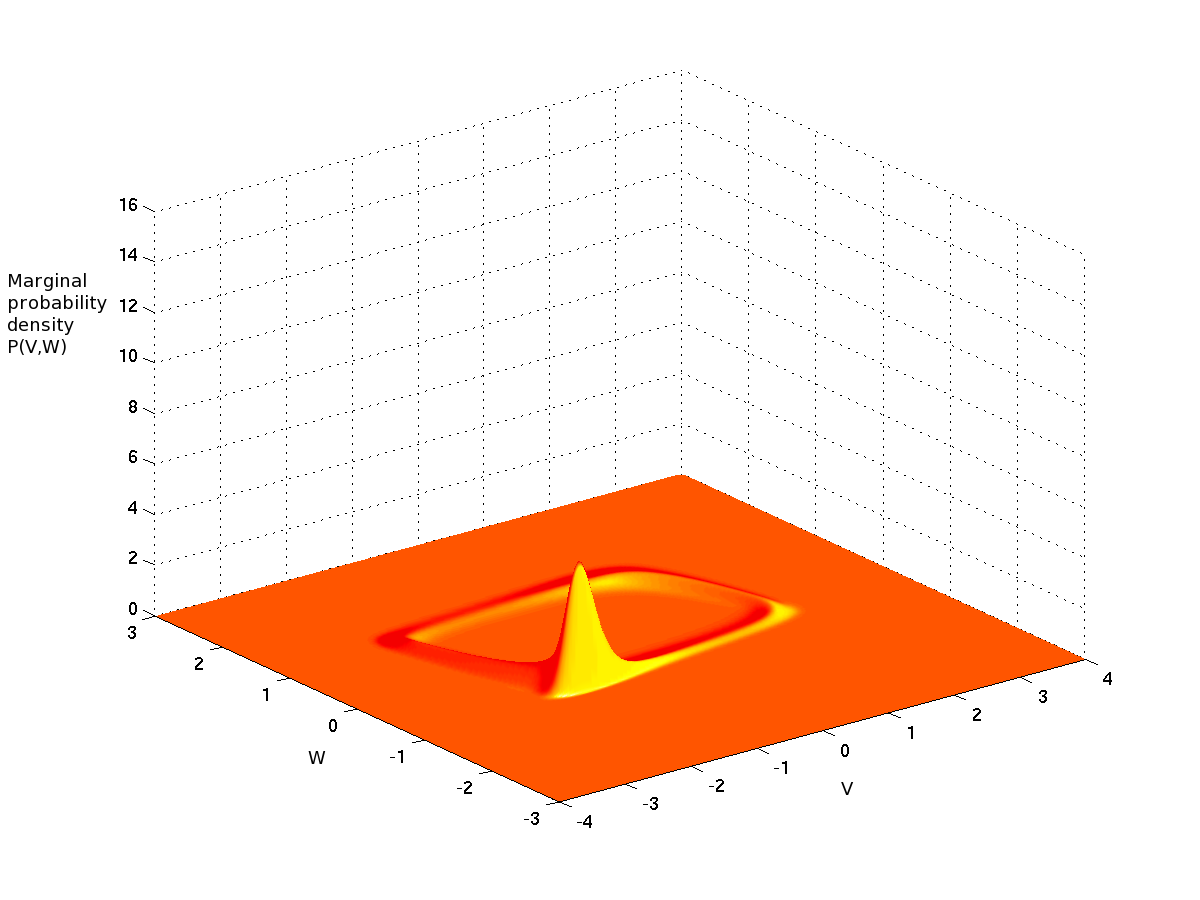}
\includegraphics[width=0.5\textwidth]{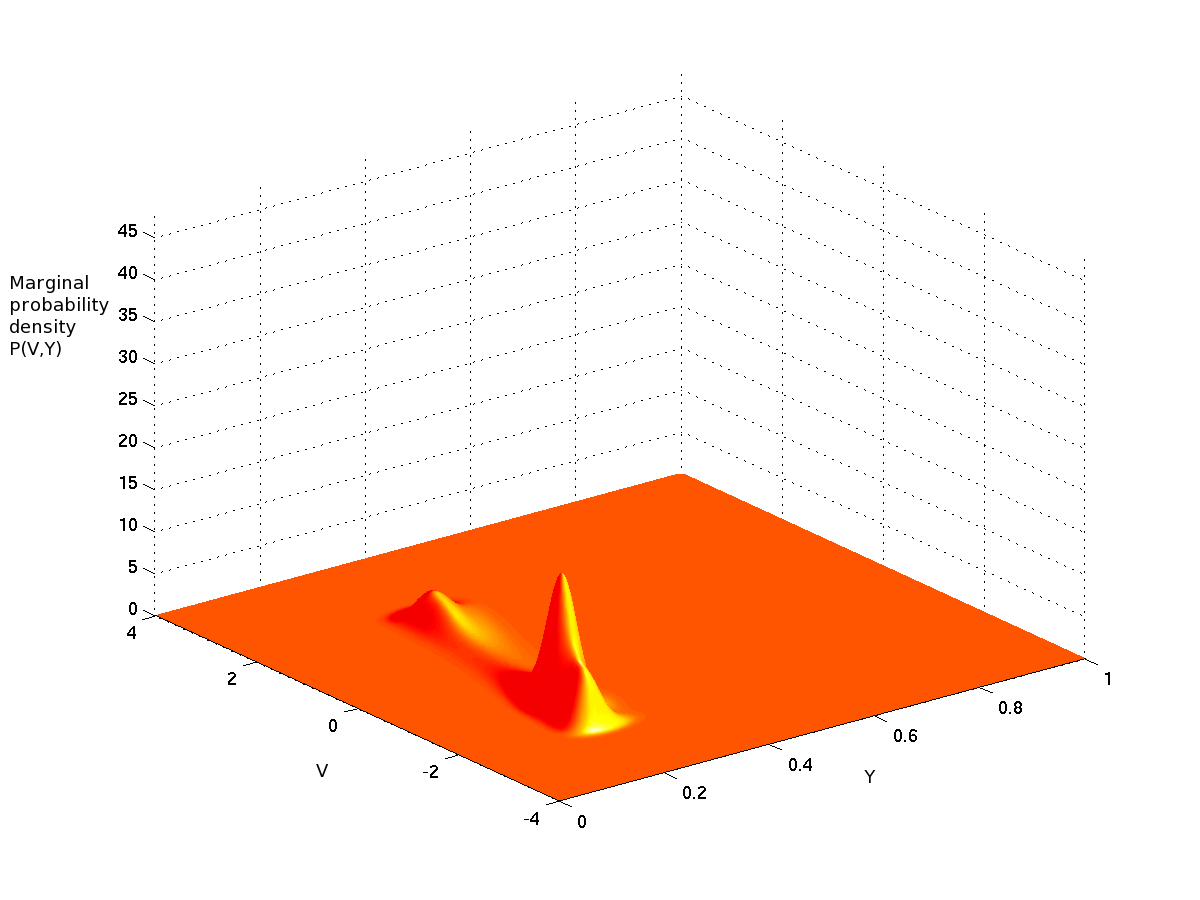}
}
\caption{Marginals with respect to the $V$ and $w$  variables (Left) and to the $V$ and $y$ variables (Right) of the solution of the McKean-Vlasov-Fokker-Planck equation. The first row shows the initial conditions, the second the marginals at time $0.05$ and the third the stationary (large time) solutions. The input current $I$ is equal to $0.4$ and $\sigma_{\rm ext}=0.27$.
These are screenshots at different times of movies available as supplementary material. The corresponding movies are in the files Vw-n0271-in04.avi and Vy-n0271-in04.avi.}
\label{fig:GPUs-LimitCycle1}
\end{figure}

\begin{figure}[htbp]
\centerline{
\includegraphics[width=0.5\textwidth]{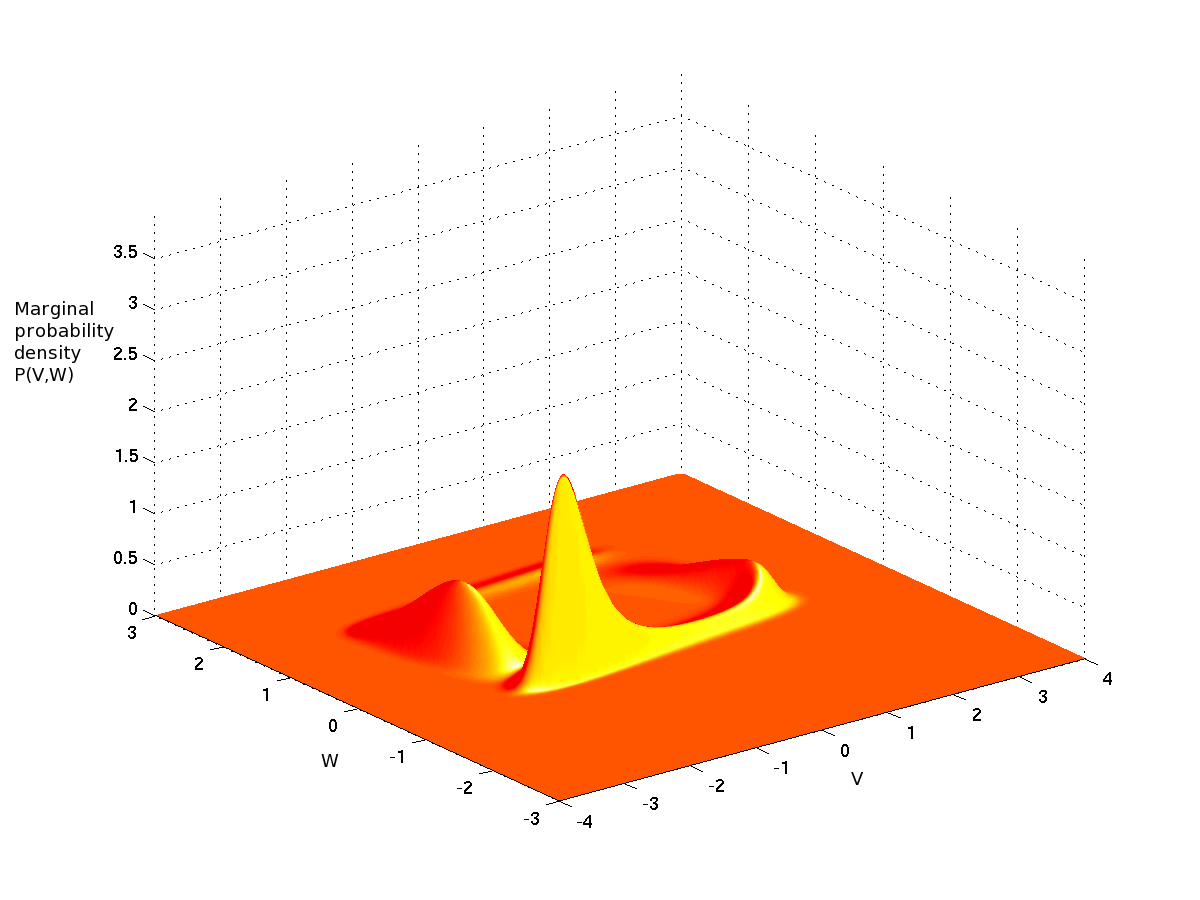}
\includegraphics[width=0.5\textwidth]{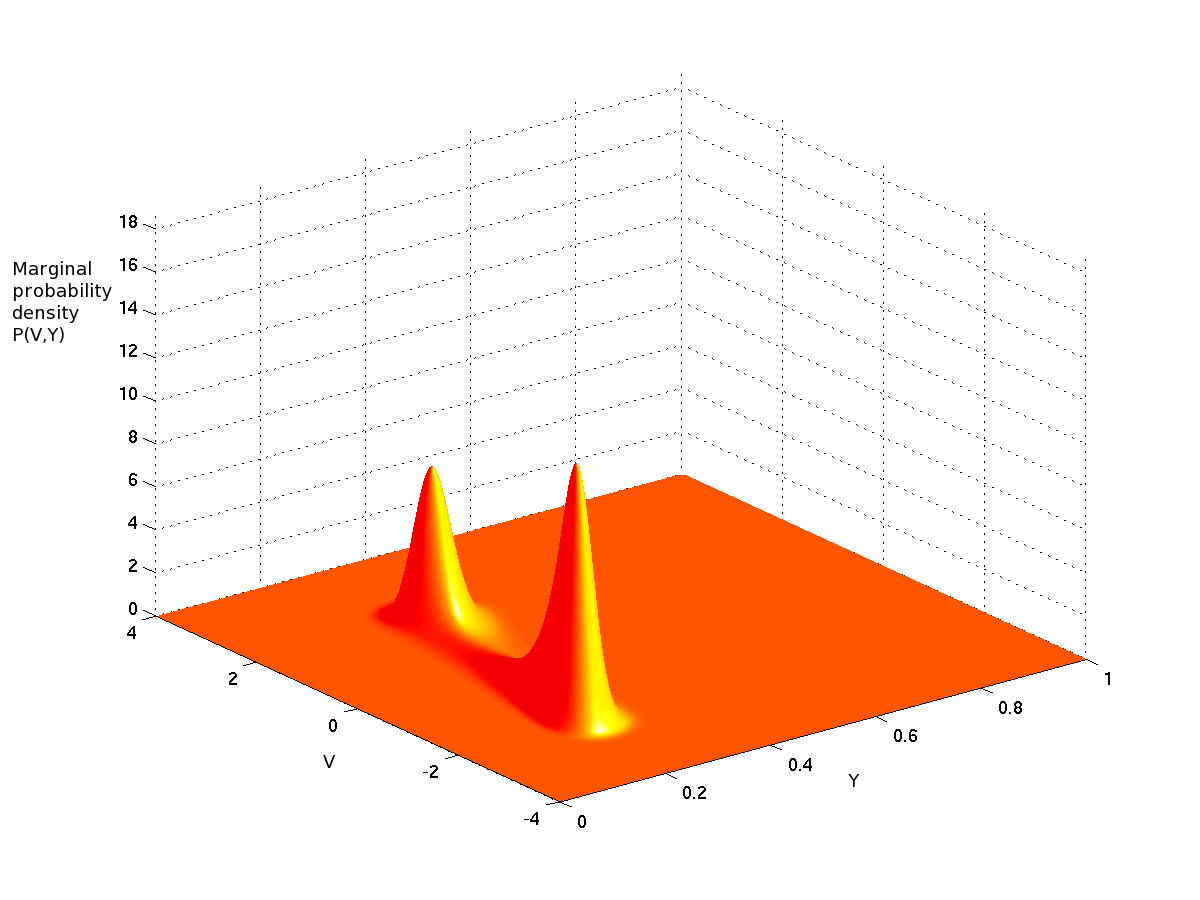}
}
\centerline{
\includegraphics[width=0.5\textwidth]{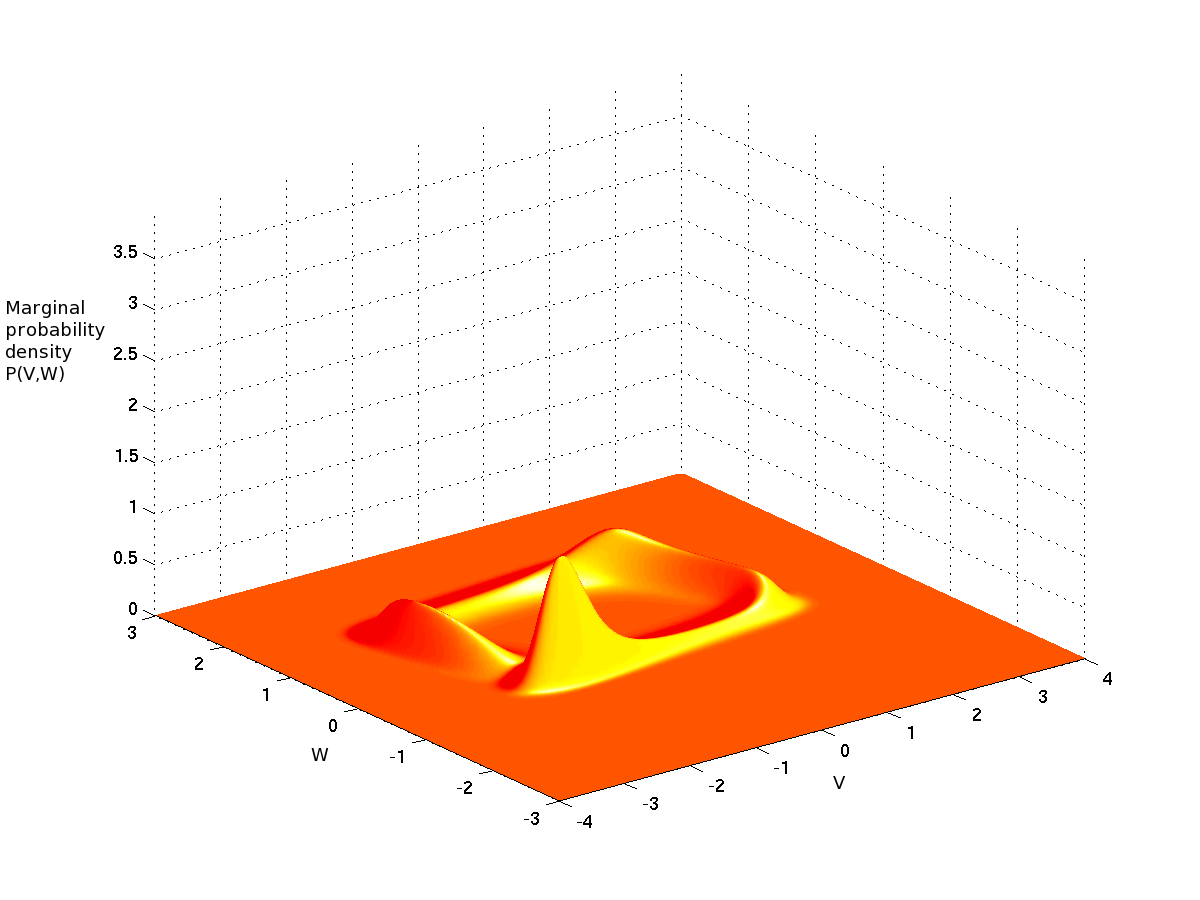}
\includegraphics[width=0.5\textwidth]{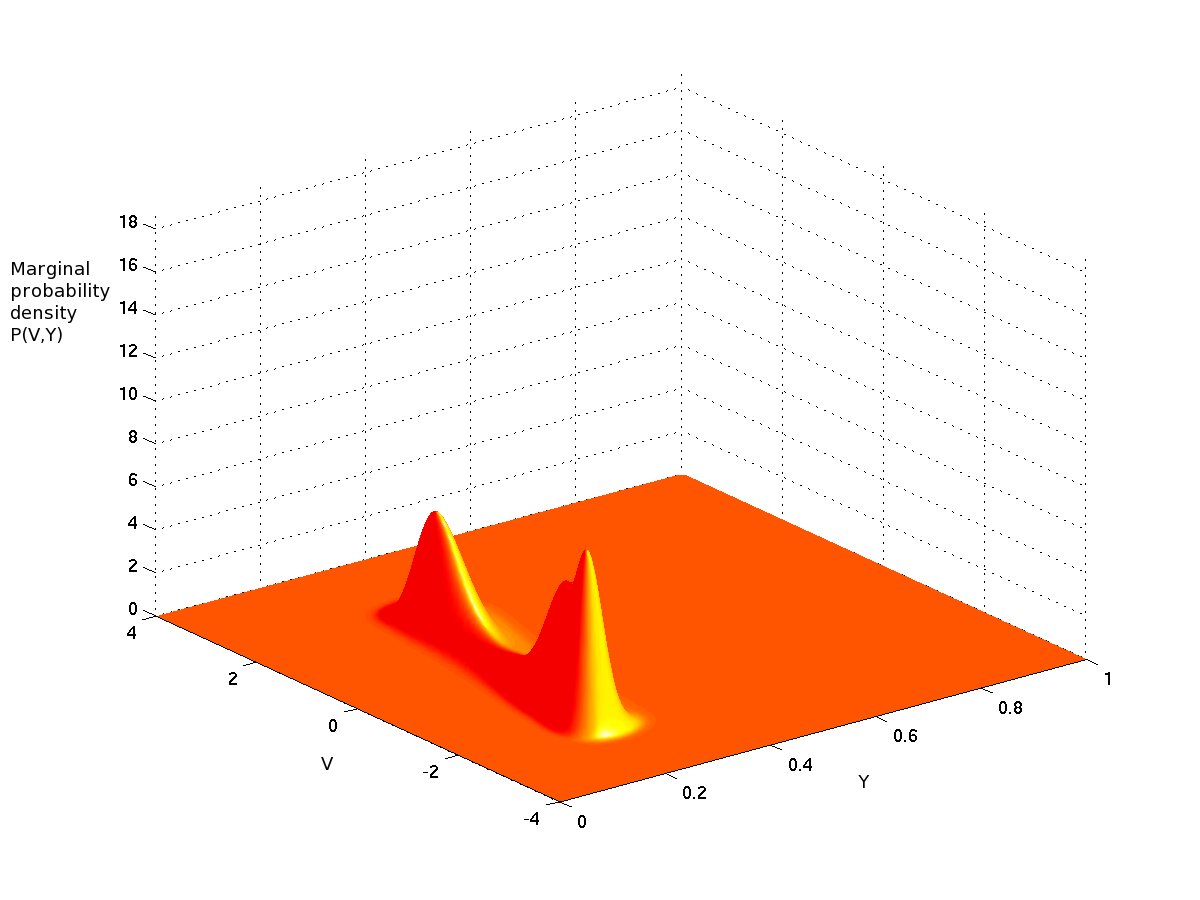}
}
\centerline{
\includegraphics[width=0.5\textwidth]{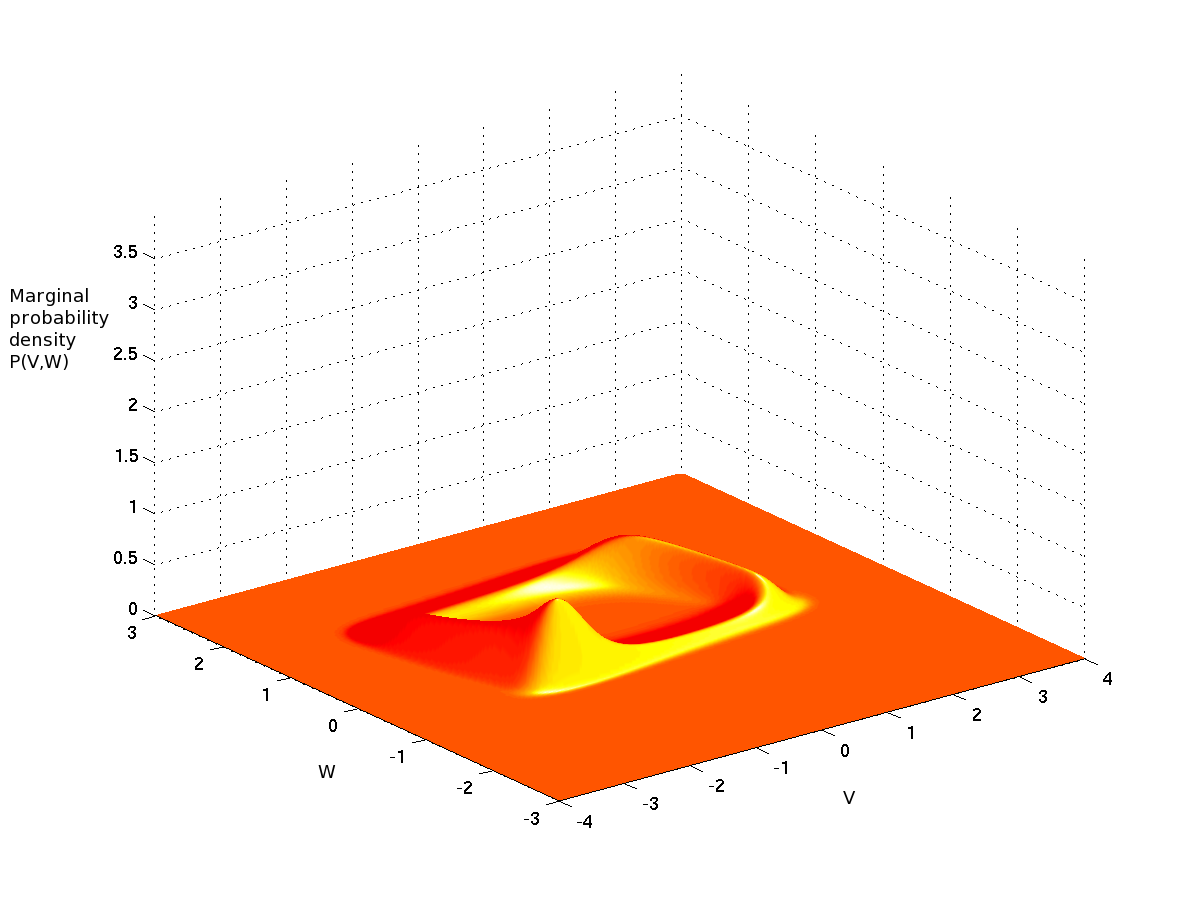}
\includegraphics[width=0.5\textwidth]{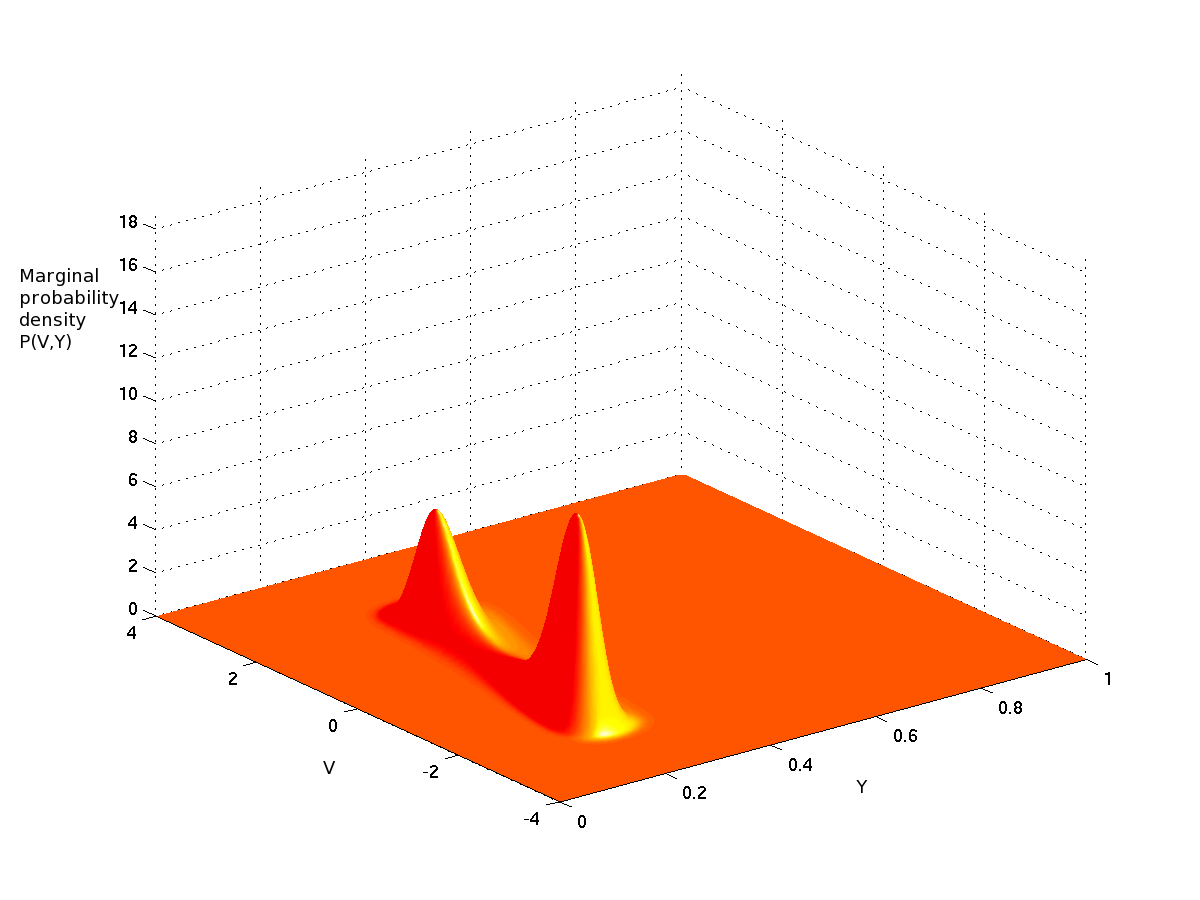}
}
\caption{Marginals with respect to the $V$ and $w$  variables (Left) and to the $V$ and $y$ variables (Right) of the solution of the McKean-Vlasov-Fokker-Planck equation. The first row shows the marginals at time $0.05$, the second the marginals at time $0.10$ and the third the stationary (large time) solutions. The input current $I$ is equal to $0.7$ and $\sigma_{\rm ext}=0.45$.
These are screenshots at different times of movies available as supplementary material. The corresponding movies are in the files Vw-n0451-in07.avi and Vy-n0451-in07.avi.}
\label{fig:GPUs-LimitCycle2}
\end{figure}

The simulations were run with the $\chi$ function~\eqref{eq:ChiFunction}, the initial condition described by~\eqref{eq:InitialConditionPdf} and with the parameters shown in table~\ref{tab:GPUs-Parameters}. 
Three snapshots of the solution are shown in figures~\ref{fig:GPUs-LimitCycle1} (corresponding to the values $I=0.4,\,\sigma_{\rm ext}=0.27$ of the external input current and of the standard deviation of the noise on the membrane potential) and~\ref{fig:GPUs-LimitCycle2} (corresponding to the values $I=0.7,\,\sigma_{\rm ext}=0.45$). In the figures the left column corresponds to the values of the marginal $p(t,V,w)$, the right column to the values of the marginal $p(t.V,y)$. Both are necessary to get an idea of the shape of the full distribution $p(t,V,w,y)$.  The first row of figure~\ref{fig:GPUs-LimitCycle1}  shows the initial conditions. They are the same for the results shown in figure~\ref{fig:GPUs-LimitCycle2}.  The second and third rows of figure~\ref{fig:GPUs-LimitCycle1} show the time instants $t=0.05$ and at convergence $t=0.33$ (the time units differ from those of the previous section but it is irrelevant to this discussion). The three rows of figure~\ref{fig:GPUs-LimitCycle2} show the time instants $t=0.05$, $t=0.10$ and $t=0.33$. In both cases the solution appears to converge to a stationary distributions whose mass is distributed over a ``blurred'' version of the limit cycle of the isolated neuron. The ``blurriness''  increases with the variance of the noise. The four movies for these two cases are available as supplementary material. 
\begin{center}
\begin{table}[htbp]
\begin{tabular}{| c | c | c | c | }
\hline
Initial Condition &                                     Phase space               & Stochastic & Synaptic \\
                  &                                                &FN neuron   &   Weights \\
\hline                             &          $V_{min}=-4$    & $a=0.7$             & $\overline{J}=1$ \\
                                             &                          &                     &    \\
 $\Delta t=0.0025,0.0012$,     &           $V_{max}=4$     & $b=0.8$             &  $\sigma_{J}=0.01$ \\
$\overline{V}_{0}=0.0$                    &         $\Delta V=0.027$  & $c=0.08$           &                \\
$\sigma_{V_{0}}=0.2$                         &          $w_{min}=-3$    & $I=0.4,\,0.7$       &  \\
$\overline{w}_{0}=-0.5$              &          $w_{max}=3$     & $\sigma_{ext}=0.27,0.45,$   &          \\
$\sigma_{w_{0}}=0.2$                         &          $\Delta w=0.02$  & $\sigma_{w}=0.0007$   &   \\
$\overline{y}_{0}=0.3$                     &      $y_{min}=0$     &            &   \\
$\sigma_{y_{0}}=0.05$                        &          $y_{max}=1$     &                 &  \\
                                             &        $\Delta y=0.003$ &               &         \\
\hline
\end{tabular}
\caption{Parameters used in the simulations of the McKean-Vlasov-Fokker-Planck equation with GPUs \diego{shown in figures~\ref{fig:GPUs-LimitCycle1} and  ~\ref{fig:GPUs-LimitCycle2}, and in the supplementary material.}}
\label{tab:GPUs-Parameters}
\end{table}
\end{center}

\section{Discussion and conclusion}\label{section:Discussion}

\john{In this article, we addressed the problem of the limit in law of networks of biologically-inspired neurons as the number of neuron tends to infinity. We emphasized the necessity of dealing with biologically-inspired models and discussed at length the type of models relevant to this study. We chose to address the case conductance-based network models that are relevant description of the neuronal activity. Mathematical results on the analysis of these diffusion processes in interaction resulted to } replace a set of \olivier{$NP$} \olivier{$d$}-dimensional coupled equations (the network equations) in the limit of large $N$s by \olivier{$P$ $d$-dimensional} mean-field equations describing the global behavior of the network. \john{However, the} price to pay \john{for this reduction was the fact that} the \john{resulting} \olivier{mean-field} equations are non-standard stochastic differential equations\john{, similar to McKean-Vlasov equations. These can} either be expressed as implicit equations on the law of the solution, or, in terms of probability density function through the McKean-Vlasov-Fokker-Planck equations as a nonlinear, nonlocal partial differential equation. These equations are in general hard to study theoretically.

\olivier{Beside the fact that we explicitely model real spiking neurons, the mathematical part of our work differs from that of previous authors such as McKean, Tanaka and Sznitman (see section \ref{section:intro}) because we are considering several populations with the effect that the analysis is significantly more complicated. Our hypotheses are also more general,  e.g., the drift and diffusion functions are non-trivial and satisfy the general condition \ref{Assump:MonotoneGrowth} which is more general than 
the usual linear growth condition. Also they are only assumed locally (and not globally ) Lipschitz continuous to be able to deal, for example, with the Fitzhugh-Nagumo model. \johnNew{A} locally Lipschitz continuous case was recently addressed in a different context for a model of swarming in~\cite{bolley-canizo-etal:10}. }
Proofs of our results for somewhat stronger hypotheses than ours and in special cases are scattered in  the literature. Our main contribution is that we provide a complete self-sufficient proof in a fairly general case by gathering all the ingredients that are  required for our neuroscience applications. In particular, the case of the Fitzhugh-Nagumo model where the drift function does not satisfy the linear growth condition involves a generalization of previous works by using a Has'minskii-like argument based on the presence of a Lyapunov function to prove existence of solutions and non-explosion.

\john{The simulation of these equations can itself be very costly. We hence addressed in section~\ref{section:numerics} numerical methods to compute the solutions of these equations, in the probabilistic framework using the convergence result and standard integration methods of differential equations, or in the Fokker-Planck framework}.
The simulations performed for different values of the external input current parameter and one of the parameters controlling the noise allowed us to show that the spatio-temporal shape of the probability density function describing the solution of the McKean-Vlasov-Fokker-Planck equation  was sensitive to the variations of these parameters, as shown in figures ~\ref{fig:GPUs-LimitCycle1} and~\ref{fig:GPUs-LimitCycle2}. However, we did not address the full characterization of the dynamics of the solutions in the present article. This appears to be a complex question that will be the subject of future work. It is known that for different McKean-Vlasov equations stationary solutions of these equations do not necessarily exist and, when they do, are not necessarily unique (see~\cite{herrmann-tugaut:10}). A very particular case of these equations was treated in~\cite{touboul-hermann-faugeras:11} where the authors consider that the function $f_{\alpha}$ is linear, $g_{\alpha}$ constant and $b_{\alpha\beta}(x,y)=S_{\beta}(y)$. This model, known as the firing-rate model, is shown in that paper to have Gaussian solutions when the initial data is Gaussian, and the dynamics of the solutions can be exactly reduced to a set of $2P$ coupled ordinary differential equations governing the mean and the standard deviation of the solution. Under these assumptions a complete study of the solutions is possible, and the dependence upon the parameters can be understood through bifurcation analysis. The authors show that intrinsic noise levels govern the dynamics, creating or destroying fixed points and periodic orbits. 


\olivier{The mean-field description has also deep theoretical implications in neuroscience. Indeed, it points towards the fact that neurons encode their responses to stimuli through probability distributions. This type of coding was evoked by several authors~\cite{rolls-deco:10}, and the mean-field approach shows that under some mild conditions this phenomenon arises: all neurons belonging to a particular population can be seen as independent realizations of the same process, governed by the mean-field equation. The relevance of this phenomenon is reinforced by the fact that it has recently been observed experimentally that neurons had correlation levels significantly below what had been previously reported \cite{ecker-berens-etal:10}. This independence has deep implications on the efficiency of neural coding which the propagation of chaos theory accounts for. To illustrate this phenomenon we have performed the following simulations. Considering a network of 2, 10 and 100 Fitzhugh-Nagumo neurons, we have simulated 2000 times the network equations over some
time interval $[0,100]$. We have picked at random a pair of neurons and computed the time variation of the cross-correlation of the values of their state variables. The results are shown in figure~\ref{fig:decorr}. It appears that the propagation of chaos is observable for relatively small values of the number of neurons in the network, thus indicating once more that the theory developed in this paper in the limit case of an infinite number of neurons is quite robust to finite size effects\footnote{\diegoNew{Note that we did not estimate the correlation within larger networks since, as predicted by theorem~\ref{thm:Principal2}, it will be smaller and smaller requiring an increasingly large number of Monte Carlo simulations.}}
}
\begin{figure}[htbp]
\centerline{
\includegraphics[width=0.5\textwidth]{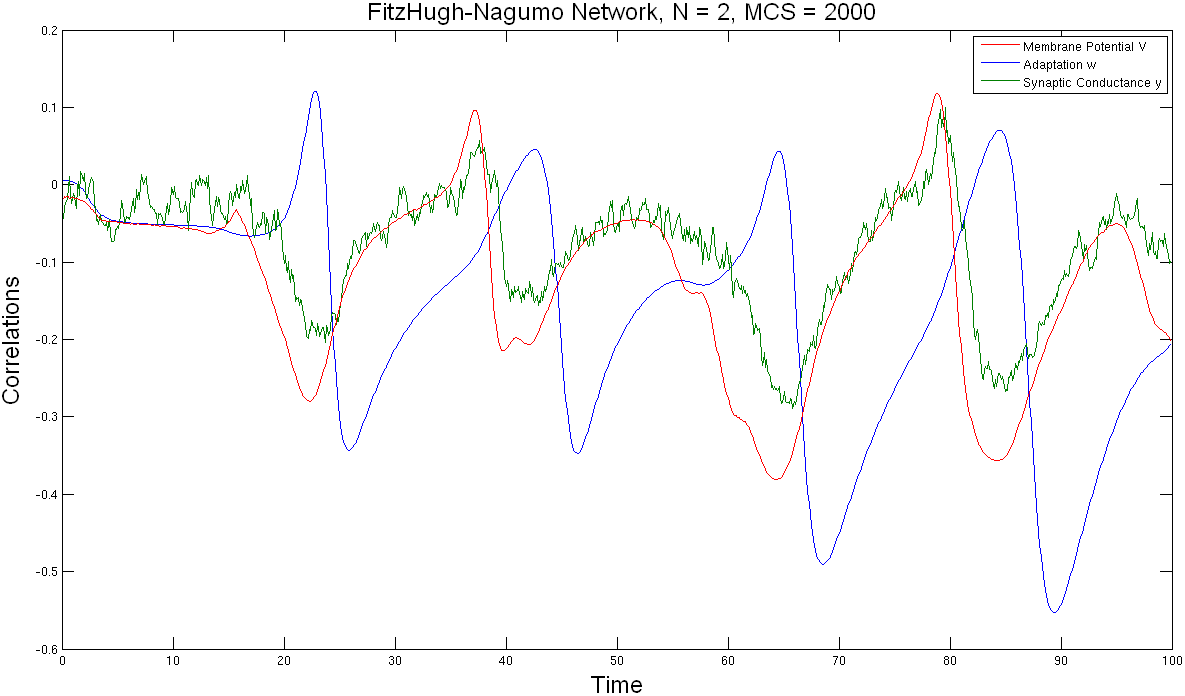}\hspace{0.1cm}\includegraphics[width=0.5\textwidth]{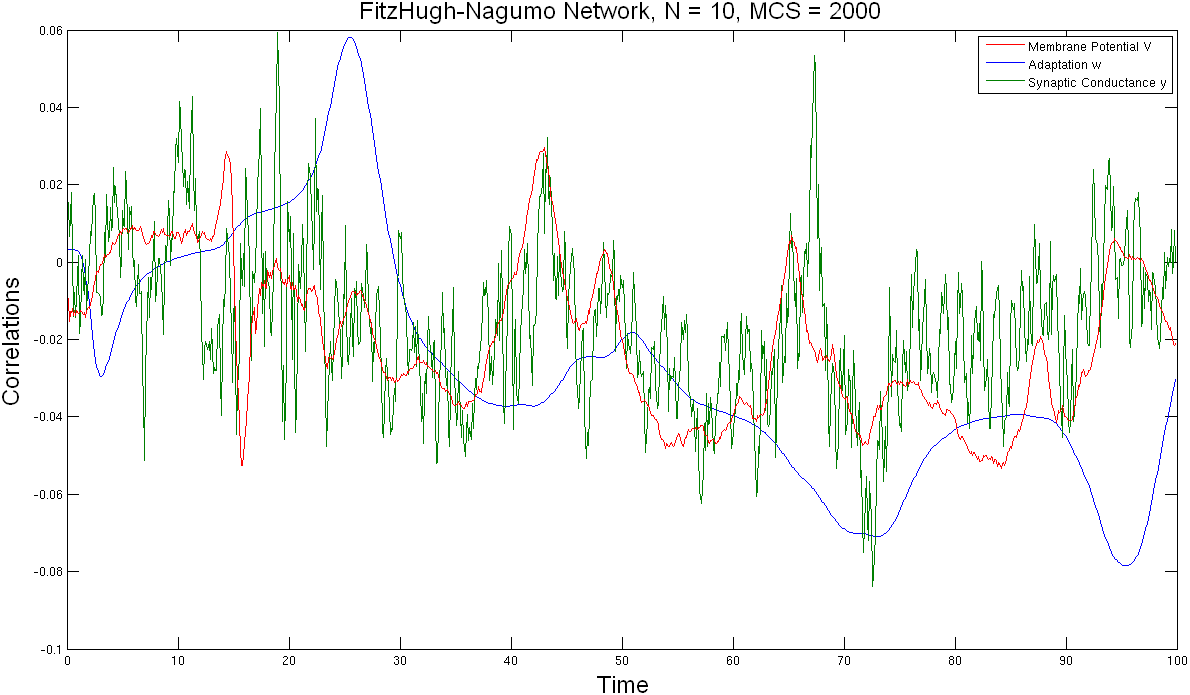}
}
\centerline{
\includegraphics[width=0.5\textwidth]{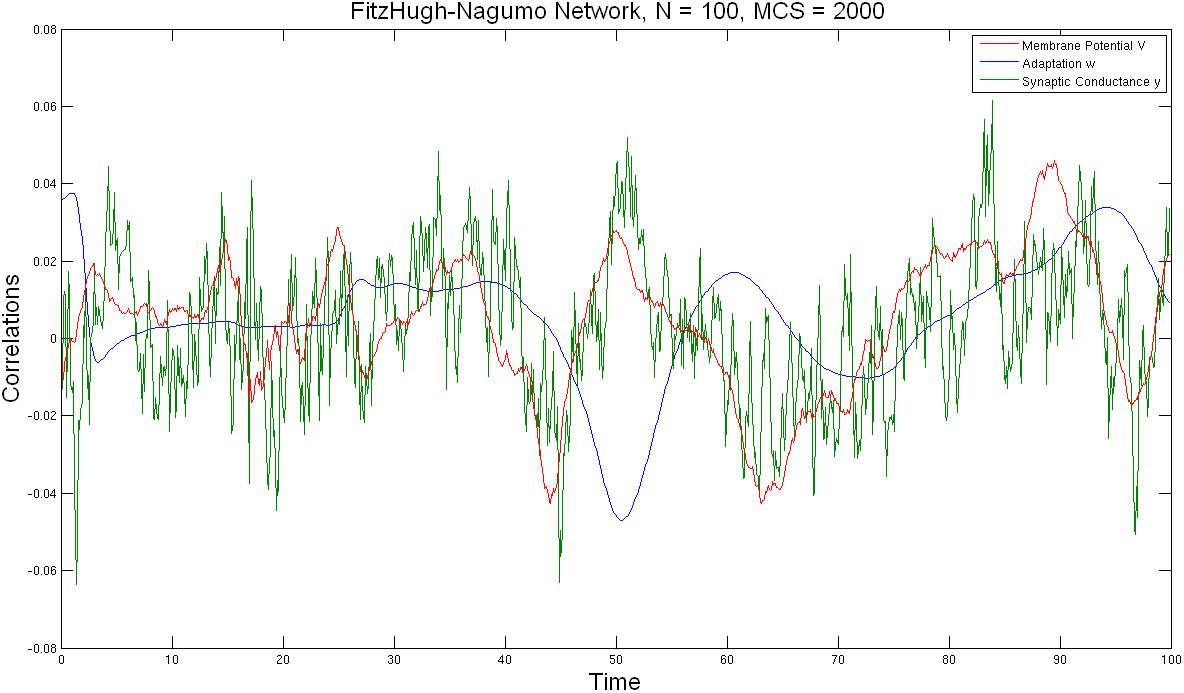}
}
\caption{Variations over time of the cross-correlation of the $(V,w,y)$ variables of two Fitzhugh-Nagumo neurons in a network. Top left: 2 neurons. Top right: 10 neurons. Bottom: 100 neurons. The cross-correlation decreases steadily with the number of neurons in the network.}
\label{fig:decorr}
\end{figure}

The present study develops theoretical arguments to derive the mean-field equations resulting from the activity of large neurons ensembles. However, the rigorous and formal approach developed here does not allow direct characterization of brain states. The paper however opens the way to rigorous analysis of the dynamics of large neuron ensembles through derivations of different quantities that may be relevant. A first approach could be to derive the equations of the successive moments of the solutions. Truncating this expansion would yield systems of ordinary differential equations that can give approximate information on the solution. However, the choice of the number of moments taken into account is still an open question that can raise several deep questions~\cite{ly-tranchina:07}. 

\noindent
{\bf Acknowledgements}\\
This work was partially supported by the ERC grant \#227747 NerVi and by the FACETS-ITN Marie-Curie Initial Training Network.
\appendix
\section{Proof of theorem \ref{thm:Principal2}}\label{section:proofs}



\johnNew{In this appendix we prove the convergence of the network equations towards the mean-field equations~\eqref{eq:MFE} and of the propagation of chaos property. The proof follows standard proofs in the domain as generally done in particular by Tanaka or Sznitman \cite{tanaka:78,sznitman:84a}, adapted to our particular case where we consider a non-zero drift function and a time and space dependent diffusion function. It is based on the very powerful coupling argument, that identifies the almost sure limit of the process \olivier{$X^i$} as the number of neuron tends to infinity, as popularized by Sznitman in~\cite{sznitman:89}, but whose idea dates back from the 70s (for instance Dobrushin uses it in~\cite{dobrushin:70}). This process is exactly the solution of the mean-field equation driven by the same Brownian motion as \olivier{$X^i$} and with the same initial condition random variable. In our case, this leads us to introduce the sequence of independent stochastic processes $(\bar{X}^i_t)_{i=1\ldots N}$ having the same law as $\bar{X}^{\olivier{\alpha}}$, $\alpha=p(i)$, solution of the mean-field equation:}
\begin{multline}\label{eq:Coupling}
	d\bar{X}^i_t = f_{\alpha}(t,\bar{X}^i_t)\,dt + \sum_{\gamma=1}^P\Exp_Z[b_{\alpha\gamma}(\bar{X}^i_t,Z_t^{\gamma})]\,dt + g_{\alpha}(t,\bar{X}^{i}_t)\, dW^i_t \\
	+ \sum_{\gamma=1}^P\Exp_Z[\beta_{\alpha\gamma}(\bar{X}^i_t,Z_t^{\gamma})]\,dB^i_t,
\end{multline}
\johnNew{with initial condition $\bar{X}^i_0=X^i_0$, the initial condition of the neuron $i$ in the network, which were assumed to be independent and identically distributed. $(W^i_t)$ and $(B^i_t)$ are the Brownian motions involved in the network equation \eqref{eq:Network}. \olivier{As previously $Z=(Z^1,\cdots,Z^P)$ is a process independent of $\bar{X}$ that has the same law}}. \johnNew{Denoting as previously by $\olivier{m}_t^{\alpha}$ the probability distribution of $\bar{X}_t^{\alpha}$ solution of the mean-field equation \eqref{eq:MFE}, the law of the collection of processes $(\bar{X}^{i_k}_t)$ for some fixed $k\in \N^*$, namely ${m}^{p(i_1)} \otimes\ldots\otimes {m}^{p(i_k)}$, is shown to be the limit of the processes $(X^i_t)$ solution of the network equations \eqref{eq:Network} as $N$ goes to infinity}. 

We recall for completeness theorem~\ref{thm:Principal2}:
\setcounter{theorem}{3}
\begin{theorem}
Under the assumptions \ref{Assump:LocLipsch} to \ref{Assump:MonotoneGrowth}  
the following holds true:
	\begin{itemize}
		\item {\bf Convergence:} For each neuron $i$ of population $\alpha$, the law of the multidimensional process $X^{i,N}$ converges towards the law of the solution of the mean-field equation related to population $\alpha$, namely $\bar{X}^{\alpha}$. 
		\item {\bf Propagation of chaos:} For any $k\in \N^*$, and any $k$-uplet $(i_1, \ldots, i_k)$, the law of the process $(X^{i_1,N}_t, \ldots, X^{i_n,N}_t, t\leq T)$ converges towards \olivier{$m_t^{p(i_1)}\otimes\ldots\otimes m_t^{p(i_n)}$}, i.e. the asymptotic processes have the law of the solution of the mean-field equations and are all independent. 
	\end{itemize}
\end{theorem}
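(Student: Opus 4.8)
The plan is to use the coupling method of Sznitman and Tanaka already set up in this appendix via the auxiliary processes $\bar{X}^i$ defined by~\eqref{eq:Coupling}: these solve the mean-field equation driven by the very same Brownian motions $(W^i_t)$, $(B^i_t)$ and started from the same initial data $X^i_0$ as the $i$-th network neuron. By construction the family $(\bar{X}^i)_{i=1\cdots N}$ is independent, and $\bar{X}^i$ has the law $m^{p(i)}$ of the mean-field solution for its population. The whole theorem reduces to the single quantitative estimate
\[
\Exp\Big[\sup_{s\le t}\|X^{i,N}_s-\bar{X}^i_s\|^2\Big]\xrightarrow[N\to\infty]{}0,
\]
uniformly over $i$ within each population. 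Indeed, such $\L^2$ control forces $X^{i,N}$ to converge in law to $\bar{X}^{p(i)}$, and since the coupled processes are genuinely independent, the law of any $k$-tuple $(X^{i_1,N},\ldots,X^{i_k,N})$ converges to the product $m^{p(i_1)}\otimes\cdots\otimes m^{p(i_k)}$, which is exactly the propagation of chaos statement.

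To obtain the estimate I would write $X^{i,N}_t-\bar{X}^i_t$ as the sum of four integral terms, one for each of $f_\alpha$, the interaction $b$, $g_\alpha$, and the noisy interaction $\beta$, mirroring the decomposition $A_t,B_t,C_t,D_t$ of the proof of Theorem~\ref{thm:Principal1}. The drift and diffusion terms are handled exactly as there, by Cauchy--Schwarz (resp.\ the Burkholder--Davis--Gundy inequality) together with assumption~\ref{Assump:LocLipsch}, producing contributions bounded by $\int_0^t \Exp[\sup_{u\le s}\|X^{i,N}_u-\bar{X}^i_u\|^2]\,ds$. The crux is the interaction term, where I would insert the intermediate quantity $b_{\alpha\gamma}(\bar{X}^i_s,\bar{X}^j_s)$ and split
\[
\frac{1}{N_\gamma}\sum_{p(j)=\gamma} b_{\alpha\gamma}(X^{i,N}_s,X^{j,N}_s)-\Exp_Z[b_{\alpha\gamma}(\bar{X}^i_s,Z^\gamma_s)]
\]
into a Lipschitz part and a fluctuation part. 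The Lipschitz part, by~\ref{Assump:LocLipschb}, is bounded by averages of $\|X^{i,N}-\bar{X}^i\|$ and $\|X^{j,N}-\bar{X}^j\|$; exchangeability within each population makes these equal in expectation, keeping the inequality closed. The fluctuation part $\frac{1}{N_\gamma}\sum_{p(j)=\gamma}\big(b_{\alpha\gamma}(\bar{X}^i_s,\bar{X}^j_s)-\Exp_Z[b_{\alpha\gamma}(\bar{X}^i_s,Z^\gamma_s)]\big)$ is, conditionally on $\bar{X}^i$, an empirical mean of centered independent terms, so its squared $\L^2$ norm is $O(1/N_\gamma)$ because all cross terms vanish in expectation, using~\ref{Assump:bBound} and Lemma~\ref{lem:SoluL2} for integrability. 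The same splitting treats $\beta$.

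Collecting these bounds yields an inequality of the form $\Exp[\sup_{s\le t}\|X^{i,N}_s-\bar{X}^i_s\|^2]\le K''\int_0^t \Exp[\sup_{u\le s}\|X^{i,N}_u-\bar{X}^i_u\|^2]\,ds + C/N$, and Gronwall's lemma gives $\Exp[\sup_{s\le T}\|X^{i,N}_s-\bar{X}^i_s\|^2]\le (C/N)e^{K''T}$, hence the $O(1/\sqrt N)$ rate and the two conclusions above.

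The main obstacle, as in Theorem~\ref{thm:Principal1}, is that $f_\alpha$ and $g_\alpha$ are only locally Lipschitz, so the constant $K''$ secretly depends, through $K_U$, on a radius $U$ confining the trajectories. I would deal with this by localization: introduce the stopping time at which $\|X^{i,N}\|$ or $\|\bar{X}^i\|$ first reaches $U$, run the estimate on the stopped processes where~\ref{Assump:LocLipsch} applies with constant $K_U$, and then control the discarded event $\{\tau_U<T\}$. This last control is where assumption~\ref{Assump:MonotoneGrowth} is essential: via the same It\^o/Gronwall computation as in Lemma~\ref{lem:SoluL2} it provides a second-moment bound for $X^{i,N}$ that is \emph{uniform in $N$}, so that $\mathbb{P}(\tau_U<T)$ is small uniformly in $N$ and tends to $0$ as $U\to\infty$. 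Passing to the limit in $U$ removes the truncation at the cost of the explicit rate, leaving convergence, and thus propagation of chaos, in the general locally Lipschitz setting.
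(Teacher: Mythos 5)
Your proposal is correct and follows essentially the same route as the paper's own proof: the Sznitman--Tanaka coupling with the processes of~\eqref{eq:Coupling}, the splitting of each interaction term into a Lipschitz part (controlled by assumption~\ref{Assump:LocLipschb}) and a centered empirical-mean fluctuation whose cross terms vanish in expectation (giving the $O(1/N_\gamma)$ bound via Lemma~\ref{lem:SoluL2} and assumption~\ref{Assump:bBound}), then Gronwall's lemma and a stopping-time/truncation argument for the merely locally Lipschitz $f_\alpha$ and $g_\alpha$. If anything, your observation that the localization step requires a second-moment bound on the network processes $X^{i,N}$ that is \emph{uniform in $N$} (derived from assumption~\ref{Assump:MonotoneGrowth}) is spelled out more explicitly than in the paper, which only sketches that part.
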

\begin{proof}
On our way we also prove that
\begin{equation}\label{eq:Propchaos}
	\max_{i=1\cdots N} N \; \Exp\Big [\sup_{s\leq T} \Vert X^{i,N}_s - \bar{X}^i_s\Vert^2 \Big]< \infty,
\end{equation}
which implies in particular convergence in law of the process $(\olivier{X^{i,N}_t}, t\leq T)$ towards $(\bar{X}^{\alpha}_t, t\leq T)$ solution of the mean-field equations \eqref{eq:MFE}. 

	The proof basically consists in thoroughly analyzing the difference between the two processes as $N$ tends to infinity. The difference is the sum of eight terms (we dropped the index $N$ for the sake of simplicity of notations) denoted $A_t$ through $H_t$:
\olivier{
	\begin{align}
		\nonumber X^i_t-\bar{X}^i_t&=\underbrace{\int_0^t f_{\alpha}(s,X^i_s)-f_{\alpha}(s,\bar{X}^i_s) \, ds}_{A_t} + \underbrace{\int_0^t g_{\alpha}(s,X^i_s)-g_{\alpha}(s,\bar{X}^i_s) \, dW^i_s}_{B_t}\\
		\nonumber &\qquad + \underbrace{\sum_{\gamma=1}^P \int_0^t \frac 1 {N_{\gamma}} \sum_{j=1}^{N_{\gamma}} b_{\alpha\gamma}(X^i_s,X^j_s)-b_{\alpha\gamma}(\bar{X}^i_s,X^j_s)\, ds}_{C_t}\\
		\nonumber &\qquad + \underbrace{\sum_{\gamma=1}^P \int_0^t \frac 1 {N_{\gamma}} \sum_{j=1}^{N_{\gamma}} b_{\alpha\gamma}(\bar{X}^i_s,X^j_s)-b_{\alpha\gamma}(\bar{X}^i_s,\bar{X}^j_s)\, ds}_{D_t}\\
		\nonumber &\qquad + \underbrace{\sum_{\gamma=1}^P\int_0^t \frac 1 {N_{\gamma}} \sum_{j=1}^{N_{\gamma}} b_{\alpha\gamma}(\bar{X}^i_s,\bar{X}^j_s)-\Exp_Z[b_{\alpha\gamma}(\bar{X}^i_s,Z_s^\gamma)]\, ds}_{E_t}\\
		\nonumber &\qquad + \underbrace{\sum_{\gamma=1}^P\int_0^t \frac 1 {N_{\gamma}} \sum_{j=1}^{N_{\gamma}}\beta_{\alpha\gamma}(X^i_s,X^j_s)-\beta_{\alpha\gamma}(\bar{X}^i_s,X^j_s)\, dB^{i\gamma}_s}_{F_t}\\
		\nonumber &\qquad + \underbrace{\sum_{\gamma=1}^P\int_0^t \frac 1 {N_{\gamma}} \sum_{j=1}^{N_{\gamma}}\beta_{\alpha\gamma}(\bar{X}^i_s,X^j_s)-\beta_{\alpha\gamma}(\bar{X}^i_s,\bar{X}^j_s)\, dB^{i\gamma}_s}_{G_t}\\
	\label{eq:EightTerms}		 &\qquad + \underbrace{\sum_{\gamma=1}^P\int_0^t \frac 1 {N_{\gamma}} \sum_{j=1}^{N_{\gamma}}\beta_{\alpha\gamma}(\bar{X}^i_s,\bar{X}^j_s)-\Exp_Z[\beta_{\alpha\gamma}(\bar{X}^i_s,Z_s^\gamma)\, dB^{i\gamma}_s}_{H_t}
	\end{align}
}
	It is important to note that the probability distribution of these terms does not depend on the neuron $i$. We are interested in the limit, as $N$ goes to infinity, of the quantity $\Exp [\sup_{s\leq T} \Vert X^{i,N}_s - \bar{X}^i_s\Vert^2 ]$. We decompose this expression into the sum of the eight terms involved in equation \eqref{eq:EightTerms} using H\"older's inequality and upperbound each term separately. The terms $A_t$ and $B_t$ are treated exactly as in the proof of theorem \ref{thm:Principal1}. We start by assuming that $f$ and $g$ are \olivier{uniformly} globally $K$ Lipschitz-continuous with respect to the second variable. The locally-Lipschitz case is treated in the same manner as done in the proof of theorem~\ref{thm:Principal1} by i) stopping the process at time $\tau_U$, ii) using the Lipschitz-continuity of $f$ and $g$ in the ball of radius $U$ and iii) by a truncation argument and using the almost sure boundedness of the solutions extending the convergence to the locally-Lipschitz case. 

\noindent	
\olivier{As seen previously} we have:
	\begin{align*}
		\Exp[\sup_{s\leq t} \Vert A_s \Vert^2] & \leq K^2\, T\, \int_0^t \Exp[\sup_{u\leq s} \Vert X_u^i-\bar{X}_u^i\Vert^2 ]\,\olivier{ds}\\
		\Exp[\sup_{s\leq t} \Vert B_s \Vert^2] & \leq 4\, K^2\, \int_0^t \Exp[\sup_{u\leq s} \Vert X_u^i-\bar{X}_u^i\Vert^2 ]\, \olivier{ds}
	\end{align*}
\olivier{Now for $C_t$}:
	\begin{align*}
          \|C_s\|^2 &= \left\|\sum_{\gamma=1}^P \int_0^s \frac 1 {N_{\gamma}} \sum_{j=1}^{N_{\gamma}} b_{\alpha\gamma}(X^i_u,X^j_u)-b_{\alpha\gamma}(\bar{X}^i_u,X^j_u)\, du \right\|^2\\
\text{(Cauchy-Schwarz)} &\leq TP \int_0^s \sum_{\gamma=1}^P  \frac 1 {N_{\gamma}} \sum_{j=1}^{N_{\gamma}}\left\|b_{\alpha\gamma}(X^i_u,X^j_u)-b_{\alpha\gamma}(\bar{X}^i_u,X^j_u)\right\|^2\,du\\
\text{(assumption~\ref{Assump:LocLipschb})} &\leq TPL^2 \int_0^s \left\| X^i_u-\bar{X}^i_u \right\|^2\,du
	\end{align*}
Therefore
\begin{align*}
\sup_{s \leq t} \|C_s\|^2 &\leq TPL^2 \int_0^t \left\| X^i_s-\bar{X}^i_s \right\|^2\,ds\\
\Exp\left[\sup_{s \leq t} \|C_s\|^2 \right] &\leq TPL^2 \int_0^t \Exp\left[\sup_{u \leq s}  \left\| X^i_u-\bar{X}^i_u \right\|^2\right]\,ds.
\end{align*}
\olivier{Similarly for $D_t$:}
	\begin{align*}
		\sup_{s\leq t}\Vert D_s \Vert ^2 & \leq T \int_0^t \left\Vert\sum_{\gamma=1}^P \frac 1 N_{\gamma} \sum_{j=1}^{N_{\gamma}} b_{\alpha\gamma}(\bar{X}^i_s,X^j_s)-b_{\alpha\gamma}(\bar{X}^i_s,\bar{X}^j_s) \right\Vert^2 \, ds \\
                \text{(Cauchy-Schwartz)} &\leq PT  \int_0^t \left(\sum_{\gamma=1}^P \frac 1 N_{\gamma}\sum_{j=1}^{N_{\gamma}} \left\| b_{\alpha\gamma}(\bar{X}^i_s,X^j_s)-b_{\alpha\gamma}(\bar{X}^i_s,\bar{X}^j_s) \right\|^2\right)\,ds\\
\text{(assumption~\ref{Assump:LocLipschb})} &\leq PTL^2  \int_0^t \left(\sum_{\gamma=1}^P \frac 1 N_{\gamma}\sum_{j=1}^{N_{\gamma}} \left\|X^j_s-\bar{X}^j_s\right\|^2\right)\,ds
	\end{align*}
Hence we have
	\begin{align*}
\Exp\left[ \sup_{s\leq t}\Vert D_s \Vert ^2\right] & \leq PTL^2 \int_0^t \left(\sum_{\gamma=1}^P \frac 1 N_{\gamma}\sum_{j=1}^{N_{\gamma}} \Exp\left[ \left\|X^j_s-\bar{X}^j_s\right\|^2\right]\right)\,ds\\
&\leq  PTL^2 \int_0^t \left(\sum_{\gamma=1}^P \frac 1 N_{\gamma}\sum_{j=1}^{N_{\gamma}} \Exp\left[ \sup_{u\leq s}  \left\|X^j_u-\bar{X}^j_u\right\|^2\right]\right)\,ds
	\end{align*}
Therefore
\[
\Exp\left[ \sup_{s\leq t}\Vert D_s \Vert ^2\right] \leq P^2 T L^2  \int_0^t  \max_{j=1\cdots N} \Exp\left[ \sup_{u\leq s}  \left\|X^j_u-\bar{X}^j_u\right\|^2\right]\,ds.
\]
	The terms $F_t$ and $G_t$ are treated in the same fashion, but instead of using the Cauchy-Schwartz inequality make use of \olivier{the Burkholder-Davis-Gundy martingale moment inequality. For $F_t$, in detail:}
\begin{align*}
\Exp[\sup_{s\leq t}\Vert F_s \Vert ^2] 
\text{(Cauchy-Schwartz)}& \leq 4P \sum_{\gamma=1}^P \int_0^t \Exp\left[\left\| \frac 1 {N_{\gamma}} \sum_{j=1}^{N_{\gamma}}\beta_{\alpha\gamma}(X^i_s,X^j_s)-\beta_{\alpha\gamma}(\bar{X}^i_s,X^j_s)\,\right\|^2\right]\,ds\\
\text{(Cauchy-Schwartz)}&\leq 4P \sum_{\gamma=1}^P \int_0^t \frac 1 {N_{\gamma}}  \sum_{j=1}^{N_{\gamma}} \Exp\left[\left\| \beta_{\alpha\gamma}(X^i_s,X^j_s)-\beta_{\alpha\gamma}(\bar{X}^i_s,X^j_s)\,\right\|^2\right]\,ds\\
\text{(assumption~\ref{Assump:LocLipschb})} &\leq 4P^2 L^2 \int_0^t \Exp\left[\left\| X^i_s-\bar{X}^i_s\right\|^2\right]\,ds\\
&\leq 4L^2 P^2 \int_0^t  \Exp\left[\sup_{u\leq s}\Vert X^i_u-\bar{X}^i_u \Vert^2\right] \, ds.
\end{align*}
\olivier{Similarly for $G_t$ we obtain:}
	\[
\Exp\left[\sup_{s\leq t}\Vert G_s \Vert ^2\right]\leq 4 L^2 P \int_0^t \max_{j=1\cdots N} \Exp\left[\sup_{0\leq u \leq s}\Vert X^j_u-\bar{X}^j_u \Vert^2\right] \, ds
\]
	We are left \olivier{with the problem of} controlling the terms $E_t$ and \olivier{$H_t$} that involve sums of processes with bounded second moment thanks to proposition~\ref{lem:SoluL2} and assumption~\ref{Assump:bBound}. 
We have:
	\begin{align*}
		\Exp\left[\sup_{s\leq t}\Vert E_s \Vert ^2\right] &=\Exp\left[\sup_{s\leq t} \left\Vert \int_0^s \sum_{\gamma=1}^P \frac 1 {N_{\gamma}} \sum_{j=1}^{N_{\gamma}} b_{\alpha\gamma}(\bar{X}^i_u,\bar{X}^j_u)-\Exp_Z\left[b_{\alpha\gamma}(\bar{X}^i_u,Z_u)\right]\, du\right\Vert^2\right]\\
	\text{(Cauchy-Schwartz)}	& \leq T P \sum_{\gamma=1}^P \int_0^t \Exp\left[\left\Vert \frac 1 {N_{\gamma}} \sum_{j=1}^{N_{\gamma}} b_{\alpha\gamma}(\bar{X}^i_s,\bar{X}^j_s)- \Exp_Z\left[b_{\alpha\gamma}(\bar{X}^i_s,Z_s)\right] \right\Vert^2\right] \, ds\\
	\end{align*}
	and using \olivier{the Burkholder-Davis-Gundy martingale moment inequality}
	\[
		\Exp\left[\sup_{s\leq t}\Vert H_s \Vert ^2\right] \leq 4 P \sum_{\gamma=1}^P \int_0^t \Exp\left[\left\Vert \frac 1 {N_{\gamma}} \sum_{j=1}^{N_{\gamma}} \beta_{\alpha\gamma}(\bar{X}^i_s,\bar{X}^j_s)- \Exp_Z\left[\beta_{\alpha\gamma}(\bar{X}^i_s,Z_s^{\olivier{\gamma}})\right] \right\Vert^2\right] \, ds
\]
		Each of these \olivier{two expressions involves an expectation which we write
	\[
		\Exp\left[\left\Vert \frac 1 N_\gamma \sum_{j=1}^{N_\gamma} \Theta(\bar{X}^i_s,\bar{X}^j_s)-\Exp_Z\left[\Theta(\bar{X}^i_s,Z_s^\gamma)\right]\right\Vert^2\right] ,
	\]
	where $\Theta\in\{b_{\alpha\gamma}, \beta_{\alpha\gamma}\}$, and expand as
\[
\frac 1 {N_\gamma^2} \sum_{j,k=1}^{N_\gamma} \Exp\left[(\Theta(\bar{X}^i_s,\bar{X}^j_s)-\Exp_Z\left[\Theta(\bar{X}^i_s,Z_s^\gamma)\right])^T
		 (\Theta(\bar{X}^i_s,\bar{X}^k_s)-\Exp_Z[\Theta(\bar{X}^i_s,Z_s^\gamma)])\right]
\]
}
\olivier{All the terms of the sum corresponding to indexes $j$ and $k$ such that the three conditions $j\neq i$, $k\neq i$ and $j\neq k$ are satisfied are null, since in that case $\bar{X}^i_t$, $\bar{X}^j_t$, $\bar{X}^k_t$ and $Z_t^\gamma$ are independent and with the same law for $p(j)=p(k)=\gamma$\footnote{Note that $i\neq j$ and $i \neq k$ as soon as $p(i) \neq p(j)=p(k)=\gamma$. In the case where $p(i)=\gamma$, it is easy to check that when $j$ (respectively $k$) is equal to $i$, all terms such that $k \neq j$ (respectively $j \neq k$) are equal to 0.}. In effect, denoting by $m_t^{\gamma}$ the measure of their common law, we have}:
	\begin{multline*}
\Exp\left[(\Theta(\bar{X}^i_s,\bar{X}^j_s)-\Exp_Z[\Theta(\bar{X}^i_s,Z_s^\gamma)])^T  (\Theta(\bar{X}^i_s,\bar{X}^k_s)-\Exp_Z[\Theta(\bar{X}^i_s,Z_s^\gamma)])\right]=\\
\Exp[\Theta(\bar{X}^i_s,\bar{X}^j_s)^T \Theta(\bar{X}^i_s,\bar{X}^k_s)] - \\
\Exp[\Theta(\bar{X}^i_s,\bar{X}^j_s)^T \int \Theta(\bar{X}^i_s,z)\,m_s^\gamma(dz)]-\\
\Exp[\int \Theta(\bar{X}^i_s,z)^T \,m_s^\gamma(dz)  \Theta(\bar{X}^i_s,\bar{X}^k_s) ] + \\
\Exp[\int \Theta(\bar{X}^i_s,z)^T\,m_s^\gamma(dz)\int \Theta(\bar{X}^i_s,z)\,m_s^\gamma(dz)],
\end{multline*}
expanding further \olivier{and renaming the second $z$ variable to $y$ in the last term}  we obtain 
\begin{multline*}
\int\int\int \Theta(x,y)^T\Theta(x,z) \, m_s^\gamma(dx)m_s^\gamma(dy)m_s^\gamma(dz) - \\
\int\int \Theta(x,y)^T \int \Theta(x,z)\,m_s^\gamma(dz)m_s^\gamma(dx)m_s^\gamma(dy)-\\
\int\int \int \Theta(x,z)^T\,m_s^\gamma(dz) \Theta(x,y) \, m_s^\gamma(dx)m_s^\gamma(dy) + \\
\int \int \Theta(x,z)^T m_s^\gamma(dz) \int \Theta(x,y)m_s^\gamma(dy) \, m_s^\gamma(dx)
	\end{multline*}
which is indeed equal to 0 by Fubini theorem. 

Therefore, there are \olivier{no more than $\johnNew{3\,}N_{\gamma}$} non-null terms in the sum, and all the terms have the same value (that depends on $\Theta$), which is bounded \olivier{by lemma~\ref{lem:SoluL2} and assumption~\ref{Assump:bBound}}. We denote by $C\johnNew{/3}$ the supremum of these $2\,P^2$ values for $\Theta\in\{b_{\alpha\gamma}, \beta_{\alpha\gamma}\}$ across all possible pairs of populations, and by \olivier{$N_{min}$ the smallest value of the $N_\gamma,\,\gamma=1\cdots P$}. \olivier{We have shown that}
\[
\Exp\left[\sup_{s\leq t}\Vert E_s \Vert ^2\right] \quad \text{and} \quad \Exp\left[\sup_{s\leq t}\Vert H_s \Vert ^2\right]  \leq \frac{4CTP^2}{N_{min}}
\]	
\olivier{Finally we have:	
\[
\max_{i=1\cdots N}\Exp\left[\sup_{s\leq t} \left\Vert X^i_s-\bar{X}^i_s\right\Vert^2\right] \leq K_1 \int_0^t\max_{j=1\cdots N}\Exp\Big[\sup_{u\leq s}\Vert X^j_u - \bar{X}^j_u\Vert^2\Big]\, du+\frac{K_2}{N_{min}},
\]
for some positive constants $K_1$ and $K_2$.}
	Using Gronwall's inequality, we obtain: \olivier{
	\begin{equation}\label{eq:Gron1}
		\max_{i=1\cdots N}\Exp\left[\sup_{s\leq t} \left\Vert X^i_s-\bar{X}^i_s\right\Vert^2\right] \leq \frac{K_3}{N_{min}}
	\end{equation}
for some positive constant $K_3$, \johnNew{term that tends to zero as $N$ goes to infinity proving the propagation of chaos property. In order to show a convergence with speed $1/\sqrt{N}$ as stated in the theorem, we use the fact}}
\olivier{	
	\[\max_{i=1\cdots N} N \; \Exp\left[\sup_{s\leq T} \left\Vert X^{i,N}_s - \bar{X}^i_s\right\Vert^2 \right]\leq K_3 \frac{N}{N_{min}},
\]
and the righthand side of the inequality is bounded for all $N$s because of the hypothesis $\lim_{N \to \infty} \frac{N_\alpha}{N}=c_\alpha \in (0,1)$ for $\alpha=1\cdots P$.}
	
This ends the proof. 
	
\end{proof}

\bibliographystyle{plain}

\end{document}